\newcommand{\set}[1]{\{#1\}}
\newcommand{\inset}[2]{\{#1 \mid #2\}}
\newcommand{\name}[1]{\textit{#1}}
\newcommand{\size}[2][]{\left| #2^{#1} \right|}
\newcommand{\order}[1]{O\left(#1\right)}
\newtheorem{problem}{Problem}
\newcommand{\argmax}{\mathop{\rm arg~max}\limits}
\newtheorem{lemma}{Lemma}
\newtheorem{theorem}[lemma]{Theorem}
\newtheorem{corollary}[lemma]{Corollary}
\newtheorem{definition}[lemma]{Definition}
\newcommand{\sig}[1]{\mathcal{#1}}
\newcommand{\leid}[1]{<_{ID}}
\newcommand{\geid}[1]{>_{ID}}
\newcommand{\List}[1]{\sig List(#1)}
\title{Efficient Enumeration of Induced Matchings in a Graph without Cycles with Length Four}
\author[1]{Kazuhiro Kurita}
\author[2]{Kunihiro Wasa}
\author[2]{Takeaki Uno}
\author[1]{Hiroki Arimura}
\affil[1]{IST, Hokkaido University, Sapporo, Japan\\
  \texttt{\{k-kurita, arim\}@ist.hokudai.ac.jp}}
\affil[2]{National Institute of Informatics, Tokyo, Japan\\
  \texttt{\{wasa, uno\}@nii.ac.jp}}
\begin{document}
\maketitle


\begin{abstract}
  We address the induced matching enumeration problem. 
An edge set $M$ is an induced matching of a graph $G =(V,E)$. 
The enumeration of matchings are widely studied in literature,
but the induced matching has not been paid much attention.
A straightforward algorithm takes $\order{\size{V}}$ time 
for each solution, that is coming from the time to generate 
a subproblem.
We investigated local structures that enables us to generate 
subproblems in short time,
and proved that the time complexity 
will be $\order{1}$ if the input graph is $C_4$-free.
A $C_4$-free graph is a graph any whose subgraph 
is not a cycle of length four.
Finally, we show the fixed parameter tractability of
counting induced matchings for graphs with
bounded tree-width and planar graphs. 

\end{abstract}
\section{Introduction}
\label{subsec:enum}


An \name{enumeration problem} is to output all solutions
to a given problem without duplication.
Enumeration problems and their algorithms
have been continuously studied in literature,
and recently the studies have got more active
from the expansion of 
the applications, such as data mining, network analysis and
computational proofs in mathematics. 
According to the increase of the activity, several problems that solved
in the past were revisited such as paths, cycles, and trees,
and several structures began to be studied~
\cite{
  Edmonds:CJM:1965,
  Ferreira:Grossi:Rizzi:ESA:2011,
  Tarjan:Read:1975,
  Uno:WADS:15
}.
In this paper, we also revisit an old fashioned problem of enumerating 
matchings, but consider its induced version that has not been paid attention much. 

The efficiency of enumeration algorithm is often evaluated by 
\name{output-polynomiality}~%
\cite{Uno:Encyclopedia:2016}.
An algorithm is said to be \name{output-polynomial time} if its total
running time is bounded by $poly(N, M)$, where $N$ is the input size,
$M$ is the output size and $poly(N,M)$ is  a polynomial function on
$N$ and $M$. 
The \name{delay} of an enumeration algorithm is the maximum computation time between 
the output solution and the next solution and time after the
last solution until the termination of the algorithm.
An algorithm is \name{polynomial delay} if its delay is bounded
by polynomial in $N$.
In particular, we say an algorithm is said to runs in $O(poly(N))$ time
for each if the algorithm runs in time linear in $Mpoly(N)$.



In this paper, 
we consider the enumeration problem for induced matchings in the given graph (abbreviated as \name{EIM}).
An induced matching of a graph is an edge set such that the endpoints of any two edges in the set are not adjacent to each other, i.e., the graph induced by the endpoints of the edge set forms a matching.
Uno~\cite{Uno:WADS:15} showed that matchings can be enumerated in a general graph in constant amortized time 
by using amortization technique, called \name{Push out},  
to distribute the cost of each iteration to many descendants. 
We can also consider a straightforward binary partition algorithm
(branch and bound algorithm) for induced matching enumeration
that runs in $\order{\Delta^2}$ time per solution,  
where $\Delta$ is the maximum degree in an input graph. 
The structure of the recursion is quite different from the ordinal 
matching enumeration, thus 
a direct application of the technique described in ~\cite{Uno:WADS:15}
does not work.
The push out technique and the other amortization require some conditions,
but it is not easy to develop algorithms satisfying the conditions.
The existence of more efficient algorithms is still open.


In this kind of low-degree polynomial time enumeration algorithm, 
the most time consuming part is often typically the generation of 
the child problems.
Particularly, we spend much time when the local structure of the 
problem is complicated around the pivot vertex or edge, that is 
to be fixed or to be removed from the problem.
If the structure is simple, the child problem generation can be done
in short time.
For example, if the graph is a tree, there is no cycle around a vertex,
thus we do not have to think about unification of multiple edges when
we shrink an edge.
In this paper, we consider $C_4$-free graphs, and propose 
an algorithm runs in constant time for each, 
where $C_4$-free graphs are graphs that have no cycles of length equal to four.
In an ordinal binary partition algorithm for induced matching enumeration,
we choose an edge $e$ and enumerate induced matchings including $e$.
This is done by enumerating all induced matching included the graph obtained
by removing $e$, edges adjacent to $e$, and edges adjacent to
edges adjacent to $e$.
This takes $\order{\Delta^2}$ time and this is the bottle neck of 
the algorithm.
We investigated the $C_4$-free graphs, and could find that the 
structural property of $C_4$-free graphs makes the process of 
generating the subproblems light.
We introduced new way of branching the problem, 
so that in each iteration we select a vertex $v$ with the maximum degree,
and partition the problem into $\Delta$ subproblems. 
The property together with this branching method lighten the 
computation of subproblems, and the computation time of an iteration
is bounded by the number of its descendants.
This enables us to use amortization analysis, and can obtain the result.

The organization of the paper is as follows.
We show in Sec.~\ref{sec:c4free} that induced matchings are
enumerable in constant amortized time per solution for $C_4$-free graphs.
In Sec.~\ref{secmso}, we show that counting all induced matchings can be solvable in FPT linear time for graphs with bounded degree, bounded tree-width, and planar graphs by using the results of~\cite{frick:TCSyst:2004generalized,arnborg:seese:JOA:1991easy}.
These results seem to show for the first time the complexities of counting and enumeration problems for induced matchings.
\begin{table}[t]
  \caption{Summary of our results and related work.
    This table
    shows the complexity of path, cycle, matching, and
    induced matching for each problems. 
    In each cell, we list the complexities of ordinary problem first and parameterized problems next, 
    where the parameter $k$ is included in input in all non-parameterized problems (*),
    the problem is in FPT for
    graphs with bounded degree, bounded tree-width, girth at least $6$, 
    line graphs, and planar graphs 
    ($^{**}$), 
    the counting problem for $k$-induced matchings is FPT linear for graphs with bounded degree, bounded tree-width, and planar graphs parameterized with an implicit parameter determined by the class ($^{***}$),
    the complexity of an enumeration problem shows its amortized complexity per solution. 
  }   
  \label{tab:related}
  \centering
  \begin{tabular}{lccccccc}
    \hline
    & decision problem & counting problem & enumeration problem \\ \hline \hline
    $k$-path & NP-complete~\cite{Garey:Johnson:2002}$^*$ & \#P-complete~\cite{Valiant:1979:siam}$^*$ & Polynomial~\cite{Read:1975}$^*$ \\
    & FPT~\cite{Fellows:1989} & \#W[1]-complete~\cite{flum2006parameterized} &
    \\ \hline
    $k$-cycle & NP-complete~\cite{Garey:Johnson:2002}$^*$ & \#P-complete~\cite{Valiant:1979:siam}$^*$ & Polynomial~\cite{Ferreira:Grossi:Rizzi:ESA:2011}$^*$ \\
    & FPT~\cite{Downey:2012} & \#W[1]-complete~\cite{flum2006parameterized} &
    \\ \hline
    $k$-matching & P~\cite{Edmonds:CJM:1965}$^*$ & \#P-complete~\cite{Valiant:1979}$^*$ & $\order{1}$~\cite{Uno:WADS:15}$^*$
    \\& & \#W[1]-complete~\cite{Curticapean:2014}
    &
    \\\hline
    \shortstack{
      $k$-induced
      \\matching 
    }
    &
    \raisebox{0.5em}{
      NP-complete~\cite{Stockmeyer:IPL:82}$^*$
    }
    & 
    \raisebox{0.5em}{ Unknown }
    & \strut\raisebox{-0.75em}{
      \shortstack{
      $\order{1}$ for boudned-
      \\degree [Ours]$^*$
    }}
    \\
    &
    \raisebox{0.5em}{
      \shortstack{
        W[1]-complete~\cite{Moser:2009}
        \\FPT$^{**}$~\cite{Moser:2009} 
    }}
    & \shortstack{
      FPT linear$^{***}$ for
      \\bounded tree-width
      \\and planar
            [Ours]} 
    &
    \raisebox{1.0em}{
      \shortstack{
        $\order{1}$ for $C_4$-free [Ours]$^*$
    } }
    \\ \hline
  \end{tabular}
\end{table}

\subsection{Related works}

In Table~\ref{tab:related}, we show the summary of related work and our results
on decision, counting, and enumeration problems for small subgraphs in a graph.
The decision problem for matching (maximum matching, MM) has been extensively
studied for more than 50 years~\cite{Edmonds:CJM:1965,Hopcroft:Karp:SICOMP:1973}.
The decision problem can be solved in polynomial time for matchings 
(See~\cite{Edmonds:CJM:1965,Hopcroft:Karp:SICOMP:1973}), 
while the problem (MIM) for induced matchings is known to be NP-complete~\cite{Stockmeyer:IPL:82}. 
The latter is still NP-hard for 
graphs with bounded degree, 
bipartite graphs, 
$C_4$-free, 
line graphs,
and 
planar graphs
~\cite{Cameron:DAM:1989,Brandstadt:Hoang:2008,Lozin:2002}.
MIM can be solved in polynomial time
for restricted graph classes: 
interval, chordal, weakly chordal,  circular-arc,  trapezoid, and
co-comparability graphs%
~\cite{Brandstadt:Hoang:2008,Golumbic:Lewenstein:DAM:2000}. 

In general, counting of matchings is computational hard.
In particular, the counting of matchings is \#P-complete (Valiant~\cite{Valiant:1979:siam}), while it is  \#W[1]-complete parameterized with the size $k$ of a matching in a bipartite graph (Curticapean and Marx~\cite{Curticapean:2014}).
For enumeration, Uno~\cite{Uno:WADS:15} showed that matchings can be enumerated in constant amortized time per solution.
To the best of our knowledge, there are almost no known results for counting and enumeration of induced matchings.

In this paper, we study the complexity of enumeration problems for graphs without cycles of length~$4$. Since any graph with girth at least $5$ has no $C_4$, 
our algorithm also works for such graphs with large girth, where the girth of a graph is the length of a shortest cycle in the graph. 
Recently, there are a few results showing an interesting interplay between induced matchings and graphs with large girth as follows. 
Raman and Saurabh~\cite{Raman:2008} demonstrated that several fixed parameter intractable problems, such as dominating sets, fall in FPT when input graphs have large girth.
As a most closely related result, Moser and Sikdar~\cite{Moser:2009}  show that 
the $W[1]$-hard decision problem for induced matchings becomes in FPT for graphs with bounded degree and with girth $6$ or more. 

 \section{Preliminary}
\label{sec:prelim}

In this paper, for disjoint set $A$ and $B$,
we define disjoint union of $A$ and $B$ by $A \sqcup B$. 
If it is clearly understood,
we denote $V(G) = V$ and $E(G) = E$. 

Let $G = (V, E)$ be an undirected graph with vertex set $V$ and edge set $E \subseteq V\times V$.
In this paper, 
we assume that graphs have no self-loops or parallel edges.
An edge $e$ with vertex $u$ and $v$ is denoted
by $e = \set{u, v}$. 
Two vertices $u, v \in V$ are \name{adjacent} if there is an edge $\set{u, v}$ in $E$.
We say $u$ is a neighbor of $v$ if $u \in N_G(v)$. 
Similarly, two edges $e, f \in E$ are \name{adjacent} 
if $e$ and $f$ share the same vertex. 
Let $N_G(u)$ be the set of neighbors of $u$ in $G$ and   
$N_G[u] = N_G(u) \cup \set{u}$ be the set of closed neighbor of $u$. 
Let $d_G(u) = \size{N_G(u)}$  be the degree of $u$ in $G$. 
$\Delta(G) = \max_{x \in V} d(x)$ denotes the maximum degree of $G$. 
For any vertex subset $V' \subseteq V$, 
we say $G[V'] = (V', E[V'])$ an \name{induced subgraph},
where $E[V'] = \inset{\set{u, v} \in E(G)}{u, v \in V'}$.
Since $G[V']$ is uniquely determined by $V'$, 
we identify $V'$ with $G[V']$. 
We denote by $G \setminus \set{e} = (V, E \setminus \set{e})$ and 
$G\setminus \set{v} = G[V\setminus\set{v}]$.
For simplicity, we denote by $v \in G$ and $e \in G$ if $v \in V$ and $e \in E$, 
respectively.

An alternating sequence $\pi = (v_1, e_1, v_2, \dots, v_{k-1}, e_k, v_k)$ of 
vertices and edges is a \name{path} if 
each edge and vertex in $\pi$ appears at most once. 
We also call $\pi$ an \name{$v_0$-$v_n$ path}. 
Then, 
An alternating sequence $C = (v_1, e_1, v_2, \dots, v_{k-1}, e_k, v_k)$ of 
vertices and edges is a \name{cycle} if 
$(v_1, e_1, v_2, \dots, v_{k-1})$ is a $v_0$-$v_{k-1}$ path and  $v_k = v_1$. 
The length of a path and a cycle is defined by the number of its edges. 
Let $G$ be $C_k$-free graph if $G$ has no cycle with length $k$ as a subgraph. 
For example, if $G$ has no $4$-cycles then $G$ is a $C_4$-free graph. 

For any vertices $u, v \in V$, 
the distance between $u$ and $v$ is defined 
by the length of a shortest $u$-$v$ path. 
The distance between edge $e$ and $f$ is
defined by the length of a shortest path, i.e., 
$dist_G(e, f) = \min\inset{dist_G(u, v)}{u \in e, v \in f}$. 
Similarly, the distance between vertex $v$ and edge $e$ is
defined by $dist_G(v, e) = \min\inset{dist_G(u, v)}{u \in e}$. 

A \name{matching} in a graph $G = (V, E)$ is an edge subset $M \subseteq E$ 
of $G$, if any pair of edges in $M$ does not share their endpoints. 
An \name{induced matching} in a graph $G = (V, E)$ is
a matching $M \subseteq E$ whose vertex set induces $M$ itself.
In other words, an edge set $M$ is an induced matching if and only if 
$dist_G(e, e') \ge 2$ for any distinct edges $e, e' \in M$, i.e.,
there is no edge $f$ in $G$ connecting $e$ and $e'$. 
In figure~\ref{fig:im}, we show an example of an induced matching. 

\begin{figure}[t]
  \centering
  \includegraphics[width=100mm]{./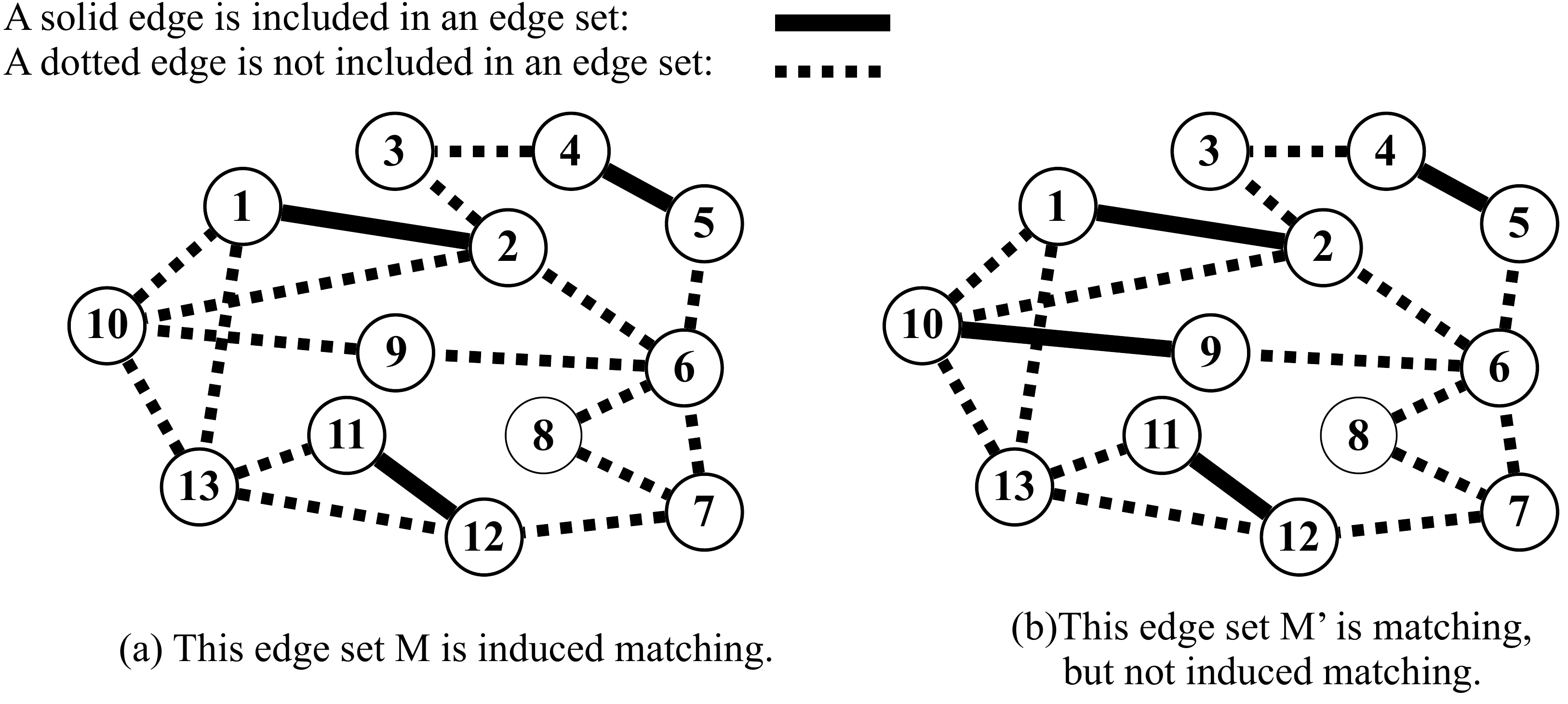}
  \caption{An edge set $M$ in (a) is an induced matching, because all distinct
    two edges $e, f \in M$ hold $dist_G(e, f) \ge 2$.
    An edge set $M'$ in (b) is not an induced matching,
    because edge $e = \set{9, 10}$ and $f = \set{1, 2}$ do not hold
    $dist_G(e, f) \ge 2$. }
  \label{fig:im}
\end{figure}

Now, 
we define the induced matching enumeration problem as follows; 
\begin{problem}[The induced matching enumeration problem]
  Enumerate all induced matching in a given graph $G$ without duplicates. 
\end{problem}

\section{Enumeration of Induced Matchings for $C_4$-free Graphs}
\label{sec:c4free}

\subsection{Binary partition}
\label{sec:BinaryPartition}

\begin{figure}[t]
    \centering
    \includegraphics[width=70mm]{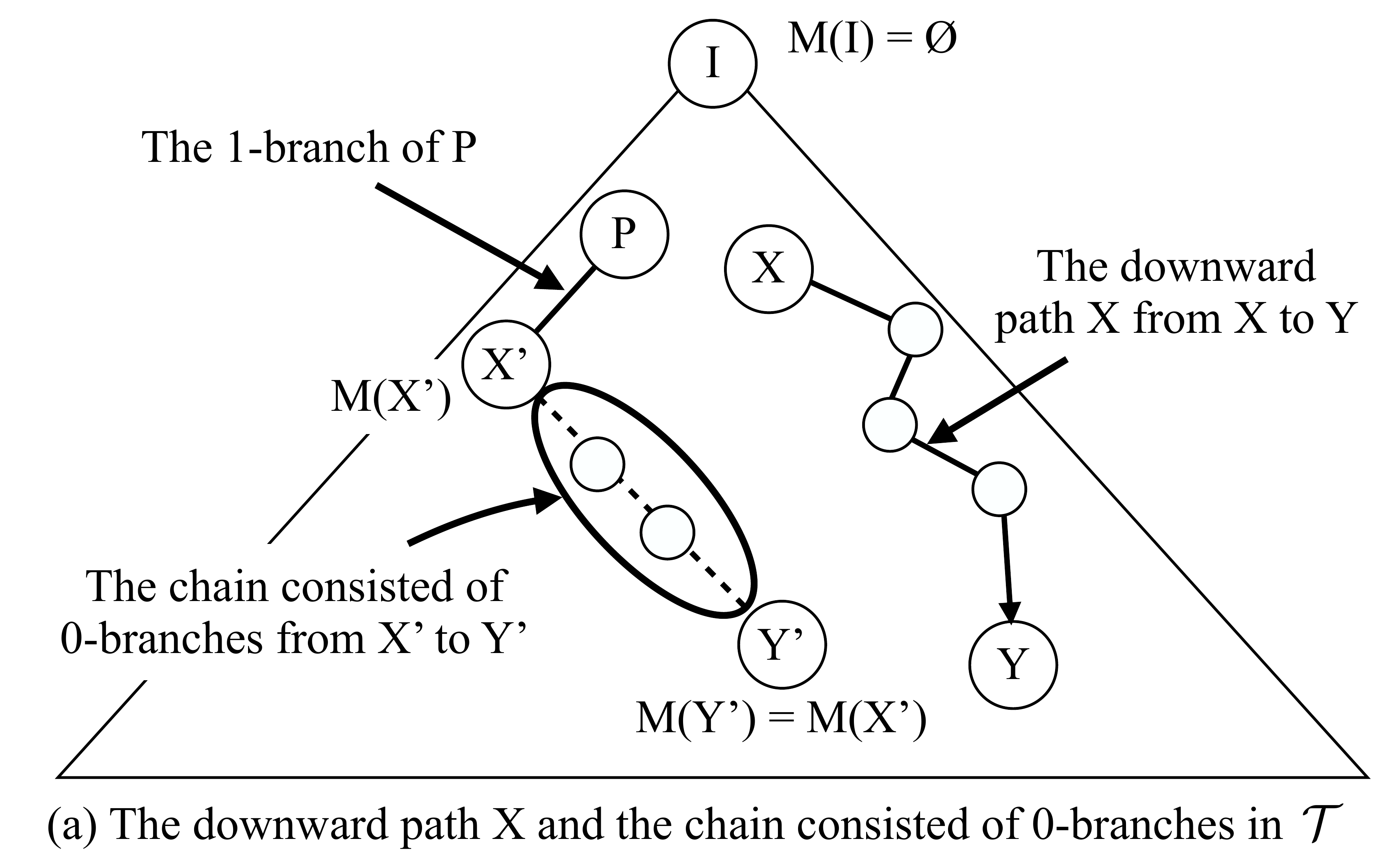}
    \caption{The solid line from $X$ to $Y$ represents
      a downward path.
      Consequently, $X \preceq Y$ holds.
      The dotted line represents a chain consisting of $0$-branches.
      $X'$ is a top of the chain since $X'$ is a $1$-child of $P$.
      $Y'$ is a bottom of the chain since $Y'$ does not have a child. 
    }
    \label{fig:chain}
\end{figure}

A \name{binary partition method} is an algorithm
which enumerates all solutions
by  dividing a search space into two disjoint search spaces recursively.
We call a dividing step an \name{iteration}. 
Let $G$, $\sig M(G)$, and $I$ be 
an input graph, the set of solutions for $G$, and an iteration of the algorithm, respectively. 
Let $\sig S(I)$ be the set of solutions included in a search space of $I$. 
In the initial state, $\sig S(I) = \sig M(G)$ holds. 
At the initial iteration, the algorithm selects an edge $e$ in $G$ such that 
$e$ satisfies $\size{\sig S_0(I)} \ge 1$ and 
$\size{\sig S_1(I)} \ge 1$ where $S_0(I) = \inset{M \in S(I)}{e \notin M}$ and
$S_1(I) = \inset{M \in S(I)}{e \in M}$. 
Note that $\sig S(I) = \sig S_0(I) \sqcup \sig S_1(I)$ holds.
$\sig A$ recursively applies this procedure until all edges are selected. 

Next, 
we introduce a \name{binary enumeration tree} $\sig T(\sig I) = \sig T = (\sig V, \sig E)$, 
for an input $\sig I$.
Here, $\sig V$ is the set of iterations of $\sig A$ for $\sig I$ and  
$\sig E$ is a subset of $\sig V \times \sig V$. 
For any iteration $X$, 
we define the edge set $E_X$ as follows: 
$E_X = \bigcap_{M \in \sig S(X)} M$. 
For any iterations $X$ and $Y$,
$Y$ is a \name{child} of $X$
if $E_Y \subset E_X$ and $\size{E_X \setminus E_Y} = 1$ hold. 
We call $X$ is the \name{parent} of $Y$. 
For any iteration $X$, 
we define iterations $X.1$ and $X.0$ with
the set of solutions $\sig S_1(X)$ and $\sig S_0(X)$, respectively.
That is, $X.1$ and $X.0$ are the children of $X$. 
In particular, 
$X.1$ is the \name{$1$-child of $X$}, and $X.0$ is the \name{$0$-child of $X$}.
In addition, 
we call edges $e = \set{X, X.0}$ and $f = \set{X, X.1}$ in $\sig E$ 
a \name{$0$-branch} and a \name{$1$-branch}, respectively. 
In the binary enumeration tree $\sig T$ for $\sig I$,
we call  an iteration with children an \name{internal iteration}, and
an iteration without children a \name{leaf iteration}.
Moreover, an iteration $X$ is the \name{root iteration} if
there is no iteration that has $X$ as a child, that is, 
$X$ is the first iteration called by $\sig A$.
For the simplicity, 
we call the binary enumeration tree the \name{enumeration tree}. 

For any iterations $X$ and $Y$, 
a \name{downward path} (or an \name{upward path})
from $X$ to $Y$ in $\sig T$ is 
a sequence of iterations
$\sig L = (X = X_0, \dots, X_k = Y)$, 
where for each $i = 1, \dots, k$, 
$X_k$ is a child (or the parent) of $X_{k - 1}$. 
The length of $\sig L$ is defined as $k-1$. 
For any iterations $X$ and $Y$,
if there is a downward path $\sig L$ from $X$ to $Y$, 
then $X$ is an \name{ancestor} of $Y$ and $Y$ is a \name{descendant} of $X$.
$X \preceq Y$ if $X$ is an ancestor of $Y$. 
For any iteration $Y$,
the set $\inset{X}{X \preceq Y}$ is a \name{chain}\cite{Diestel:10} of $Y$. 
When iterations $X$ and $Y$ belong to a same chain,
$X$ and $Y$ are \name{comparable}.
In a chain $\sig L$, 
we call an iteration $X$ is the \name{minimum element} in $\sig L$ and the \name{maximum element} in $\sig L$ 
if $X$ is the head of $\sig L$ and  is the tail of $\sig L$, respectively. 
In Figure~\ref{fig:chain},  
We show an example of a downward path and a chain in the enumeration tree $\sig T$.

\subsection{Algorithm for $C_4$-free graphs}

\begin{algorithm}[t]
  \caption{The algorithm enumerating all induced
    matchings in $C_4$-free graphs in constant amortized time.}
  \label{algo:cat:detail}
  \Procedure{\EnumIM($G = (V, E)$)}{
    \RecEIM$(\emptyset, G)$ \;
  }
  \Procedure{\RecEIM$(M, G)$}{
    \If{$E(G) = \emptyset$}{
      Output $M$\;
      \Return \;
    }
    The vertex $v$ has the maximum degree in $G$\;\label{step:cat:find}
    \RecEIM$(M, G\setminus \set{v})$\tcc*{$0$-child  \label{step:cat:Child0}}
    $G' \gets G\setminus N[v]$\;                  \label{step:cat:R1}
    \For{$e \in D_v(0)$}{                            \label{step:cat:for:select:edge}
      \RecEIM$(M \cup \set{e}, G' \setminus Sect_e(2))$\tcc*{$i$-child}  \label{step:cat:Child1}
    }
    Restore edges in $D_v(0) \cup D_v(1)$\; \label{step:cat:S2}
    \Return \;
  }
\end{algorithm}

In what follows, 
suppose that an input graph is a $C_4$-free graph. 
We show the algorithm \EnumIM in \name{Algorithm}~\ref{algo:cat:detail}. 
For any iteration $X$ of \RecEIM, 
let $M(X)$, $G(X)$ and $S(X)$ be 
the current induced matching as solution, a graph, and 
the set of vertices that are selected in the ancestor iterations of $X$ as the pivot used to partition the problem, respectively.
\RecEIM outputs $M(X)$ as a solution if 
no edge can be added to $M(X)$ from $G(X)$ and quits $X$. 
\RecEIM skips this step and execute the following steps 
if there is an edge that can be added to $M(X)$. 

An edge $e$ in $G$ is a \name{safe edge} if $e$ satisfies the following condition: 
For any edge $f \in M(X)$, $dist(e, f) \ge 2$. 
$e$ is a \name{conflict edge} otherwise.
Let $G(X) = G[V\setminus (N(V(M(X))) \cup S(X))]$. 
That is, $G(X)$ is a graph removed all conflicting edges with $M(X)$ and $S(X)$ from $G$.
\RecEIM firstly selects the vertex $v$ with the maximum degree. 
We call such a vertex $v$ a \name{pivot} on $X$. 
Next, 
\RecEIM divides a solution set $\sig S$ into
$d(v) + 1$ disjoint sets $\sig S_0, \dots, \sig S_{\size{d(v)}}$.
Let $e_i$ be the $i$th edge incident to $v$ for $i \in \set{1, \dots, d_{G(X)}(v)}$. 
$\sig S_i \subseteq \sig S$ is the set of solutions including $e_i$,
and $\sig S_0 = \sig S \setminus \bigcup_{i = 1, \dots, d_{G(X)}(v)} \sig S_i$ is 
the set of solutions not including edges adjacent to $v$. 
$X.i$ denotes 
the $i$th child iteration of $X$ that receives $M(M(X) \cup \set{e_i})$. 
Also, we call $X.0$ \name{type-$0$ child} and
$X.i$ \name{type-$1$ child} for $i \neq 0$. 
We call the branch from $X$ to the type-$0$ child the \name{$0$-branch} and
a branch from $X$ to type-$1$ child a \name{$1$-branch}.
Let $\sig T$ be an enumeration tree made by \EnumIM. 
Note that $\sig T$ is not a binary tree but a multi-way tree. 
In Figure~\ref{fig:enumtree}, we show an example of $\sig T$. 
A proof of the next lemma is shown in Appendix~\ref{app}. 
\begin{lemma}
  \label{lem:del2:mono}
  Let $X$ and $Y$ be any iterations.
  If $X \preceq Y$ 
  then $M(X) \subseteq M(Y)$ and $E(G(X)) \supseteq E(G(Y))$ hold. 
\end{lemma}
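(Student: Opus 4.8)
The plan is to prove the statement by induction on the length of the downward path from $X$ to $Y$, after first isolating what happens across one parent--child edge. Since $\subseteq$ and $\supseteq$ are transitive, it suffices to establish, for every iteration $X$ with pivot $v$ and every child $X.j$ of $X$ (where $j$ ranges over $0$ and over $1,\dots,d_{G(X)}(v)$), the two inclusions $M(X)\subseteq M(X.j)$ and $S(X)\subseteq S(X.j)$; iterating these along the path from $X$ to $Y$ then yields $M(X)\subseteq M(Y)$ and $S(X)\subseteq S(Y)$ whenever $X\preceq Y$. Both inclusions are read directly off \RecEIM in Algorithm~\ref{algo:cat:detail}. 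For the type-$0$ child the recursive call is \RecEIM$(M,\,G\setminus\set{v})$, so $M(X.0)=M(X)$ and $S(X.0)=S(X)\cup\set{v}$. For a type-$1$ child $X.i$ the call is \RecEIM$(M\cup\set{e_i},\,G'\setminus Sect_{e_i}(2))$ with $G'=G(X)\setminus N[v]$ (line~\ref{step:cat:R1}), so $M(X.i)=M(X)\cup\set{e_i}\supseteq M(X)$ and again $S(X.i)=S(X)\cup\set{v}$. In all cases $M$ does not shrink and $S$ only grows (by the pivot), which gives the two base inclusions.

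It then remains to deduce $E(G(X))\supseteq E(G(Y))$ from $M(X)\subseteq M(Y)$ and $S(X)\subseteq S(Y)$. For this I would use the closed form $G(X)=G[\,V\setminus(N(V(M(X)))\cup S(X))\,]$ recorded just before the algorithm. From $M(X)\subseteq M(Y)$ we get $V(M(X))\subseteq V(M(Y))$, and the neighbourhood operator is monotone under inclusion, so $N(V(M(X)))\subseteq N(V(M(Y)))$; combining with $S(X)\subseteq S(Y)$ gives $N(V(M(X)))\cup S(X)\subseteq N(V(M(Y)))\cup S(Y)$. Hence the vertex set of $G(Y)$ is contained in that of $G(X)$, and since an induced subgraph on a smaller vertex set keeps only a subset of the edges, $E(G(Y))\subseteq E(G(X))$, as claimed.

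I do not expect a real obstacle here: the statement is essentially a bookkeeping invariant of the recursion. The only point that needs a little care is the case split between the type-$0$ and type-$1$ children, and---if one insists on reading $G(X)$ off the algorithm's mutable graph rather than from the closed form---checking that the only operations performed before each recursive call are vertex/edge \emph{deletions} (the $0$-child step, line~\ref{step:cat:R1}, and the removal of $Sect_{e_i}(2)$), whereas the ``Restore edges in $D_v(0)\cup D_v(1)$'' step runs only after \emph{all} children of $X$ have returned and therefore never affects the graph seen by a descendant of $X$. Taking the closed form $G(X)=G[\,V\setminus(N(V(M(X)))\cup S(X))\,]$ as the working definition of $G(X)$ sidesteps this issue entirely, which is the route I would take.
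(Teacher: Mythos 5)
Your proposal is correct and follows essentially the same route as the paper, whose own proof is just the terse observation that along the downward path the algorithm only adds edges to $M$ and only deletes edges from $G$; your version supplies the parent--child case split and induction scaffolding that the paper leaves implicit. The only cosmetic difference is that you justify $E(G(Y))\subseteq E(G(X))$ via the closed form $G(X)=G[V\setminus(N(V(M(X)))\cup S(X))]$ and monotonicity of the removed vertex set, which is an equivalent (and slightly more explicit) way of saying the same thing.
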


\begin{figure}[t]
  \centering
  \includegraphics[width=0.8\textwidth]{./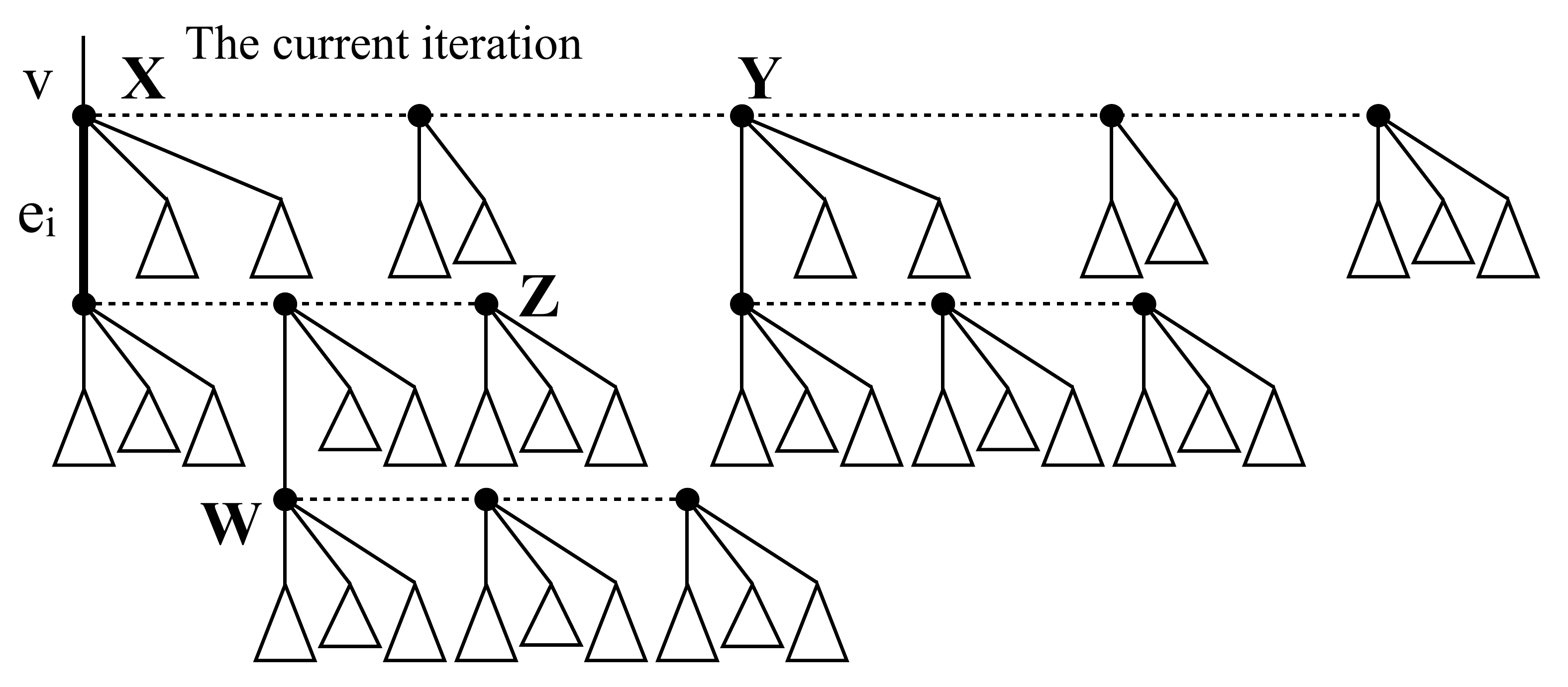}
  \caption{}
  \label{fig:enumtree}
\end{figure}

\subsection{Correctness of the algorithm}
For any iteration $X$ on $\sig T$,
it is redundant to process safe and conflict edges independently among siblings of the same parent. 
To avoid this,
we simultaneously process these edges by
using following concentric structures around the pivot on $X$. 
Let $k \in \set{0, 1, 2}$ and $\ell \in \set{0, 1, 2}$. 
We define the concentric structure $D_v(k, \ell)$ as follows: 
\begin{equation}
  D_v(k, \ell) = \inset{\set{x, y} \in E(G(X))}{dist_{G(X)}(x, v) = k, dist_{G(X)}(y, v) = \ell}. 
\end{equation}
$D_v(k)$ denotes $D_v(k, k) \sqcup D_v(k, k + 1)$. 
We call an edge $e \in D_v(k, \ell)$ a \name{$k$-$\ell$ edge of $v$}  
and  an edge $e \in D_v(k)$ a \name{$k$-$*$ edge of $v$}. 
Figure~\ref{fig:partition} (a) shows
an example of $0$-$1$ edges, $1$-$*$ edges, and $2$-$*$ edges.
The distance between two vertices is defined by the length of
shortest path in not $G$ but $G(X)$. 
The next lemma implies that $M(X) \cup \set{e}$ is an induced matching in $G$ 
for any edge $e \in D_v(0)$. 
A proof of the next lemma is shown in Appendix~\ref{app}. 
\begin{lemma}
  \label{lem:c4free:dist}
  Let $G$ be an input graph, 
  $X$ be any iteration in the algorithm \EnumIM in Algorithm~\ref{algo:cat:detail}, and $e$ be any edge in $D_v(0)$. 
  Then, $M(X) \cup \set{e}$ is an induced matching in $G$. 
\end{lemma}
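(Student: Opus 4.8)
The plan is to exploit the fact that, by construction, the working graph $G(X)=G[V\setminus(N_G(V(M(X)))\cup S(X))]$ has already discarded every vertex that is in, or adjacent in $G$ to, an endpoint of an edge of $M(X)$; consequently an edge that still survives in $G(X)$ cannot conflict with $M(X)$. First I would record the elementary observation that $D_v(0)=D_v(0,1)$ (since $D_v(0,0)=\emptyset$ for a graph without self-loops) and that $D_v(0,1)$ is exactly the set of edges of $G(X)$ incident to the pivot $v$. Hence for $e=\set{v,w}\in D_v(0)$ we have $e\in E(G(X))\subseteq E(G)$ and $v,w\in V(G(X))=V\setminus(N_G(V(M(X)))\cup S(X))$; in particular neither $v$ nor $w$ lies in $N_G(V(M(X)))$.

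The heart of the argument is the following claim, which I would isolate to avoid any apparent circularity: if $M$ is an induced matching of $G$ and $g$ is an edge of $G$ both of whose endpoints avoid $N_G(V(M))$, then $M\cup\set{g}$ is an induced matching of $G$. To prove it, note that since $M$ is a matching every vertex of $V(M)$ is a $G$-neighbour of its partner, so $V(M)\subseteq N_G(V(M))$; thus both endpoints of $g$ avoid $V(M)$, which shows $M\cup\set{g}$ is a matching and $g\notin M$. Moreover no endpoint of $g$ is $G$-adjacent to any vertex of $V(M)$, so $dist_G(g,f)\ge 2$ for every $f\in M$; combined with $dist_G(f,f')\ge 2$ for distinct $f,f'\in M$ (the hypothesis that $M$ is induced), every pair of distinct edges of $M\cup\set{g}$ is at distance at least $2$ in $G$, and by the distance characterisation of induced matchings in Section~\ref{sec:prelim} the set $M\cup\set{g}$ is an induced matching.

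It remains to know that $M(X)$ is itself an induced matching, which I would prove by induction on the depth of $X$ in the enumeration tree $\sig T$: the root has $M(X)=\emptyset$; a type-$0$ child inherits its parent's matching; and a type-$1$ child $X'.i$ has $M(X'.i)=M(X')\cup\set{e_i}$ with $e_i\in D_{v'}(0)$ at $X'$, so by the observation above $e_i$ has both endpoints outside $N_G(V(M(X')))$ and the core claim (with $M=M(X')$) closes the step. Applying the core claim once more with $M=M(X)$ and $g=e$ then gives the lemma.

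The argument is elementary; the only point needing a little care is the bookkeeping that along any downward path the matching grows only by edges taken from the current $D_v(0)$ and that the graph the algorithm carries at $X$ is exactly $G[V\setminus(N_G(V(M(X)))\cup S(X))]$ — this is what makes $V(G(X))$ disjoint from $N_G(V(M(X)))$ — so I expect this routine consistency check to be the main, and only mild, obstacle. It is worth stressing that $C_4$-freeness of $G$ is not used anywhere in this proof: the statement holds for every graph, and the $C_4$-free hypothesis enters only later, in the running-time analysis of how the concentric structures $D_v(k,\ell)$ are updated from a parent iteration to a child.
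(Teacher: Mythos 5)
Your proof is correct, and it takes a noticeably different route from the paper's. The paper argues by contradiction: assuming some $f \in M(X)$ conflicts with $e$, it uses the monotonicity of $M(\cdot)$ and $E(G(\cdot))$ along downward paths (Lemma~\ref{lem:del2:mono}) to locate the ancestor iteration $Y$ at which $f$ was added, and then a case analysis (either $e$ is adjacent to $f$, or some edge $g$ connects them) shows that $e$ would have been deleted from every graph below $Y$, contradicting $e \in D_v(0) \subseteq E(G(X))$. You instead argue directly and statically from the invariant $G(X) = G[V\setminus(N_G(V(M(X)))\cup S(X))]$: any edge surviving in $G(X)$ has both endpoints outside $N_G(V(M(X)))$, and your standalone claim (endpoints avoiding $N_G(V(M))$ implies $M\cup\set{g}$ is induced) plus an induction over the enumeration tree does the rest. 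Two things your version buys: it makes explicit the fact that $M(X)$ is itself an induced matching at every iteration, which the paper only asserts in passing just before Corollary~\ref{cor:induced} (and which, in the paper's organization, implicitly follows by induction from this very lemma); and it records clearly that $C_4$-freeness plays no role here, consistent with the paper's proof. The one point where you lean on the paper's setup is the identification of the graph maintained by Algorithm~\ref{algo:cat:detail} with the induced subgraph $G[V\setminus(N_G(V(M(X)))\cup S(X))]$ (strictly, the two may differ in isolated vertices, but their edge sets coincide by Lemmas~\ref{lem:G:X.0} and~\ref{lem:G:X.i}, which is all your argument needs); the paper's own proof relies on the same definition of $G(X)$, so this is a fair assumption rather than a gap.
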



Since \EnumIM outputs a solution in a leaf iteration and
$M(X)$ is induced matching for each iteration $X \in \sig T$,
the following corollary holds from
Lemma~\ref{lem:c4free:dist}. 

\begin{corollary}
  \label{cor:induced}
  The algorithm \EnumIM in Algorithm~\ref{algo:cat:detail} outputs only induced matchings. 
\end{corollary}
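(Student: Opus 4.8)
The plan is to establish the stronger invariant that for every iteration $X$ in the enumeration tree $\sig T$ produced by \EnumIM, the edge set $M(X)$ is an induced matching in $G$, and then to observe that the algorithm outputs $M(X)$ only at leaf iterations. Corollary~\ref{cor:induced} follows immediately from these two facts, so the work is concentrated in verifying the invariant.

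First I would prove the invariant by induction on the length of the downward path from the root iteration to $X$. For the base case, the root iteration is invoked as $\RecEIM(\emptyset, G)$, so $M(X) = \emptyset$, which is trivially an induced matching. For the inductive step, let $X$ be a non-root iteration whose parent $P$ satisfies the invariant, i.e., $M(P)$ is an induced matching in $G$. There are two cases according to the type of branch from $P$ to $X$. If $X$ is the type-$0$ child of $P$ (the recursive call on line~\ref{step:cat:Child0}), then $M(X) = M(P)$, which is an induced matching by the induction hypothesis. If $X$ is a type-$1$ child of $P$, then $X$ is created inside the loop on line~\ref{step:cat:for:select:edge} for some edge $e \in D_v(0)$, where $v$ is the pivot on $P$; hence $M(X) = M(P) \cup \set{e}$ with $e \in D_v(0)$. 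By Lemma~\ref{lem:c4free:dist}, $M(P) \cup \set{e}$ is an induced matching in $G$, so $M(X)$ is an induced matching. This completes the induction.

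To finish, I would note that \RecEIM performs an output operation only in the branch where $E(G) = \emptyset$, that is, only at leaf iterations, and there it outputs exactly $M(X)$. Since the invariant guarantees that $M(X)$ is an induced matching in $G$ for every iteration $X$ — in particular for every leaf — every edge set output by \EnumIM is an induced matching in $G$, which is the claim.

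I do not expect a genuine obstacle: the corollary is essentially a bookkeeping consequence of Lemma~\ref{lem:c4free:dist}. The only point requiring a little care is to confirm that the edge used to extend $M$ at each type-$1$ branch really is an element of $D_v(0)$, so that Lemma~\ref{lem:c4free:dist} applies; this is immediate from the loop range $e \in D_v(0)$ in Algorithm~\ref{algo:cat:detail}. It is also worth remarking that the modifications made to the graph argument along the recursion (deleting $v$ on line~\ref{step:cat:Child0}, passing $G \setminus N[v]$ on line~\ref{step:cat:R1}, or removing $Sect_e(2)$ on line~\ref{step:cat:Child1}) are irrelevant to this statement, since they never alter the accumulated edge set $M$; they only affect which further edges can be added in the subtree.
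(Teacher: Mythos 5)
Your proposal is correct and follows essentially the same route as the paper: the paper's own justification is exactly that $M(X)$ is an induced matching at every iteration (by Lemma~\ref{lem:c4free:dist}) and that output happens only at leaves, and your induction over the enumeration tree merely makes that one-line argument explicit. No gaps; the only added value is the careful check that every type-$1$ extension uses an edge from $D_v(0)$, which is indeed the point where Lemma~\ref{lem:c4free:dist} is needed.
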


Next, we consider a method for obtaining $G(X.i)$. 
For any $0$-$1$ edge $e_i = \set{v, u_i}$ of a pivot $v$,
we define $Sect_{e_i}(k)$ as follows:
\begin{equation}
    Sect_{e_i}(k) = \inset{f \in D_v(k)}{dist_{G(X)}(v, f) = k, dist_{G(X)}(e_i, f) = k - 1}. 
\end{equation}
We show an example of $Sect_{e_i}(k)$ in Figure~\ref{fig:partition}.
Proofs of the following two lemmas are shown in Appendix~\ref{app}. 
\begin{lemma}
    \label{lem:G:X.0}
    Let $X$ be an iteration in the algorithm \EnumIM in Algorithm~\ref{algo:cat:detail}.
    Then, $G(X.0) = G(X) \setminus \set{v}$ holds. 
\end{lemma}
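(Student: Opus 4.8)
The plan is to prove the identity by unfolding the definition $G(X) = G[V \setminus (N(V(M(X))) \cup S(X))]$ at both $X$ and its $0$-child $X.0$, and then to read off how $M$ and $S$ change along the $0$-branch directly from Algorithm~\ref{algo:cat:detail}. Since both sides of the claimed equation are induced subgraphs of the fixed input graph $G$, it suffices to show that their vertex sets coincide; the edge sets then agree automatically.

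First I would record the behaviour of the $0$-branch. The $0$-child is created by the call \RecEIM$(M, G \setminus \set{v})$ in line~\ref{step:cat:Child0}, so the current matching is not modified: $M(X.0) = M(X)$. On the other hand, the pivot $v$ selected in line~\ref{step:cat:find} is, by the definition of $S$, a new ancestor-pivot for $X.0$, and it is the only vertex added to $S$ at this step, so $S(X.0) = S(X) \cup \set{v}$. I would also note that $v \in V(G(X))$ — this step is reached only when $E(G(X)) \neq \emptyset$, so a maximum-degree vertex exists — and hence $v \notin N(V(M(X))) \cup S(X)$, which makes the passage from $S(X)$ to $S(X) \cup \set{v}$ a genuine enlargement of the removed vertex set.

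It then remains to combine these facts with elementary set algebra. Writing $R = N(V(M(X))) \cup S(X)$, we have $G(X.0) = G[V \setminus (N(V(M(X.0))) \cup S(X.0))] = G[V \setminus (R \cup \set{v})]$ by the two identities above, while $G(X) \setminus \set{v} = G[V \setminus R] \setminus \set{v} = G[(V \setminus R) \setminus \set{v}] = G[V \setminus (R \cup \set{v})]$, using that an induced subgraph of an induced subgraph is the induced subgraph on the smaller vertex set. Hence $G(X.0) = G(X) \setminus \set{v}$, which is moreover consistent with the monotonicity asserted in Lemma~\ref{lem:del2:mono}.

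I do not expect a genuine obstacle here: the only care required is in correctly extracting $M(X.0)$ and $S(X.0)$ from the algorithm — in particular that the $0$-branch adds nothing to $M$ and exactly the pivot $v$ to $S$ — and in handling the degenerate possibility that no pivot exists, which is excluded by the guard $E(G(X)) \neq \emptyset$. The remaining work is the routine verification that $(V \setminus R) \setminus \set{v} = V \setminus (R \cup \set{v})$ and that forming induced subgraphs commutes with this restriction.
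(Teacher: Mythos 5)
Your proof is correct and follows essentially the same route as the paper's: both reduce the claim to the observations that the $0$-branch leaves $M$ unchanged while the pivot $v$ becomes a removed (ancestor-pivot) vertex, so by the definition $G(X)=G[V\setminus(N(V(M(X)))\cup S(X))]$ the only change is the deletion of $v$. The paper's own proof is a one-line version of this; you merely make explicit the role of $S(X.0)=S(X)\cup\set{v}$ and the routine set algebra, which the paper leaves implicit.
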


\begin{lemma}
  \label{lem:G:X.i}
  Let $X$ be any iteration in the algorithm \EnumIM in Algorithm~\ref{algo:cat:detail} and $i$ be positive integer.
  Then, $G(X.i) = G(X) \setminus (D_v(0) \cup D_v(1) \cup Sect_{e_i}(2))$ holds. 
\end{lemma}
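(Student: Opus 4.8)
The plan is to unwind the semantic definition $G(Y)=G[V\setminus(N(V(M(Y)))\cup S(Y))]$ at the child $X.i$, rewrite it as a vertex deletion applied to $G(X)$, and then pin down exactly which edges of $G(X)$ that deletion kills. First I would record the parent–child bookkeeping: since $v$ is the pivot chosen at $X$ and $X.i$ is a type-$1$ child receiving the edge $e_i=\set{v,u_i}$, we have $M(X.i)=M(X)\cup\set{e_i}$ and $S(X.i)=S(X)\cup\set{v}$, hence $V(M(X.i))=V(M(X))\cup\set{v,u_i}$. Plugging this into the definition and using $N(A\cup B)=N(A)\cup N(B)$ gives $G(X.i)=G[V\setminus(N(V(M(X)))\cup S(X)\cup N_G(v)\cup N_G(u_i)\cup\set{v})]$. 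Because $G(X)=G[V\setminus(N(V(M(X)))\cup S(X))]$ is an induced subgraph of $G$ containing both $v$ and $u_i$, we have $N_G(v)\cap V(G(X))=N_{G(X)}(v)$ and $N_G(u_i)\cap V(G(X))=N_{G(X)}(u_i)$, so $G(X.i)$ is obtained from $G(X)$ by deleting the vertex set $W:=N_{G(X)}[v]\cup N_{G(X)}(u_i)$ — exactly the argument used for Lemma~\ref{lem:G:X.0}, where $W$ was just $\set{v}$. It therefore suffices to show that the set of edges of $G(X)$ incident to $W$ is precisely $D_v(0)\sqcup D_v(1)\sqcup Sect_{e_i}(2)$; deleting these edges and deleting $W$ give graphs with the same edge set, differing only in vertices that are isolated in $G(X.i)$, which is immaterial since the recursion inspects only edge sets (and, up to such isolated vertices, this is what steps~\ref{step:cat:R1}--\ref{step:cat:Child1} compute).

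For the edge identification I would split $W$ into $N_{G(X)}[v]$ and the ``extra'' vertices $N_{G(X)}(u_i)\setminus N_{G(X)}[v]$, using only the triangle inequality for $dist_{G(X)}$. An edge incident to $v$ is a $0$-$1$ edge, and an edge with one endpoint at distance $1$ from $v$ has its other endpoint at distance $\le 2$, so it is a $1$-$1$ or a $1$-$2$ edge; conversely every edge in $D_v(0)\cup D_v(1)$ touches $N_{G(X)}[v]$. This handles the first part. For the second, a vertex $w\in N_{G(X)}(u_i)\setminus N_{G(X)}[v]$ satisfies $w\sim u_i$ and $dist_{G(X)}(v,w)=2$ (distance $\le 2$ since $w\sim u_i$ and $dist_{G(X)}(v,u_i)=1$; distance $\ge 2$ since $w\notin N_{G(X)}[v]$). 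Given an edge $f=\set{w,z}$ of $G(X)$ not already counted, i.e.\ not incident to $N_{G(X)}[v]$, one checks $dist_{G(X)}(v,z)\ge 2$, hence $dist_{G(X)}(v,f)=2$ and $f\in D_v(2)$, and $dist_{G(X)}(e_i,f)=1$ because $dist_{G(X)}(u_i,w)=1$ while $f$ shares no endpoint with $e_i$; thus $f\in Sect_{e_i}(2)$. The converse direction reads the constraints $dist_{G(X)}(v,f)=2$ and $dist_{G(X)}(e_i,f)=1$ in the definition of $Sect_{e_i}(2)$ straight off to produce such a $w$ and to conclude $f$ is not incident to $N_{G(X)}[v]$. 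Since $D_v(0)$, $D_v(1)$ and $Sect_{e_i}(2)\subseteq D_v(2)$ involve pairwise distinct distance pairs from $v$, the union is disjoint and equals the set of edges of $G(X)$ incident to $W$, which finishes the proof.

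The only genuinely delicate point is this last distance accounting: one must keep the edges already destroyed by deleting $N_{G(X)}[v]$ strictly apart from the genuinely new ones, so that $Sect_{e_i}(2)$ is hit exactly, with neither over- nor under-counting. Everything else is routine set manipulation, and — worth noting — $C_4$-freeness is not used in this lemma; it is the hypothesis that will later force $D_v(1)$, $D_v(2)$ and $Sect_{e_i}(2)$ to be small, and hence $G(X.i)$ cheap to build, but the identity itself is purely combinatorial.
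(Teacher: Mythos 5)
Your proof is correct and follows essentially the same route as the paper's: both characterize the edges of $G(X)$ that vanish at $X.i$ --- those with an endpoint in $N_{G(X)}[v]\cup N_{G(X)}(u_i)$, i.e.\ the edges conflicting with $e_i$ --- as exactly $D_v(0)\cup D_v(1)\cup Sect_{e_i}(2)$ via distance bookkeeping around $v$ and $u_i$. Your version is merely more careful than the paper's two-case sketch (explicitly handling the vertex-set/isolated-vertex discrepancy and noting that $C_4$-freeness is not used), but the core argument is the same.
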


Lemma~\ref{lem:G:X.0} and Lemma~\ref{lem:G:X.i} imply 
that \EnumIM correctly compute $G(X.i)$ in $X$.

\begin{lemma}
  \label{lem:no_dup}
  The algorithm \EnumIM in Algorithm~\ref{algo:cat:detail} outputs solutions without duplication. 
\end{lemma}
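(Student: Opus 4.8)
The plan is to argue that the enumeration tree $\sig T$ produced by \EnumIM assigns to each induced matching $M$ of $G$ a unique leaf iteration, by exhibiting a canonical root-to-leaf path determined by $M$ and showing the algorithm follows exactly that path when it outputs $M$. First I would observe that at any iteration $X$ the search space is partitioned into the pairwise disjoint classes $\sig S_0(X), \sig S_1(X), \dots, \sig S_{d_{G(X)}(v)}(X)$, where $v$ is the pivot on $X$: the class $\sig S_i(X)$ with $i\ge 1$ consists of the solutions containing the $i$th edge $e_i$ incident to $v$, and $\sig S_0(X)$ consists of the solutions containing no edge incident to $v$. These classes are disjoint because a solution containing $e_i$ cannot contain $e_j$ for $j\ne i$ (both share $v$) and cannot lie in $\sig S_0(X)$, and they cover $\sig S(X)$ by definition of $\sig S_0(X)$. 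Hence no solution can appear in two distinct children of the same iteration.

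Next I would lift this to the whole tree. Fix an induced matching $M\in \sig M(G)$. Starting from the root, where $\sig S = \sig M(G)\ni M$, at each iteration $X$ with $M\in\sig S(X)$ there is by the partition above exactly one child $X'$ with $M\in\sig S(X')$, so $M$ determines a unique infinite-or-terminating path in $\sig T$. I must check two things: that this path terminates at a leaf, and that $M$ is actually output there (and nowhere else). For termination, note that along the $0$-branch the pivot vertex is deleted (Lemma~\ref{lem:G:X.0}), and along a $1$-branch an edge of $M$ is added to the current matching while $G(X)$ strictly loses edges (Lemma~\ref{lem:del2:mono}); since $G$ is finite, the path reaches an iteration $X^*$ with $E(G(X^*))=\emptyset$, i.e.\ a leaf, where the algorithm outputs $M(X^*)$. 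It remains to verify $M(X^*)=M$: every edge added to the matching along the path was chosen as an edge $e_i$ incident to a pivot and belonging to some solution in the relevant class — in particular each such $e_i\in M$ by the choice of branch — so $M(X^*)\subseteq M$; conversely, if some $e\in M$ were not in $M(X^*)$, then $e\notin G(X^*)$ (as $E(G(X^*))=\emptyset$), so $e$ was removed at some iteration $Y$ on the path, but the only edges removed at $Y$ are incident to the pivot $v_Y$ or at distance $\le 1$ from $v_Y$ (Lemmas~\ref{lem:G:X.0} and~\ref{lem:G:X.i}), and since $M\in\sig S(Y')$ for the chosen child $Y'$ of $Y$, any edge of $M$ incident to $v_Y$ would force $Y'$ to be the corresponding type-$1$ child and hence $e\in M(X^*)$, while an edge of $M$ merely at distance $\le 1$ from $v_Y$ but not incident to $v_Y$ would, together with the pivot edge picked along the path or together with $v_Y$ itself, violate $dist_G(e,e')\ge 2$ for the pair of edges of $M$ involved — contradicting that $M$ is an induced matching (here I would invoke Lemma~\ref{lem:c4free:dist} and the $C_4$-freeness, exactly as used to justify the $Sect$ construction). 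Thus $M(X^*)=M$.

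Finally I would conclude: distinct induced matchings $M\ne M'$ are output, if at all, at the leaves ending their respective canonical paths; these paths first diverge at some common ancestor $X$, where $M$ and $M'$ fall into different children of $X$ by the disjointness of the classes $\sig S_i(X)$, so the two leaves are distinct. Combined with Corollary~\ref{cor:induced} (only induced matchings are output) and the fact that every $M\in\sig M(G)$ reaches a leaf that outputs it, this gives output without duplication.

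\medskip

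\emph{Main obstacle.} The delicate point is the last step of the second paragraph: ruling out that an edge $e\in M$ silently disappears from $G(X)$ because it lies in $D_v(1)$ or in $Sect_{e_i}(2)$ without forcing a corresponding branch decision. This is precisely where $C_4$-freeness is needed, and it amounts to re-deriving the distance bookkeeping of Lemmas~\ref{lem:c4free:dist}, \ref{lem:G:X.0} and~\ref{lem:G:X.i} in the contrapositive direction — showing that every edge removed along the path is either a conflict edge with the updated matching (hence genuinely not in $\sig S(Y')$ for any descendant) or is incident to a pivot (hence accounted for by the branch chosen). I expect the proof to spend most of its length making this case analysis airtight.
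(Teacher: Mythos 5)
Your overall strategy (the classes $\sig S_0(X),\dots,\sig S_{d(v)}(X)$ partition $\sig S(X)$, so each solution determines a unique root-to-leaf path) is the right idea and close in spirit to the paper's proof, but as written the argument does not establish the lemma. What you actually prove is: (i) every induced matching $M$ is output at the leaf $X^*_M$ ending its canonical path --- this is completeness, i.e.\ Lemma~\ref{lem:output_all}, and is where you spend the termination and distance bookkeeping (your ``main obstacle''); and (ii) distinct matchings have distinct canonical leaves. Non-duplication, however, is the statement that no single matching is output at two distinct leaves, and for that you need the converse of (i): every leaf $L$ with $M(L)=M$ must be the canonical leaf of $M$. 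You assert this (``output, if at all, at the leaves ending their respective canonical paths'') but never argue it; (i) and (ii) together do not exclude a non-canonical leaf $L\neq X^*_M$ with $M(L)=M$, which would be exactly a duplicate. So the step carrying the burden of the lemma is missing.

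The gap closes with material you already cite, and the paper's proof is essentially this closing argument done directly. Suppose two distinct leaves $X$ and $Y$ satisfy $M(X)=M(Y)$, and let $Z$ be their lowest common ancestor; they descend from different children of $Z$, and at least one of them, say $X$, descends from a type-$1$ child $Z.i$, so $e_i\in M(Z.i)\subseteq M(X)$ by Lemma~\ref{lem:del2:mono}. On the other hand $e_i\notin M(Z)$ (edges of $M(Z)$ are no longer present in $G(Z)$, while $e_i\in E(G(Z))$), and $e_i$ is deleted from $G(Z.j)$ for every child $Z.j$ with $j\neq i$, including $j=0$ (Lemmas~\ref{lem:G:X.0} and~\ref{lem:G:X.i}); by Lemma~\ref{lem:del2:mono} it can therefore never enter the matching below $Z.j$, so $e_i\notin M(Y)$, a contradiction. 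In your formulation this is the statement that at each ancestor of a leaf $L$ a type-$1$ branch forces its fixed edge into $M(L)$, while the type-$0$ branch removes all edges at the pivot so none of them can occur in $M(L)$; hence $L$ lies on the canonical path of $M(L)$. Note also that no $C_4$-freeness is needed for this lemma --- you only invoke it in the completeness direction, which belongs to Lemma~\ref{lem:output_all} rather than here.
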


\begin{proof}
  Let $X$ and $Y$ be two distinct leaf iterations.
  We proceed by contradiction.
  Suppose that $M(X) = M(Y)$.
  From the assumption,  $X$ and $Y$ are incomparable. 
  Hence, without loss of generality the lowest common ancestor of $X$ and $Y$ always exists.
  Let $Z$ be the lowest common ancestor of $X$ and $Y$.
  We consider the following two cases. 
  (1) Suppose that both $X$ and $Y$ are descendants of the type-$0$ child $Z.0$ of $Z$. 
  This contradicts that $Z$ is the lowest common ancestor of $X$ and $Y$.  
  (2) Suppose that at least one of $X$ and $Y$ is a descendant of a type-$1$ child of $Z$.
  Without loss of generality, 
  $X$ is a descendant of the $i$th child $Z.i$ of $Z$. 
  If $W$ is $Z.j$ or is any descendant of $Z.j$ where $j \neq i$, 
  then $M(W)$ does not include $e_i$ and this contradicts $M(X) = M(Y)$. 
  Hence, 
  the statement holds. 
\end{proof}

\begin{lemma}
  \label{lem:output_all}
  The algorithm \EnumIM in Algorithm~\ref{algo:cat:detail} outputs all solutions in $\sig M$. 
\end{lemma}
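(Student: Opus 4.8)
The plan is to prove, by induction on $\size{E(G(X))}$, the following \emph{reachability} statement, from which the lemma follows by taking $X$ to be the root iteration: \emph{for every iteration $X$ of \EnumIM and every induced matching $M$ of $G$ with $M(X) \subseteq M$ and $M \setminus M(X) \subseteq E(G(X))$, some leaf iteration in the subtree of $\sig T$ rooted at $X$ outputs $M$.} At the root, $M(X) = \emptyset$ and $G(X) = G$, so the hypotheses hold for every induced matching $M$, and the conclusion is precisely that every induced matching of $G$ appears as an output.

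For the base case $E(G(X)) = \emptyset$, the iteration $X$ is a leaf; the hypothesis $M \setminus M(X) \subseteq E(G(X)) = \emptyset$ forces $M = M(X)$, and \RecEIM outputs $M(X) = M$ at $X$. For the inductive step, suppose $E(G(X)) \neq \emptyset$ and let $v$ be the pivot on $X$, so $d_{G(X)}(v) \geq 1$. I would first note that no edge of $M(X)$ lies in $E(G(X))$: each endpoint of an edge of $M(X)$ is a neighbour of the partner endpoint and so is deleted in forming $G(X) = G[V\setminus (N(V(M(X))) \cup S(X))]$. Since $v \in V(G(X))$, it follows that any edge of $M$ incident to $v$ lies in $M \setminus M(X) \subseteq E(G(X))$, hence is one of the edges $e_1, \dots, e_{d_{G(X)}(v)}$ handled by the \textbf{for} loop of \RecEIM; and as $M$ is a matching there is at most one such edge. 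If $M$ has no edge incident to $v$, I would descend to the type-$0$ child $X.0$: by Lemma~\ref{lem:G:X.0}, $M(X.0) = M(X)$ and $G(X.0) = G(X) \setminus \set{v}$, so every edge of $M \setminus M(X)$ — avoiding $v$ — survives in $G(X.0)$, while $\size{E(G(X.0))} < \size{E(G(X))}$ because $v$ has positive degree, and the inductive hypothesis applies to $X.0$. If $M$ contains exactly one edge $e_i = \set{v, u_i}$ incident to $v$, I would descend to the type-$1$ child $X.i$, where $M(X.i) = M(X) \cup \set{e_i} \subseteq M$ and, by Lemma~\ref{lem:G:X.i}, $G(X.i) = G(X) \setminus (D_v(0) \cup D_v(1) \cup Sect_{e_i}(2))$.

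The step I expect to be the main obstacle is checking that the invariant survives the passage to $X.i$, i.e.\ that $M \setminus M(X.i) \subseteq E(G(X.i))$: the deletion must not remove any edge of $M$ still to be added. Fix $f \in M \setminus M(X.i)$; then $f \in E(G(X))$ and $f \neq e_i$, and since $M$ is an induced matching $dist_G(e_i, f) \geq 2$, hence $dist_{G(X)}(e_i, f) \geq 2$ as $G(X)$ is a subgraph of $G$ (Lemma~\ref{lem:del2:mono}). Then $f \notin D_v(0)$, because edges of $D_v(0)$ are incident to $v$ while $f$ is not (being distinct from $e_i$ in the matching $M$); $f \notin D_v(1)$, because an endpoint of such an $f$ at distance $1$ from $v$, which cannot equal $u_i$ (else $f$ and $e_i$ would share $u_i$), would together with its edge to $v$ witness $dist_{G(X)}(e_i, f) = 1$; and $f \notin Sect_{e_i}(2)$, because membership there forces $dist_{G(X)}(e_i, f) = 1$. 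Hence $f \in E(G(X.i))$, so the invariant holds at $X.i$, and $\size{E(G(X.i))} < \size{E(G(X))}$ since the removed set contains $e_i \in D_v(0)$; the inductive hypothesis then applies to $X.i$ and completes the proof. Apart from this distance bookkeeping, everything is routine.
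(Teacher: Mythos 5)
Your proof is correct and follows essentially the same route as the paper: it justifies that each iteration correctly splits the remaining solutions among the type-$0$ and type-$1$ children (via Lemma~\ref{lem:G:X.0} and Lemma~\ref{lem:G:X.i}) and that every solution therefore survives down some branch to a leaf, where it is output. You merely make the paper's terse ``correctly divides $\sig S(X)$'' claim explicit as an induction on $\size{E(G(X))}$ with the invariant $M(X) \subseteq M$ and $M \setminus M(X) \subseteq E(G(X))$, which is a more detailed rendering of the same argument.
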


\begin{proof}
  Let $\sig T$ be an enumeration tree, $X$ be any iteration on $\sig T$,
  and $v$ be the pivot on $X$. 
  From Lemma~\ref{lem:G:X.0} and Lemma~\ref{lem:G:X.i},
  \EnumIM correctly divides a solution set $\sig S(X)$ in $X$ into
  $\sig S_0(X), \dots, \sig S_{d(v)}(X)$. 
  If $\size{\sig S(X)} = 1$,
  then \EnumIM outputs $\sig S(X)$. 
  Hence, \EnumIM outputs all solutions in $\sig M$. 
\end{proof}

From corollary~\ref{cor:induced}, Lemma~\ref{lem:no_dup},
and, Lemma~\ref{lem:output_all}, the next theorem holds. 
\begin{theorem}
  \label{theo:c4correct}
  The algorithm \EnumIM in Algorithm~\ref{algo:cat:detail} enumerates all solutions without duplication. 
\end{theorem}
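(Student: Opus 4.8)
The plan is to assemble Theorem~\ref{theo:c4correct} directly from the three correctness results already established for $\sig M$, the set of all induced matchings of the input graph $G$. I read the statement ``enumerates all solutions without duplication'' as the assertion that the map sending each leaf iteration $X$ of the enumeration tree $\sig T$ to its matching $M(X)$ is a well-defined bijection onto $\sig M$. Accordingly I would check three things, in this order: soundness (every $M(X)$ output at a leaf lies in $\sig M$), completeness (every element of $\sig M$ is output by some leaf), and injectivity (distinct leaves output distinct matchings).

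Soundness is immediate from Corollary~\ref{cor:induced}: \EnumIM writes only induced matchings of $G$, so the image of the map is contained in $\sig M$. Completeness is exactly Lemma~\ref{lem:output_all}, whose proof in turn relies on Lemmas~\ref{lem:G:X.0} and~\ref{lem:G:X.i} to guarantee that at each iteration $X$ with pivot $v$ the search space splits exactly, $\sig S(X) = \sig S_0(X) \sqcup \dots \sqcup \sig S_{d(v)}(X)$, so that no solution is lost along a downward path and the unique solution of a size-one leaf is emitted. Injectivity is Lemma~\ref{lem:no_dup}: for incomparable leaves $X, Y$ one passes to their lowest common ancestor $Z$ and observes that the pivot edge on the branch leading to $X$ separates $M(X)$ from $M(Y)$. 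Combining the three, the collection of outputs is precisely $\sig M$ with each element appearing once, which is the theorem.

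I do not expect any genuine obstacle to remain at this point, since all the structural content has been packed into the preceding lemmas: the monotonicity $M(X) \subseteq M(Y)$ and $E(G(X)) \supseteq E(G(Y))$ for $X \preceq Y$ from Lemma~\ref{lem:del2:mono}, the computation of the child graphs $G(X.0)$ and $G(X.i)$ via the concentric structures $D_v(k,\ell)$ and the sets $Sect_{e_i}(k)$, and the exhaustive-and-disjoint $(d(v)+1)$-way branching on a maximum-degree pivot. The one ancillary point I would mention explicitly is termination: by Lemma~\ref{lem:del2:mono} each child strictly shrinks $E(G(X))$ — the pivot has positive degree whenever $E(G(X)) \neq \emptyset$ — so $\sig T$ is finite and the recursion halts, which is what makes ``enumerates'' meaningful. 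Hence Theorem~\ref{theo:c4correct} follows from Corollary~\ref{cor:induced}, Lemma~\ref{lem:output_all}, and Lemma~\ref{lem:no_dup}.
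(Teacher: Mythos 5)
Your proposal is correct and follows essentially the same route as the paper, which derives Theorem~\ref{theo:c4correct} directly from Corollary~\ref{cor:induced}, Lemma~\ref{lem:no_dup}, and Lemma~\ref{lem:output_all}. Your added remark on termination via the strict shrinking of $E(G(X))$ is a harmless (and reasonable) extra observation not spelled out in the paper.
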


\begin{figure}[t]
  \centering
  \includegraphics[width=\textwidth]{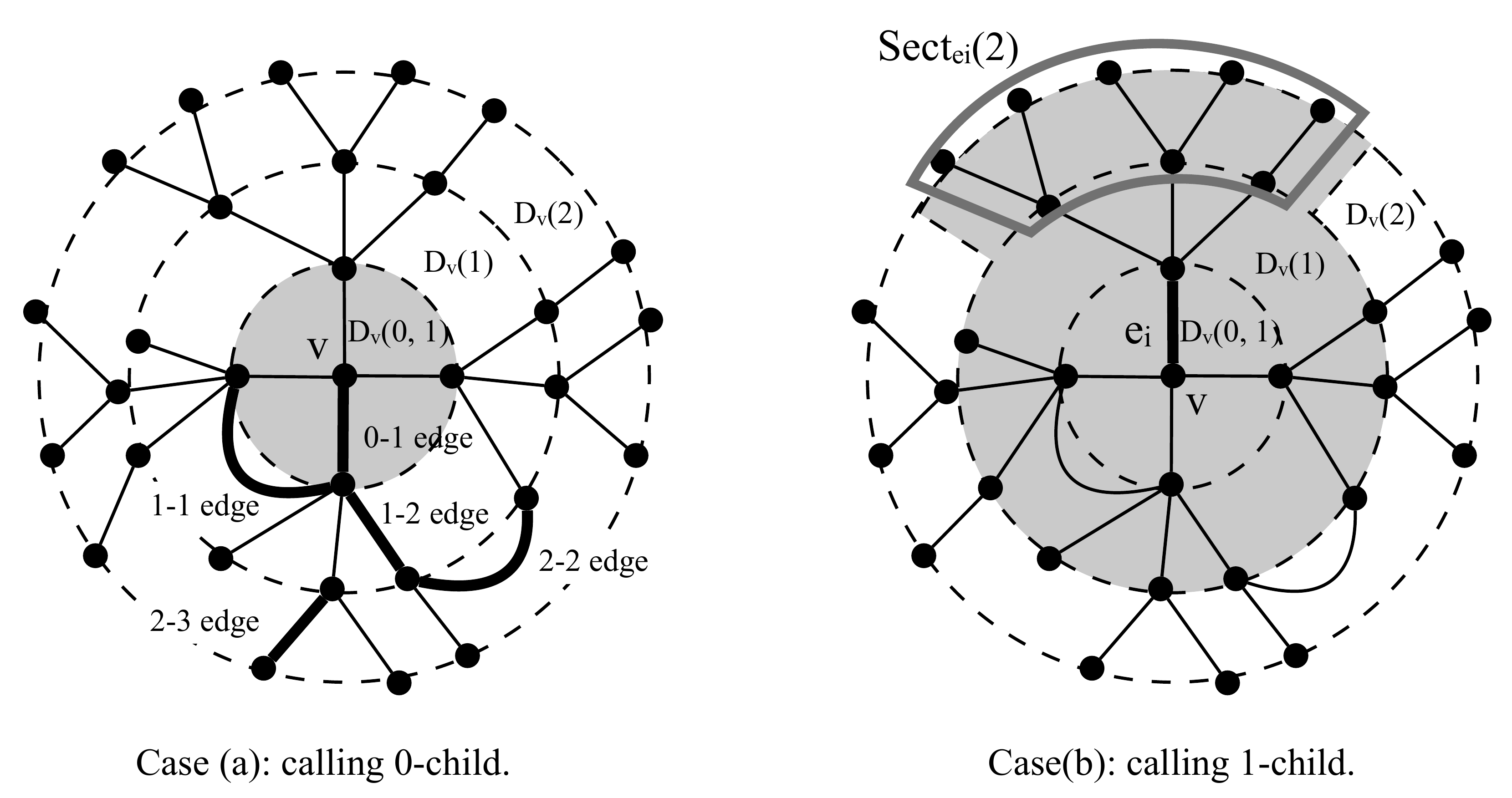}
  \caption{An example of
    dividing edges by the pivot $v$ in the graph $G$.
    In case (a), the shaded area indicates the area of edges to be
    removed when calling type-$0$ child. 
    In case (b), the shaded area indicates the area of edges to be
    removed when calling type-$1$ child.
    The area surrounded by the solid line represent
    $Sect_{e_i}(2)$}
  \label{fig:partition}
\end{figure}

\subsection{Amortized analysis of the time complexity}
\label{sec:c4free:cost}

If the degree of the pivot on $X$ is less than three, 
then
\EnumIM obviously runs in  the constant amortized time per solution 
since the number of steps in each iteration $X$ is constant.
Thus, we assume the degree of pivot is at least three.

We first consider the data structure $\List{G(X)}$ 
to efficiently extract the set of vertices whose degree is $k$ when we are given $k$. 
We define $\List{G(X)}$ as follows:
$\List{G(X)} = \set{L_0, \dots, L_{\Delta(G(X))}}$, 
where  for any $0 \le i \le \Delta(G(X))$,
$L_i = \inset{v \in V(G)}{d_{G(X)}(v) = i}$. 
The lists in $\List{G(X)}$ are implemented by doubly-linked lists,
and  we denote $x \in \List{G(X)}$ if $\List{G(X)}$ includes $x$.
For any input graph $G$, 
we can compute  $\List{G}$ in $\order{\size{V(G)}}$ time
by using bucket sort.
We can implement $\List{G(X)}$ such that 
extracting any vertex from $\List{G(X)}$ can be done in constant time. 
By using $\List{G(X)}$, 
we can see the following lemma. 
The proof of Lemma~\ref{lem:Lconst} in Appendix~\ref{app}. 
\begin{lemma}
\label{lem:Lconst}
  Let $X$ be any iteration in $\sig T$.
  Then, we can find the pivot in constant time by using $\List{G(X)}$. 
\end{lemma}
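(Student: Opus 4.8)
The plan is to reduce the task ``find a pivot of $X$'' to ``read the index of the topmost nonempty bucket of $\List{G(X)}$ and return its first element.'' First I would observe that, by definition, a pivot of $X$ is any maximum-degree vertex of $G(X)$, i.e.\ any vertex stored in $L_{\Delta(G(X))}$; and a pivot is only needed when $E(G(X)) \neq \emptyset$ (otherwise $X$ is a leaf), so $L_{\Delta(G(X))}$ is nonempty. Since $\List{G(X)}$ is implemented so that a vertex can be extracted from any of its lists in constant time, returning the head of $L_{\Delta(G(X))}$ costs $\order{1}$ --- provided the value $\Delta(G(X))$ is known.

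So the actual content is to maintain, alongside $\List{G(X)}$, the index $\delta(X) := \Delta(G(X))$ of the current topmost nonempty bucket, so that the pivot-selection step stays $\order{1}$. I would carry $\delta$ as an auxiliary field of $\List{G(\cdot)}$: it is initialized together with $\List{G}$ in $\order{\size{V(G)}}$ time at the root by bucket sort, as already noted. Along every branch from $X$ to a child $Y$ the graph only shrinks --- the type-$0$ child deletes the single vertex $v$ (Lemma~\ref{lem:G:X.0}), a type-$1$ child deletes $N[v]$ together with the endpoints touched by $Sect_{e_i}(2)$ (Lemma~\ref{lem:G:X.i}), and in general $E(G(Y)) \subseteq E(G(X))$ (Lemma~\ref{lem:del2:mono}) --- so $\delta$ can only decrease as we descend. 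Concretely, when those vertices are deleted, each surviving vertex whose degree drops is moved into the appropriate lower list, and $\delta$ is then decremented past any newly emptied top buckets until $L_\delta \neq \emptyset$ again; the head of this $L_\delta$ is a valid pivot for $Y$.

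The delicate point, which I expect to be the main obstacle, is the accounting: a single transition can move many vertices between buckets and can slide $\delta$ down by more than one. I would absorb this in the amortized framework of Section~\ref{sec:c4free:cost} --- a vertex moves down a bucket at most $\Delta(G)$ times along any root-to-leaf path, and each unit of downward slide of $\delta$ is charged to a vertex deletion that caused it --- so the maintenance cost over the whole enumeration tree stays within the bounds the later analysis already distributes over the output solutions. For the lemma itself, however, the claim is narrow: once $\List{G(X)}$ and $\delta(X)$ are available, selecting the pivot is a single constant-time list access, hence the pivot-selection step of each iteration runs in $\order{1}$ time. I would also check the boundary case in which several consecutive top buckets empty simultaneously, so that the scan for the new $\delta$ skips over all of them and halts at the first nonempty one, whose head is still a valid pivot.
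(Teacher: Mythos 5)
Your proposal is correct and follows essentially the same route as the paper: the pivot is obtained in $\order{1}$ time by taking an element of the nonempty bucket of maximum index in $\List{G(X)}$. You additionally spell out how the index of that topmost nonempty bucket is maintained and how its lazy downward slides are charged to the edge-deletion work already amortized in Section~\ref{sec:c4free:cost} (the charge should be to degree decrements/edge deletions rather than to single vertex deletions, since one deleted vertex can drop the maximum degree by more than one) --- a detail the paper's one-line proof leaves implicit, so your version is, if anything, the more complete argument.
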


Next, we define the edge set $D^\le_v(2)$ as follows:
$D^\le_v(2) = D_v(0) \cup D_v(1) \cup D_v(2)$, i.e.
$D^\le_v(2)$ is the edge set consisting of all edges whose
distance is less than two from $v$.

\begin{lemma}
  \label{lem:c4}
  Let $G$ be a $C_4$-free graph, $v$ be the pivot on an iteration $X$, and,
  $u$ be a vertex satisfying $dist_G(u, v) = 2$. 
  Then, the number of edges whose endpoint is $u$ in the set of $1$-$2$ edges of $v$ is
  exactly one.  
\end{lemma}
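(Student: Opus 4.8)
The plan is to reduce the statement to an elementary property of $C_4$-free graphs: two vertices at distance exactly two have precisely one common neighbour. Since $G(X)$ is an induced subgraph of the $C_4$-free graph $G$, it is itself $C_4$-free; and an edge in the set of $1$-$2$ edges of $v$ incident to $u$ is exactly an edge $\{w,u\}$ with $w\in N_{G(X)}(v)\cap N_{G(X)}(u)$, because $u$ is at distance $2$ from $v$ and so can only be the far endpoint of such an edge. Hence it suffices to show that $v$ and $u$ have exactly one common neighbour in $G(X)$.

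For existence, I would use that $dist_{G(X)}(u,v)=2$: a shortest $v$-$u$ path in $G(X)$ has the form $v,w,u$, so $dist_{G(X)}(w,v)=1$ and $dist_{G(X)}(u,v)=2$, i.e.\ $\{w,u\}\in D_v(1,2)$ is a $1$-$2$ edge of $v$ incident to $u$. For uniqueness, suppose $w_1\ne w_2$ were two common neighbours of $v$ and $u$ in $G(X)$. The vertices $v,w_1,u,w_2$ are pairwise distinct ($v\ne u$ because their distance is $2$; $w_i\ne v$ and $w_i\ne u$ because $G$ is simple; $w_1\ne w_2$ by assumption), and the edges $\{v,w_1\},\{w_1,u\},\{u,w_2\},\{w_2,v\}$ all belong to $E(G(X))\subseteq E(G)$, so they form a $C_4$ in $G$ — contradicting $C_4$-freeness. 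Combining the two halves, $u$ is incident to exactly one $1$-$2$ edge of $v$.

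The only delicate point is the mismatch between $dist_G$ in the hypothesis and the fact that the $1$-$2$ edges and the structures $D_v(\cdot,\cdot)$ are defined through $G(X)$. Uniqueness is unaffected, since a $C_4$ in $G(X)$ is a $C_4$ in $G$. For existence one genuinely needs $dist_{G(X)}(u,v)=2$, not merely $dist_G(u,v)=2$; as $G(X)$ is a subgraph of $G$ we at least get $dist_{G(X)}(u,v)\ge dist_G(u,v)=2$, and I would close the gap either by reading the lemma with $dist_{G(X)}$ in place of $dist_G$ (which is how it is used) or by invoking the context of its application, where $u$ is a vertex at distance exactly two from $v$ inside $G(X)$. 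Note that the pivot hypothesis (maximum degree of $v$) plays no role here; the argument only uses that $G$ has no $C_4$.
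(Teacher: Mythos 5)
Your proof is correct and its core step --- deriving a $4$-cycle from two distinct common neighbours $w_1 \neq w_2$ of $u$ and $v$ and contradicting $C_4$-freeness --- is exactly the paper's own argument. You additionally supply the existence half of ``exactly one'' (the paper's proof by contradiction only establishes ``at most one'') and flag the mismatch between $dist_G$ and $dist_{G(X)}$, both of which the paper leaves implicit; these are welcome refinements, not a different route.
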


\begin{proof}
  We proof by contradiction.
  Let $f_1 = \set{u, w_1}$ and $f_2 = \set{u, w_2}$ be two distinct $1$-$2$ edges whose end point is $u$.
  We assume $w_1 \neq w_2$.
  Since $f_1$ and $f_2$ are $1$-$2$ edges and $dist_G(u, v) = 2$, 
  $dist_G(v, w_1) = dist_G(v, w_2) = 1$ holds. 
  Thus, there exist two edges $e_1 = \set{v, w_1}$ and $e_2 = \set{v, w_2}$. 
  Hence, there is a cycle $(v, e_1, w_1, f_1, u, f_2, w_2, e_2, v)$ in $G$. 
  This contradicts that $G$ is a $C_4$-free graph. 
  Hence, the statement holds. 
\end{proof}

\begin{lemma}
  \label{lem:cat}
  Let $G$ be a $C_4$-free graph and
  $v$ be the pivot on an iteration. 
  Then, the following inequality holds: 
  $\sum_{e \in D_v(0)} \size{Sect_e(2)} \le 2\size{D_v(2)}$. 
\end{lemma}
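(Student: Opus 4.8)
The goal is to bound $\sum_{e \in D_v(0)} |Sect_e(2)|$ by $2|D_v(2)|$. The natural approach is a counting/charging argument: show that each edge $f \in D_v(2)$ is counted by $Sect_e(2)$ for at most two distinct $0$-$1$ edges $e$ of $v$, so that $\sum_{e} |Sect_e(2)| = \sum_{f \in D_v(2)} |\{e \in D_v(0) : f \in Sect_e(2)\}| \le 2 |D_v(2)|$. So the heart of the matter is: fix $f = \{x,y\} \in D_v(2)$; how many $0$-$1$ edges $e_i = \{v, u_i\}$ can satisfy $dist_{G(X)}(e_i, f) = 1$?

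**Key steps.** First I would rewrite the left-hand sum by swapping the order of summation, reducing the claim to the local statement about a single $f \in D_v(2)$. Next, I would analyze where $f = \{x,y\}$ sits relative to the concentric structure. Since $f \in D_v(2) = D_v(2,2) \sqcup D_v(2,3)$, at least one endpoint, say $x$, has $dist_{G(X)}(v,x) = 2$; the other endpoint $y$ has distance $2$ or $3$ from $v$. For $f$ to lie in $Sect_{e_i}(2)$ we need $dist_{G(X)}(e_i, f) = 1$, i.e. some endpoint of $e_i$ is adjacent to some endpoint of $f$; since $v$ itself is at distance $2$ from $x$ and at distance $\ge 2$ from $y$, the only endpoint of $e_i$ that can be adjacent to an endpoint of $f$ is $u_i$, and it must be adjacent to a distance-$2$ endpoint of $f$. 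Then I would invoke $C_4$-freeness, in the same spirit as Lemma~\ref{lem:c4}: if $u_i$ is a neighbor of $x$ (with $dist(v,x)=2$, $dist(v,u_i)=1$), then $u_i$ is essentially determined — Lemma~\ref{lem:c4} says each distance-$2$ vertex has exactly one $1$-$2$ edge incident to it, hence exactly one such neighbor $u_i$ at distance $1$ from $v$. So $f$'s distance-$2$ endpoints contribute at most one candidate $e_i$ each; since $f$ has at most two distance-$2$ endpoints (both, if $f \in D_v(2,2)$; exactly one, if $f \in D_v(2,3)$), we get at most two choices of $e_i$.

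**Main obstacle.** The delicate point is handling the $D_v(2,3)$ case and making sure the distance bookkeeping is done in $G(X)$ rather than $G$ — distances can only shrink in $G(X)$, and I must check that the $C_4$-freeness of $G$ still forbids the relevant $4$-cycle even though we reason about distances in the subgraph $G(X)$. Concretely, if $u_i$ and $u_j$ are both distance-$1$ neighbors of $v$ in $G(X)$ and both adjacent (in $G(X)$, hence in $G$) to the same distance-$2$ vertex $x$, then $(v, u_i, x, u_j, v)$ is a $4$-cycle in $G$, contradiction — so the argument transfers cleanly, and this is what pins down "at most one $e_i$ per distance-$2$ endpoint of $f$." The remaining care is just confirming that an edge $f \in D_v(2,3)$ genuinely has only its distance-$2$ endpoint available as an attachment point (its distance-$3$ endpoint cannot be adjacent to any $u_i$, since that would force it to distance $2$ from $v$), which follows directly from the definitions of $Sect_{e_i}(2)$ and the distances.

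\begin{proof}
Swapping the order of summation,
\[
\sum_{e \in D_v(0)} \size{Sect_e(2)} = \sum_{f \in D_v(2)} \size{\inset{e \in D_v(0)}{f \in Sect_e(2)}},
\]
so it suffices to show that each $f \in D_v(2)$ satisfies $f \in Sect_{e_i}(2)$ for at most two distinct $0$-$1$ edges $e_i$ of $v$.

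Fix $f = \set{x, y} \in D_v(2) = D_v(2,2) \sqcup D_v(2,3)$, and let $e_i = \set{v, u_i}$ be a $0$-$1$ edge with $f \in Sect_{e_i}(2)$, so $dist_{G(X)}(v, f) = 2$ and $dist_{G(X)}(e_i, f) = 1$. Since $dist_{G(X)}(v, x) \ge 2$ and $dist_{G(X)}(v, y) \ge 2$, no endpoint of $f$ is adjacent to $v$; hence the condition $dist_{G(X)}(e_i, f) = 1$ forces $u_i$ to be adjacent in $G(X)$ to some endpoint $z \in \set{x, y}$ of $f$, and then $dist_{G(X)}(v, z) \le dist_{G(X)}(v, u_i) + 1 = 2$, so in fact $dist_{G(X)}(v, z) = 2$.

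Thus every such $e_i$ "attaches" to a distance-$2$ endpoint of $f$. The edge $f$ has at most two endpoints at distance $2$ from $v$: both $x$ and $y$ if $f \in D_v(2,2)$, and exactly one (say $x$) if $f \in D_v(2,3)$, since the distance-$3$ endpoint cannot be adjacent to any neighbor $u_i$ of $v$ without being pulled to distance $2$. It remains to see that, for a fixed distance-$2$ vertex $z$, at most one $0$-$1$ edge $e_i$ of $v$ attaches to $z$. Suppose $u_i$ and $u_j$ were both adjacent in $G(X)$ to $z$, with $i \neq j$, hence $u_i \neq u_j$. Then $\set{v, u_i}$, $\set{v, u_j}$, $\set{u_i, z}$, $\set{u_j, z}$ are edges of $G$, giving the cycle $(v, \set{v,u_i}, u_i, \set{u_i, z}, z, \set{z, u_j}, u_j, \set{u_j, v}, v)$ of length four in $G$, contradicting that $G$ is $C_4$-free. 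Hence each distance-$2$ endpoint of $f$ admits at most one attaching $e_i$, so $\size{\inset{e \in D_v(0)}{f \in Sect_e(2)}} \le 2$. Summing over $f \in D_v(2)$ yields $\sum_{e \in D_v(0)} \size{Sect_e(2)} \le 2\size{D_v(2)}$.
\end{proof}
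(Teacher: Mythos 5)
Your proof is correct and follows essentially the same route as the paper: you bound, for each $f \in D_v(2)$, the number of $0$-$1$ edges $e$ with $f \in Sect_e(2)$ by two, using $C_4$-freeness to show that each distance-$2$ endpoint of $f$ can be adjacent to at most one neighbor of $v$ (you re-derive this inline, while the paper cites Lemma~\ref{lem:c4} and a pigeonhole argument on three sets). The only difference is presentational — your explicit handling of the $D_v(2,3)$ case and of distances in $G(X)$ is a slightly more careful rendering of the same counting argument.
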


\begin{proof}
  We show that $\bigcup_{e \in D_v(0)} Sect_e(2) = D_v(2)$.
  Let $f = \set{x, y}$ be an edge in $Sect_e(2)$.
  By definition, $f \in D_v(2)$ holds.
  Without loss of generality, 
  we can assume that $dist_G(x, v) = 2$. 
  Since $dist_G(x, v) = 2$, 
  there is a vertex $w$ satisfying $dist_G(x, w) = dist_G(w, v) = 1$. 
  By definition, $f$ belongs to $Sect_{\set{w, v}}(2)$.
  Hence, $\bigcup_{e \in D_v(0)} Sect_e(2) = D_v(2)$ holds. 

  Next, we assume that for any $2$-$*$ edge $f \in D_v(2)$, 
  $f$ belongs to the following three sets; 
  $Sect_{e_1}(2)$, $Sect_{e_2}(2)$, and $Sect_{e_3}(2)$
  , where $e_1, e_2, e_3 \in D_v(0)$. 
  Then, by the definition of $Sect_{e_i}(2)$, 
  $dist_G(e_i, f) = 1$ holds for $i \in \set{1, 2, 3}$.
  Hence, there is some $g_i = \set{x_i, y_i}$ 
  satisfying  $x_i \in e_i$ and $y_i \in f$. 
  By the definition of $e_i$,
  $dist_G(v, x_i) = 1$ and $dist_G(v, y_i) = 2$ hold.
  This implies that $g_i$ is a $1$-$2$ edge
  that shares the end point with $f$. 
  By the pigeonhole principle, 
  one of the end points of $f$ has at least  two $1$-$2$ edges.
  This contradicts with Lemma~\ref{lem:c4}, 
  hence the statement holds. 
\end{proof}

In the remaining of this section, 
we show that \EnumIM enumerates all solutions in constant amortized time per solution. 
To show the complexity, 
we show that the ratio between the number of $1$-child iterations and $0$-child iterations is constant.  
If the statement holds, 
then the number of iterations on $\sig T$ is linear in the number of leaf iterations of $\sig T$.
Let $X$ be any iteration in $\sig T$ and $v$ be the pivot on $X$.
Suppose that $e = \set{x, y}$ is any edge in $\in D^\le_v(2)$ and $f = \set{v, x}$. 
We denote by $C(X, e)$ a descendant iteration  of $X$
such that $Y = C(X, e)$ is the top of the chain including $X$ and receives $M(Y)$ defined as follows. 

\renewcommand{\labelenumi}{(\arabic{enumi})}
\begin{enumerate}
    \item[(1)] If $e$ is a $0$-$1$ edge, then $M(Y) = M(X) \cup \set{e}$.
    \item[(2)] If $e$ is a $1$-$1$ edge, then $M(Y) = M(X) \cup \set{f}$.
    \item[(3.a)] If $e$ is a $1$-$2$ edge and $Sect_f(2) = \emptyset$,
               then $M(Y) = M(X) \cup \set{e}$.
    \item[(3.b)] If $e$ is a $1$-$2$ edge and $Sect_f(2) \neq \emptyset$, 
               then $M(Y) = M(X) \cup \set{f', g}$, where  
               $f' \in D_v(0)$ and  $g \in Sect_f(2)$ such that 
               $f' \neq f$ and $dist_G(f', g) = 2$. 
    \item[(4)] If $e$ is $2$-$*$ edge,
               then $M(Y) = M(X) \cup \set{e, h}$, where $h$ is an edge 
               such that $h$ is adjacent to $v$ and $dist_G(e, h) = 2$. 
\end{enumerate}
We call $C(X, e)$ the \name{corresponding iteration to $X$ w.r.t $e$}. 
The next lemma shows that $C(X, e)$ satisfying the above conditions always exists. 

\begin{lemma}
  \label{lem:charge}
  For any iteration $X$ and $e \in D^\le_v(2)$, 
  there always exists the corresponding iteration $C(X, e)$ to $X$ w.r.t $e$. 
\end{lemma}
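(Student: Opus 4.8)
The plan is to verify the lemma by a direct case analysis matching the five cases (1), (2), (3.a), (3.b), (4) used to define $C(X,e)$: in each case I will exhibit an explicit descendant $Y$ of $X$ that is a type-$1$ child of its parent (hence the top of the chain it heads) and carries the prescribed $M(Y)$. Throughout I use Lemma~\ref{lem:del2:mono}, Lemma~\ref{lem:G:X.0}, Lemma~\ref{lem:G:X.i}, and the standing assumption of this subsection that the pivot $v$ satisfies $d_{G(X)}(v)\geq 3$. Since $e\in D^\le_v(2)\subseteq E(G(X))$, the iteration $X$ has a pivot, so it is internal and every branch used below exists.

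Cases (1) and (2) are immediate. If $e\in D_v(0)$ then $e$ is incident to $v$ in $G(X)$, so $e=e_i$ for some $i$ and $Y:=X.i$ works. If $e=\set{x,y}\in D_v(1,1)$ then $f=\set{v,x}\in E(G(X))$ is incident to $v$, so $f=e_i$ for some $i$ and $Y:=X.i$ works. In the remaining cases the corresponding iteration is not a child of $X$; the construction is: (in cases (3.b) and (4)) take the $1$-branch at $X$ that adds the auxiliary incident edge — $f'$ in case (3.b), $h$ in case (4) — obtaining the child $X.i$; then descend along $0$-branches until some endpoint of the remaining edge ($g$ in case (3.b); $e$ in cases (3.a) and (4)) first becomes the pivot, reaching an iteration $W$; finally take at $W$ the $1$-branch that adds that edge, obtaining $Y:=W.j$. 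Three points need checking. (i) \emph{The auxiliary edge exists at distance exactly two from the other edge.} By Lemma~\ref{lem:c4}, in $G(X)$ every vertex at distance $2$ from $v$ is the endpoint of exactly one $1$-$2$ edge of $v$, hence has a \emph{unique} neighbour of $v$ adjacent to it, and no vertex at distance $3$ from $v$ is adjacent to any neighbour of $v$; so the set of neighbours of $v$ that are adjacent to (or equal to) an endpoint of the pertinent edge has size at most two. Since $d_{G(X)}(v)\geq 3$, a neighbour $x'$ (resp.\ $w$) of $v$ outside this set exists, and $f'=\set{v,x'}$ (resp.\ $h=\set{v,w}$) then has $dist_{G(X)}(f',g)=2$ (resp.\ $dist_{G(X)}(e,h)=2$), the unique distance-$2$ endpoint realizing the value $2$ while no other pair gets closer; also $f'\neq f$ since $x'$ is the only neighbour of $v$ adjacent to that endpoint's mate. (ii) \emph{The remaining edge survives the deletions.} It lies in $D_v(2)$, hence is untouched by $D_v(0)\cup D_v(1)$, and is not in the chosen auxiliary edge's $Sect_{\bullet}(2)$-set because that edge was chosen at distance $2$; so by Lemma~\ref{lem:G:X.i} it survives into $G(X.i)$, and by Lemma~\ref{lem:G:X.0} each $0$-branch deletes only the current pivot, so it persists until one of its endpoints is chosen as pivot. (iii) \emph{The descent reaches such a $W$ before the graph empties}, because the surviving edge witnesses $E(G(\cdot))\neq\emptyset$ all the way down, so by finiteness one of its endpoints is eventually chosen as pivot; at that $W$ the edge is incident to the pivot, so $W$ branches on it and $Y:=W.j$ exists. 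Finally, the algorithm branches on every edge incident to the pivot and, by Lemma~\ref{lem:c4free:dist}, always extends $M$ to an induced matching, while $M$ is unchanged along $0$-branches, so $M(Y)$ is exactly the prescribed set.

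The main obstacle is point (i) for cases (3.b) and (4): producing an edge incident to $v$ at distance \emph{exactly} two from the other edge. Without $C_4$-freeness this can genuinely fail, since many neighbours of $v$ could cluster around a single vertex at distance $2$ and kill every candidate; the crux is therefore to combine Lemma~\ref{lem:c4} (the "unique witness neighbour" property, which is precisely where the $C_4$-free hypothesis on the input is consumed) with the degree bound $d_{G(X)}(v)\geq 3$, being careful to run the argument inside $G(X)$, itself an induced and hence $C_4$-free subgraph of $G$.
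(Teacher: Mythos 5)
Your proof is correct and takes essentially the same route as the paper: the same five-case analysis, with the only non-trivial point --- producing, in cases (3.b) and (4), an edge incident to $v$ at distance exactly two from the other edge --- settled exactly as in the paper by combining Lemma~\ref{lem:c4} with the assumption $d_{G(X)}(v)\ge 3$. The difference is only in how existence of the iteration is closed: the paper checks that the prescribed edge set is an induced matching and appeals to Lemma~\ref{lem:output_all}, whereas you build the witness iteration explicitly (one $1$-branch, a run of $0$-branches, a final $1$-branch) and verify edge survival via Lemmas~\ref{lem:G:X.0} and~\ref{lem:G:X.i}, which is a more detailed rendering of the same argument.
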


\begin{proof}
  Let $e = (x, y)$.
  From Lemma~\ref{lem:output_all},
  to proof the lemma,  all we have to do is show that $M(C(X, e))$ is a solution. 
  If (1) or (3.a) holds, then $M(C(X, e))$ is obviously an induced matching since $e \in D^\le_v(2)$.  
  Next, we consider condition (2). 
  There are two edges $\set{v, x} = f$ and $\set{y, v}$ since $e$ is a $1$-$1$ edge.
  Since $f \in D^\le_v(2)$, $M(C(X, e))$ is a solution. 
  Next, we consider condition (3.b).
  Let $f = \set{v, x}$.
  Since $e$ is a $1$-$2$ edge, such edge $f$ always exists.
  Let $g$ be any edge in $Sect_f(2)$.
  From Lemma~\ref{lem:c4},
  at least one of the end points of $g$ connects exactly one $1$-$2$ edge.
  Hence, there is an edge $f' \neq f$ that is adjacent to $v$ and satisfies
  $dist_G(f', g) = 2$ since the degree of $v$ is at least three.
  Moreover,  $\set{f', g}$ is an induced matching 
  since $dist_G(f', g) = 2$. 
  Hence, $M(C(X, e))$ is an induced matching. 
  Finally, we consider condition (4).
  Since $d_G(v) \ge 3$, there exists an edge $h$ that is adjacent to $v$ and satisfies $dist_G(e, h)$.
  Hence, $M(C(X, e))$ is an induced matching. 
\end{proof}

In the following lemmas, 
for any iteration $X$, 
we show the number of pairs of an iteration and an edge 
whose corresponding iteration is $X$ is constant.

\begin{lemma}
  \label{lem:one}
  If a graph $G$ is $C_4$-free, 
  then the number of $1$-$1$ edges adjacent to $0$-$1$ edges is at most one. 
\end{lemma}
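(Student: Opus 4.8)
The plan is to unwind the definitions and then argue by contradiction, forcing a cycle of length four. A $1$-$1$ edge of the pivot $v$ is by definition an edge $\set{a,b} \in D_v(1,1)$, so both endpoints lie at distance one from $v$, i.e. $a,b \in N_G(v)$; a $0$-$1$ edge is an edge $\set{v,c} \in D_v(0,1)$, i.e. an edge joining $v$ to one of its neighbours. The first thing I would record is that every $1$-$1$ edge $\set{a,b}$ is automatically adjacent to the two $0$-$1$ edges $\set{v,a}$ and $\set{v,b}$, which are present precisely because $a,b \in N_G(v)$. Hence the predicate ``adjacent to $0$-$1$ edges'' adds no restriction, and the claim reduces to bounding the \emph{total} number of $1$-$1$ edges of $v$ by one. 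It is this global count that I would target directly.

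To bound the count I would assume for contradiction that two distinct $1$-$1$ edges $e_1=\set{a,b}$ and $e_2=\set{c,d}$ exist, with $a,b,c,d \in N_G(v)$, and try to exhibit a $C_4$ in $G$. The clean branch is when $e_1$ and $e_2$ share an endpoint, say $b=c$: then $a,b,d$ are distinct, the edges $\set{v,a}$, $\set{a,b}$, $\set{b,d}$, $\set{d,v}$ are all present, and together they close the cycle $v,a,b,d,v$ of length four, contradicting $C_4$-freeness. Since $C_4$-freeness is hereditary, it does not matter whether distances are read in $G$ or in the induced subgraph $G(X)$, so this branch needs nothing beyond the definitions.

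The hard part will be the vertex-disjoint case, which I expect to be the real obstacle. If $\set{a,b}$ and $\set{c,d}$ share no endpoint, then the only edges among $\set{v,a,b,c,d}$ that the hypotheses force are the four $0$-$1$ edges together with $e_1$ and $e_2$; checking the four-vertex subsets shows that this ``two triangles glued at $v$'' configuration contains no $C_4$, so $C_4$-freeness by itself produces no contradiction. The configuration the argument must somehow exclude is exactly a windmill around $v$ (many triangles sharing only $v$), which is $C_4$-free yet carries arbitrarily many pairwise disjoint $1$-$1$ edges. Closing this case therefore cannot rest on $C_4$-freeness alone: I would have to import the additional structure supplied by the surrounding charging analysis (used immediately after Lemma~\ref{lem:charge}), namely that the $1$-$1$ edges being counted all meet one common neighbour of $v$, which throws us back into the shared-endpoint situation where the four-cycle argument applies. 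Making that reduction precise, and verifying that the pivot choice actually guarantees it, is the step I anticipate demanding the most care.
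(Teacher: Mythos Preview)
The gap here is a misreading of the statement, not a mathematical oversight. The lemma is not asserting that the total number of $1$-$1$ edges of $v$ is at most one; it is asserting that for any \emph{fixed} $0$-$1$ edge $e=\set{v,u}$, at most one $1$-$1$ edge is adjacent to $e$. The paper's wording is admittedly loose, but both its proof and its single use in Lemma~\ref{lem:c3a} (``the number of $1$-$1$ edges adjacent to $f_1$ is at most one'') make this reading unambiguous.

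With the correct reading your ``vertex-disjoint'' case simply cannot occur. A $1$-$1$ edge has both endpoints in $N(v)$ and neither equal to $v$; hence if it is adjacent to the fixed $0$-$1$ edge $e=\set{v,u}$ it must share the endpoint $u$. Two such $1$-$1$ edges are therefore forced to have the form $\set{u,w}$ and $\set{u,x}$ with $w\neq x$ and $w,x\in N(v)$, which is exactly your shared-endpoint branch, and the cycle $v,w,u,x,v$ gives the contradiction. That argument \emph{is} the paper's proof; there is no windmill obstruction and no need to import structure from the surrounding charging analysis.
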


\begin{proof}
  We show the lemma by contradiction. 
  Suppose there are two distinct $1$-$1$ edges 
  $f = \set{u, w}$ and $g = \set{u, x}$ that are  adjacent to a $0$-$1$ edge $e$. 
  By the definition of a $1$-$1$ edge, 
  $\set{u, w, x} \subseteq N(v)$.
  Hence, there exist two distinct edges $f' = \set{v, w}$ and $g' = \set{v, x}$. 
  However, this implies that there exist a $4$-cycle 
  $(v, f', w, f, u, g, x, g', v)$.
  This contradicts that $G$ is $C_4$-free. 
  Hence, the statement holds. 
\end{proof}

For the proofs of the next lemmas, see Appendix~\ref{app}. 

\begin{lemma}
  \label{lem:c3a}
  Let $X$ and $e$ be 
  a pair of an iteration on $\sig T$ and an edge in $D_v^\le(2)$ 
  satisfying condition (3.a), 
  and $Y$ be any iteration on $\sig L$ from $X$ to $C(X, e)$
  such that $Y$ satisfies $C(Y, e) = C(X, e)$.  
  Then, the number of such $Y$  is at most two. 
\end{lemma}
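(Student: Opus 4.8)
The plan is to first nail down the shape of $\sig{L}$, and then let $C_4$-freeness do the counting. Write $\sig{L}=(X=X_0,X_1,\dots,X_k=C(X,e))$, let $v$ be the pivot on $X$ and $e=\set{x,y}$ with $dist_{G(X)}(v,x)=1$ and $dist_{G(X)}(v,y)=2$, as condition~(3.a) prescribes. Since $M(C(X,e))=M(X)\cup\set{e}$ and matchings grow monotonically along downward paths (Lemma~\ref{lem:del2:mono}), $\sig{L}$ is a run of $0$-branches $X_0,\dots,X_{k-1}$, all carrying the matching $M(X)$, followed by a single $1$-branch $X_{k-1}\to X_k$ adding $e$. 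Hence $e\in D_{v(X_{k-1})}(0)$, so the pivot of $X_{k-1}$ is an endpoint of $e$, whereas each earlier $0$-branch deletes only its own pivot (Lemma~\ref{lem:G:X.0}) while $x,y\in V(G(X_{k-1}))$, so none of the pivots of $X_0,\dots,X_{k-2}$ lies on $e$. As $v=v(X_0)$ is not on $e$ either, $X_0\neq X_{k-1}$, and an iteration $Y\in\sig{L}$ can be paired with $e$ under condition~(3.a) only for $Y\in\set{X_0,\dots,X_{k-2}}$; any such $Y$ has recipe $M(Y)\cup\set{e}$, and since the run of $0$-branches issuing from $Y$ first meets a pivot lying on $e$ exactly at $X_{k-1}$, we get $C(Y,e)=X_k=C(X,e)$. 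So it suffices to show that, besides $X_0=X$, at most one member of $\set{X_1,\dots,X_{k-2}}$ is paired with $e$ under~(3.a).

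To that end I would extract from $Sect_f(2)=\emptyset$ (with $f=\set{v,x}$), together with $C_4$-freeness, the following local picture in $G(X)$: (i)~$N_{G(X)}(y)=\set{x}$; and (ii)~every vertex of $N_{G(X)}(x)$ of degree at least two is $v$ or the common neighbour $c$ of $v$ and $x$ (unique, if it exists, by $C_4$-freeness). For~(i): a neighbour $b\neq x$ of $y$ with $dist_{G(X)}(v,b)\ge 2$ would make $\set{y,b}$ an edge of $D_v(2)$ adjacent to $f$ (via the edge $\set{x,y}$), hence an element of $Sect_f(2)$ — impossible; so every neighbour $b\neq x$ of $y$ lies in $N_{G(X)}(v)$, and since $v,y$ have at most one common neighbour and $x$ is one, $N_{G(X)}(y)=\set{x}$. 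For~(ii): if $a\in N_{G(X)}(x)$ with $a\neq v$ and $a\notin N_{G(X)}(v)$, then $dist_{G(X)}(v,a)=2$, and a neighbour $b\neq x$ of $a$ would either put $\set{a,b}$ into $Sect_f(2)$ (if $dist_{G(X)}(v,b)\ge 2$) or create a $4$-cycle through $a,x,v,b$ (if $b\in N_{G(X)}(v)$); both are impossible, so $a$ has degree one. Passing to the vertex-deleted subgraphs and using $e\in E(G(X_j))$ for $j\le k-1$ (Lemma~\ref{lem:del2:mono}), fact~(i) persists: $N_{G(X_j)}(y)=\set{x}$ for all such $j$.

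Now suppose $1\le j\le k-2$ and $(X_j,e)$ satisfies~(3.a) with pivot $w=v(X_j)$. Since $w\notin e$ and $N_{G(X_j)}(y)=\set{x}$, the endpoint of $e$ at distance one from $w$ cannot be $y$, so $w\in N_{G(X_j)}(x)\subseteq N_{G(X)}(x)$; and $w$, being of maximum degree in $G(X_j)$ with $d_{G(X_j)}(x)\ge 2$ (the distinct vertices $w$ and $y$ are neighbours of $x$ there), satisfies $d_{G(X)}(w)\ge 2$. By~(ii) and $w\neq v$ (the pivot $v$ is deleted before $X_1$), $w=c$. Since the pivots $v(X_1),\dots,v(X_{k-2})$ are pairwise distinct, at most one of them equals $c$, so at most one $X_j$ with $1\le j\le k-2$ is paired with $e$ under~(3.a). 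Adding $X_0$ gives at most two.

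The point I expect to require the most care is fact~(ii): one must keep the distances to $v$ in $G(X)$ aligned with incidences at $x$ while discharging the two sub-cases of $Sect_f(2)=\emptyset$, and verify that $a,x,v,b$ are four distinct vertices forming a $4$-cycle. The remaining ingredients — the shape of $\sig{L}$, the degree-one fact~(i), and the final pigeonhole over the distinct pivots of a $0$-chain — are routine once the structure around $x$ is in hand.
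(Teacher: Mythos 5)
Your proof is correct and takes essentially the same route as the paper's: first the observation that $\sig L$ is a run of $0$-branches ending in a single $1$-branch, then the combination of $Sect_f(2)=\emptyset$, $C_4$-freeness, and the maximum-degree choice of the pivot to force any later iteration pairing with $e$ under condition (3.a) to have its pivot among the degree-at-least-two neighbours of $x$, namely the unique common neighbour of $v$ and $x$ (which is exactly the content of Lemma~\ref{lem:one}). The only difference is presentational: you count directly via this uniqueness, where the paper derives a contradiction from three such iterations, but the ingredients and the final degree comparison are the same.
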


\begin{lemma}
  \label{lem:corresponded_rec}
  Let $X$ be any iteration in $\sig T$.
  Then, the number of pairs an iteration $Y$ and an edge $e$ satisfying 
  $C(Y, e) = X$ is at most six. 
\end{lemma}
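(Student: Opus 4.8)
The plan is to fix an iteration $X$ and count, case by case over the definition of the corresponding iteration, how many pairs $(Y,e)$ with $e \in D^\le_{v_Y}(2)$ (where $v_Y$ is the pivot on $Y$) can have $C(Y,e) = X$. Since $X = C(Y,e)$ is the top of a chain of $0$-branches hanging below $Y$, and $M(X)$ is obtained from $M(Y)$ by adding a bounded edge set, $Y$ is determined once we know (a) which chain $X$ sits at the top of, i.e.\ which ancestor $Y$ reaches $X$ via $0$-branches, and (b) which of the five cases (1), (2), (3.a), (3.b), (4) applies together with the edge(s) witnessing it. So the real content is: for each case, bound the number of admissible $(Y,e)$, and then observe these possibilities are mutually exclusive or overlap in a controlled way, so the totals add to at most six.

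First I would handle the "from $X$ looking up" direction: for a given $X$, the iteration $Y$ with $C(Y,e)=X$ must be the unique ancestor of $X$ such that the path from $Y$ down to $X$ consists of a $1$-branch out of $Y$ followed by a (possibly empty) maximal chain of $0$-branches — equivalently, $Y$ is the parent of the top-of-chain ancestor of $X$, or $X$ itself is already a $1$-child. In cases (1), (2), (3.b), (4) the edge $e_Y$ incident to $v_Y$ used by that $1$-branch is forced by $M(X)\setminus M(Y)$, and $M(X)\setminus M(Y)$ is itself forced: it is a single edge in (1), (2), (3.a), and a two-element set in (3.b), (4). Once $M(X)\setminus M(Y)$ is known and $Y$ is known, the inverse question is: given the single edge $\set{v_Y,x}$ (or the pair) added at $Y$, how many edges $e \in D^\le_{v_Y}(2)$ map to it under the five clauses? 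Clause (1) gives $e$ itself, one choice. Clause (2): $e$ is a $1$-$1$ edge whose $f=\set{v_Y,x}$-part matches, and by Lemma~\ref{lem:one} there is at most one $1$-$1$ edge adjacent to a given $0$-$1$ edge, so at most one. Clause (3.a): by Lemma~\ref{lem:c3a} the chain-internal multiplicity is at most two. Clause (3.b): $e$ is a $1$-$2$ edge with $\set{v_Y,x}=f$; the added pair is $\set{f',g}$ with $g\in Sect_f(2)$; by Lemma~\ref{lem:c4} each endpoint of $g$ carries exactly one $1$-$2$ edge, which pins down $f$ hence $e$, giving at most one. Clause (4): $e$ is a $2$-$*$ edge and $M(X)\setminus M(Y)=\set{e,h}$ already names $e$, one choice.

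Summing the per-case contributions gives $1+1+2+1+1 = 6$, which is the claimed bound; I would then argue these cases cannot be "double-charged" because the structure of $M(X)\setminus M(Y)$ (single edge of a given distance type vs.\ a specified pair) distinguishes which clause produced $X$, except that (1), (2), (3.a) all add a single edge and (3.b), (4) both add a pair, so within each of those two groups I need the per-case counts to be simultaneously valid — which is exactly what the sum $1+1+2$ (for single-edge clauses) and $1+1$ (for pair clauses) accounts for, totalling six. The main obstacle I expect is clause (3.a): unlike the other clauses the top-of-chain iteration $C(X,e)$ is reached by adding the \emph{same} edge $e$ that would be added in clause (1)-style at the eventual $1$-branch, so several consecutive iterations $Y$ along a $0$-chain can share the same $C(Y,e)$; Lemma~\ref{lem:c3a} is precisely the tool that caps this at two, and I would lean on it rather than re-deriving the $C_4$-freeness argument. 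A secondary subtlety is making sure that "$Y$ is an ancestor of $X$ reached by a $1$-branch-then-$0$-chain" is genuinely forced by $C(Y,e)=X$ — this follows from the definition of $C(\cdot,\cdot)$ as "top of the chain including $X$" together with Lemma~\ref{lem:del2:mono}, which guarantees $M$ only grows downward so $M(X)\setminus M(Y)$ determines where $e$ (or $f$, $f'$, $g$, $h$) was added.
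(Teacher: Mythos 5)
Your overall plan (case analysis over the five clauses, using Lemma~\ref{lem:one} for clause (2), Lemma~\ref{lem:c3a} for clause (3.a), and Lemma~\ref{lem:c4} for clause (3.b)) is the same as the paper's, but there is a genuine gap in the structural step that anchors the count. You assert that the iteration $Y$ with $C(Y,e)=X$ is \emph{unique}, namely the parent of the top-of-chain ancestor of $X$, with the path from $Y$ to $X$ being a $1$-branch out of $Y$ followed by a chain of $0$-branches. This has the geometry backwards: by the definition of $C(Y,e)$ (used explicitly in the proof of Lemma~\ref{lem:c3a}), $C(Y,e)$ is itself a top of a chain, i.e.\ a $1$-child, so the \emph{last} branch into $X$ is a $1$-branch and the run of $0$-branches comes before it; moreover $Y$ is not unique. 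In particular, clauses (3.b) and (4) add \emph{two} edges, so the path from $Y$ to $X$ must contain two $1$-branches, which is impossible under your description (exactly one $1$-branch out of $Y$); you never locate the correct $Y$ for these clauses. The paper's proof does exactly this bookkeeping: writing $X'$ for the parent of $X$ and $X''$ for the parent of the top of the chain containing $X'$, every admissible $Y$ lies on the path from $X''$ to $X'$, with $Y=X'$ accounting for clauses (1),(2) (at most two edges), $Y=X''$ for clauses (3.b),(4) (at most two edges), and the intermediate iterations for clause (3.a) (at most two pairs, by Lemma~\ref{lem:c3a}), giving $2+2+2=6$. Your tally $1+1+2+1+1$ lands on the same number, but since the candidate $Y$'s are misidentified, the case analysis as written does not establish that nothing else can be charged to $X$; indeed your uniqueness claim is contradicted by your own (correct) appeal to Lemma~\ref{lem:c3a}, which only makes sense because several iterations along the $0$-chain above $X'$ can charge to $X$.

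A second, smaller gap is the clause (3.b) count. The inference ``by Lemma~\ref{lem:c4} each endpoint of $g$ carries exactly one $1$-$2$ edge, which pins down $f$ hence $e$'' does not follow: Lemma~\ref{lem:c4} controls $1$-$2$ edges at vertices at distance \emph{two} from the pivot, so for fixed $g$ it pins down, per distance-two endpoint of $g$, the distance-one vertex $x$ and hence $f=\set{v,x}$ (and $g$ may have two such endpoints, so already up to two choices of $f$), but it says nothing about which $1$-$2$ edge $e$ through $x$ is charged --- every $1$-$2$ edge with distance-one endpoint $x$ has the same $f$ and the same $Sect_f(2)$, so ``hence $e$'' is a non sequitur. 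The paper's own handling of this case (grouping it under $Y=X''$ together with clause (4)) is terse, but your stated justification would need to be replaced by an argument about how the pair $\set{f',g}$ in the definition of $C(Y,e)$ is fixed by $X$ and $X''$, not by Lemma~\ref{lem:c4} alone.
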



From Lemma~\ref{lem:one}, Lemma~\ref{lem:c3a}, and
Lemma~\ref{lem:corresponded_rec}, 
for any iteration $X \in \sig T$, 
the number of pairs of an iteration $Y$ and an edge $e$ such that $C(Y, e) = X$ 
is constant.
Next, the following lemmas show that total computation time in \EnumIM is
$\order{\size{\sig T}}$ time. 
Let $F(X)$ be $\bigcap_{i = 0, \dots, \Delta(G(X))} E(G(X.i))$.
That is, $F(X)$ is the set of edges that are shared by all child iterations of $X$. 

\begin{lemma}
  \label{lem:fx}
  Let $v$ be the pivot on an iteration $X$ in $\sig T$. 
  Then, $E(G(X)) \setminus F(X) = D^\le_v(2)$. 
\end{lemma}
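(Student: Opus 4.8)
The plan is to prove the set equality $E(G(X)) \setminus F(X) = D^\le_v(2)$ by double inclusion, using the explicit descriptions of the child graphs from Lemmas~\ref{lem:G:X.0} and~\ref{lem:G:X.i} together with the definition $F(X) = \bigcap_{i} E(G(X.i))$. First I would recall that by Lemma~\ref{lem:G:X.0}, $E(G(X.0)) = E(G(X)) \setminus \{\text{edges incident to } v\}$, and by Lemma~\ref{lem:G:X.i}, $E(G(X.i)) = E(G(X)) \setminus (D_v(0) \cup D_v(1) \cup Sect_{e_i}(2))$ for each positive $i$. So an edge of $G(X)$ belongs to $F(X)$ iff it survives in \emph{every} child, i.e., iff it is not incident to $v$ and it lies outside $D_v(0) \cup D_v(1) \cup Sect_{e_i}(2)$ for all $i$. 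Equivalently, $E(G(X)) \setminus F(X)$ is the union over all children of the edges each child removes, namely $\{\text{edges incident to } v\} \cup D_v(0) \cup D_v(1) \cup \bigcup_i Sect_{e_i}(2)$.

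For the inclusion $D^\le_v(2) \subseteq E(G(X)) \setminus F(X)$: recall $D^\le_v(2) = D_v(0) \cup D_v(1) \cup D_v(2)$. The first two pieces are removed by every type-$1$ child, hence lie outside $F(X)$. For $D_v(2)$, I would invoke the identity $\bigcup_{e \in D_v(0)} Sect_e(2) = D_v(2)$, which is established in the proof of Lemma~\ref{lem:cat}; thus every $2$-$*$ edge is removed by at least one type-$1$ child and so is excluded from $F(X)$. For the reverse inclusion $E(G(X)) \setminus F(X) \subseteq D^\le_v(2)$: take an edge $e$ removed by some child. If $e$ is incident to $v$, then $e \in D_v(0) \cup D_v(1, 1) \subseteq D^\le_v(2)$ (its endpoints are at distance $0$ or $1$ from $v$). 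If $e \in D_v(0) \cup D_v(1)$, we are done immediately. If $e \in Sect_{e_i}(2) \subseteq D_v(2)$, then $e \in D^\le_v(2)$. In every case $e \in D^\le_v(2)$, which finishes this direction.

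The only mildly delicate point — and the step I would treat most carefully — is handling the edges \emph{incident to} $v$ that the type-$0$ child removes but that are not obviously captured by $D_v(0) \cup D_v(1)$: one must observe that $D_v(0) = D_v(0,0) \sqcup D_v(0,1)$ consists exactly of the edges with an endpoint equal to $v$ (since $D_v(0,0)$ is empty, as there are no self-loops, these are the $0$-$1$ edges), so \emph{every} edge incident to $v$ already lies in $D_v(0)$ and is therefore removed by every type-$1$ child as well; there is no edge incident to $v$ that escapes $D_v(0)$. With that observation the union describing $E(G(X)) \setminus F(X)$ simplifies to exactly $D_v(0) \cup D_v(1) \cup \bigcup_i Sect_{e_i}(2) = D_v(0) \cup D_v(1) \cup D_v(2) = D^\le_v(2)$, completing the proof. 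I expect the argument to be short; no real obstacle arises once the child-graph descriptions and the $Sect$-covering identity are in hand.
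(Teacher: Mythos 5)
Your proof is correct and follows essentially the same route as the paper's: a double inclusion based on the characterizations of the child graphs in Lemmas~\ref{lem:G:X.0} and~\ref{lem:G:X.i}, with the covering identity $\bigcup_{e \in D_v(0)} Sect_e(2) = D_v(2)$ supplying the only nontrivial step for the $2$-$*$ edges. If anything, your De Morgan packaging makes explicit a point the paper's proof leaves implicit (that every edge of $D_v(2)$ lies in some $Sect_{e_i}(2)$ and is hence deleted in that type-$1$ child), so no gap remains.
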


\begin{proof}
  Let $\set{e_1, \dots, e_{d_{G(X)}(v)}}$  be 
  the set of edges that are adjacent to $v$. 
  We show $F(X) = E(G(X)) \setminus D^\le_v(2)$.
  Firstly, we show $F(X) \subseteq E(G(X)) \setminus D^\le_v(2)$. 
  For any $i = 0, \dots, d_{G(X)}(v)$,
  $M(G(X.i))$ includes $e_i = \set{v, u_i}$ by definition.  
  Hence, $F(X)$ does not include conflicting edges of $e_i$.
  In addition, each edge $f \in F(X)$ satisfies
  $dist_{G(X)}(f, v) \ge 2$ and $dist_{G(X)}(f, u_i) \ge 2$. 
  Therefore, $f$ is not included in $D^\le_v(2)$. 
  Secondly, we show $F(X) \supseteq E(G(X)) \setminus D^\le_v(2)$. 
  Let $g$ be any edge in $E(G(X)) \setminus D^\le_v(2)$.
  By definition, $dist_{G(X)}(g, e_i) \ge 2$ holds for any $e_i$. 
  Therefore, $g \in F(X)$ since $g \in E(G(X.i))$. 
  Now, $D^\le_v(2) \subseteq E(G(X))$ and  $F(X) \cap D^\le_v(2) = \emptyset$. 
  Hence, the statement holds. 
\end{proof}

\begin{lemma}
  \label{lem:tree_size}
  $\sum_{X \in V(\sig T)} \size{E(G(X)) \setminus F(X)}$ is bounded by 
  $\order{\size{\sig T}}$.
\end{lemma}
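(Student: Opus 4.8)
The plan is to combine Lemma~\ref{lem:fx}, which identifies $E(G(X)) \setminus F(X)$ with $D^\le_v(2)$, with the charging scheme built around the corresponding-iteration map $C(\cdot,\cdot)$. First I would rewrite the target sum as $\sum_{X \in V(\sig T)} \size{D^\le_v(2)}$, where $v$ is the pivot of $X$. The idea is to charge the edges counted in each $\size{D^\le_v(2)}$ to distinct iterations of $\sig T$, in such a way that each iteration receives only $\order{1}$ total charge; summing then gives the $\order{\size{\sig T}}$ bound.

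The charging goes through $C(X,e)$. Fix an iteration $X$ with pivot $v$. For every edge $e \in D^\le_v(2)$, Lemma~\ref{lem:charge} guarantees that the corresponding iteration $C(X,e)$ exists, so we may charge the ``$e$-slot'' of $\size{D^\le_v(2)}$ to the iteration $C(X,e)$. Thus $\sum_{X} \size{D^\le_v(2)}$ equals the total number of pairs $(X,e)$ with $e \in D^\le_v(2)$, and each such pair is mapped to the iteration $C(X,e)$. Now I would invoke the bound already assembled from Lemma~\ref{lem:one}, Lemma~\ref{lem:c3a} and Lemma~\ref{lem:corresponded_rec}: for any fixed iteration $Y \in \sig T$, the number of pairs $(X,e)$ with $C(X,e) = Y$ is at most a constant (at most six by Lemma~\ref{lem:corresponded_rec}, together with the case analysis handling conditions (1)--(4), where (2) uses Lemma~\ref{lem:one} and (3.a) uses Lemma~\ref{lem:c3a}). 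Hence each iteration $Y$ is the image of $\order{1}$ pairs, and
\[
\sum_{X \in V(\sig T)} \size{E(G(X)) \setminus F(X)}
 = \sum_{X \in V(\sig T)} \size{D^\le_v(2)}
 = \#\set{(X,e) : e \in D^\le_v(2)}
 \le 6 \, \size{\sig T} = \order{\size{\sig T}}.
\]

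The main obstacle is making sure the map $(X,e) \mapsto C(X,e)$ really does account for \emph{every} edge of $D^\le_v(2)$ with the right multiplicity, i.e. that the five cases (1), (2), (3.a), (3.b), (4) in the definition of $C(X,e)$ exhaust $D^\le_v(2) = D_v(0) \sqcup D_v(1) \sqcup D_v(2)$: a $0$-$1$ edge falls under (1), a $1$-$1$ edge under (2), a $1$-$2$ edge under (3.a) or (3.b) according to whether $Sect_f(2)$ is empty, and a $2$-$*$ edge under (4). Once this exhaustiveness is checked, the inverse-image bound is exactly what Lemmas~\ref{lem:one}, \ref{lem:c3a}, and~\ref{lem:corresponded_rec} provide, and the arithmetic above closes the argument. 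A minor subtlety is that a single pair $(X,e)$ may land under (3.b) with a non-canonical choice of $f'$ and $g$; since we only need an upper bound on the number of preimages, any fixed such choice is fine.
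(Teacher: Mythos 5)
Your proposal is correct and follows essentially the same route as the paper's proof: both use Lemma~\ref{lem:fx} to rewrite the sum as $\sum_X \size{D^\le_v(2)}$, charge each pair $(X,e)$ with $e \in D^\le_v(2)$ to the corresponding iteration $C(X,e)$ (whose existence is Lemma~\ref{lem:charge}), and bound the number of preimages of any iteration by the constant from Lemma~\ref{lem:corresponded_rec}. Your explicit check that cases (1)--(4) exhaust $D^\le_v(2)$ is a welcome clarification of a point the paper leaves implicit in the definition of $C(X,e)$, but it does not change the argument.
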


\begin{proof}
    Let $X$ be any iteration and $v$ be the pivot on $X$. 
    The number of iterations $Y = C(X, e)$ is at most $\size{D^\le_v(2)}$, 
    where $e$ is an edge in $D^\le_v(2)$.
    Hence, the number of all corresponding iterations is equal to $\sum_{X \in V(\sig T)} \size{E(G(X)) \setminus F(X)}$
    since $E(G(X)) \setminus F(X) = D^\le_v(2)$ from Lemma~\ref{lem:fx} 
    together with that  a pair of an internal iteration $X$ and an edge $e \in D^\le_v(2)$ corresponds to
    exactly one iteration $C(X, e)$. 
    Next, 
    we consider the number of all corresponding iterations. 
    The number of pairs of an iteration $Y$ and an edge $e$ such that $C(Y, e) = X$ is at most constant from Lemma~\ref{lem:corresponded_rec}. 
    Since the number of internal iteration is $\size{\sig T}$, 
    the number of all corresponding iterations is bounded by $\order{\size{\sig T}}$, 
    Hence, the statement holds. 
\end{proof}

\begin{theorem}
  The algorithm \EnumIM in Algorithm~\ref{algo:cat:detail} enumerates all induced matchings
  in constant amortized time per solution in a $C_4$-free graph $G$ 
  after $\order{\size{V} + \size{E}}$ preprocessing time. 
\end{theorem}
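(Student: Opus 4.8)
The plan is to keep correctness, already established in Theorem~\ref{theo:c4correct}, separate from the running time, and to bound the running time in three steps: bound the non-recursive work of one iteration $X$ by $\order{1 + \size{E(G(X)) \setminus F(X)}}$; sum this over the enumeration tree and invoke Lemma~\ref{lem:tree_size}, giving $\order{\size{\sig T}}$; and finally show $\size{\sig T} = \order{\size{\sig M}}$. For preprocessing I would, in $\order{\size{V} + \size{E}}$ time, build the adjacency lists of $G$, the layered list $\List{G}$ by bucket sort together with a vertex-to-node pointer array so that $\List{G(X)}$ can be updated in $\order{1}$ per affected vertex, and a boolean array of size $\size{V}$ for marking neighbourhoods.

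For a single iteration $X$ with pivot $v$, the pivot is found in $\order{1}$ time by Lemma~\ref{lem:Lconst}. I would then explore the radius-two ball around $v$: mark $N[v]$; scan the edges incident to $N(v)$, which produces $D_v(0)$ and $D_v(1)$ and marks the vertices at distance exactly two; then scan the edges incident to those vertices, which produces $D_v(2)$. Every edge touched has both endpoints within distance two of $v$, hence lies in $D^\le_v(2)$ and is touched $\order{1}$ times, so this phase runs in $\order{\size{D_v(0)} + \size{D_v(1)} + \size{D_v(2)}}$; within the same budget, and using Lemma~\ref{lem:c4} (each distance-two vertex has a unique neighbour in $N(v)$), I would build all the lists $Sect_{e_i}(2)$. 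Constructing the children uses Lemmas~\ref{lem:G:X.0} and~\ref{lem:G:X.i}: the type-$0$ child graph $G(X)\setminus\set{v}$ is built and, on backtracking, restored in $\order{\size{D_v(0)}}$ time; the algorithm deletes $D_v(0)\cup D_v(1)$ once, in $\order{\size{D_v(0)}+\size{D_v(1)}}$ time, and then for each $e\in D_v(0)$ deletes and restores only $Sect_e(2)$, which totals $\sum_{e\in D_v(0)} 2\size{Sect_e(2)} \le 4\size{D_v(2)}$ by Lemma~\ref{lem:cat}; the restore on line~\ref{step:cat:S2} costs $\order{\size{D_v(0)}+\size{D_v(1)}}$, and every edge and vertex change keeps $\List{G(X)}$ up to date in $\order{1}$. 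Adding up, the non-recursive work at $X$ is $\order{1 + \size{D_v(0)} + \size{D_v(1)} + \size{D_v(2)}} = \order{1 + \size{E(G(X)) \setminus F(X)}}$ by Lemma~\ref{lem:fx}; output is charged $\order{1}$ per tree edge by emitting only the edge added to or removed from $M$ along each $1$-branch and an end-of-solution marker at each leaf.

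Summing over all iterations $X$, the total time after preprocessing is $\sum_{X} \order{1 + \size{E(G(X)) \setminus F(X)}} = \order{\size{\sig T}} + \order{\sum_{X} \size{E(G(X)) \setminus F(X)}} = \order{\size{\sig T}}$ by Lemma~\ref{lem:tree_size}. It remains to show $\size{\sig T} = \order{\size{\sig M}}$. If $X$ is an internal iteration then $E(G(X)) \neq \emptyset$, so its pivot $v$ has $d_{G(X)}(v) \ge 1$, hence $\size{D_v(0)} \ge 1$, so $X$ has its type-$0$ child and at least one type-$1$ child --- at least two children. Counting the edges of $\sig T$, the number $\size{\sig T} - 1$ equals the sum over internal iterations of their numbers of children, which is at least twice the number of internal iterations; hence the internal iterations are strictly fewer than the leaves. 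Since by Theorem~\ref{theo:c4correct} the leaves of $\sig T$ are in bijection with $\sig M$, we get $\size{\sig T} < 2\size{\sig M}$. Hence the algorithm runs in $\order{\size{\sig M}}$ time after $\order{\size{V} + \size{E}}$ preprocessing, i.e.\ in constant amortized time per solution.

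The main obstacle is the per-iteration bound. A naive construction of all $d_{G(X)}(v)+1$ child graphs would cost $\order{\Delta \cdot \size{E(G(X)) \setminus F(X)}}$, so it is essential that the bulk of the work --- the deletion of $D_v(0)\cup D_v(1)$ --- is performed once and shared by all type-$1$ children, and that only the almost-disjoint remainders $Sect_{e_i}(2)$ are handled per child; this is exactly where $C_4$-freeness enters, via Lemma~\ref{lem:c4} and its consequence Lemma~\ref{lem:cat}. A secondary technical point is making $\List{G(X)}$ restore itself exactly on backtracking, so that the $\order{1}$ pivot lookup of Lemma~\ref{lem:Lconst} remains valid at every node; I would handle this by recording, in each iteration, the list moves it performs and undoing them in reverse order when the iteration returns.
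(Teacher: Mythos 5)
Your proposal is correct and follows essentially the same route as the paper: $\order{1}$ pivot selection via $\List{G(X)}$ (Lemma~\ref{lem:Lconst}), charging all per-iteration edge deletions/restorations to $D^\le_v(2) = E(G(X))\setminus F(X)$ using Lemmas~\ref{lem:c4}, \ref{lem:cat}, and \ref{lem:fx}, summing to $\order{\size{\sig T}}$ by Lemma~\ref{lem:tree_size}, and concluding with the observation that every internal iteration has at least two children so $\size{\sig T} = \order{\size{\sig M}}$. Your write-up merely adds implementation details (marking arrays, undo logs, difference-based output) that the paper leaves implicit.
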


\begin{proof}
    The correctness of \EnumIM is obvious from Lemma~\ref{theo:c4correct}.
    Next, we consider the time complexity of \EnumIM. 
    Let $X$ be an iteration of \EnumIM. 
    In the preprocessing phase,  
    \EnumIM constructs $\List{G}$ in $\order{\size{V} + \size{E}}$ time by using bucket sort.
    Next, we consider the total time for deleting edges in an input graph. 
    Each edge is deleted at most twice from Lemma~\ref{lem:c4} in each iteration. 
    Moreover, 
    from Lemma~\ref{lem:tree_size},
    the total number of deleted edges is $\order{\size{\sig T}}$ in \EnumIM. 
    Hence, the total time of edge deletion is $\order{\size{\sig T}}$ time 
    since each edge can be removed in constant time from the input graph. 
    Next, we consider the total time of the updating $\List{G(X)}$. 
    When \EnumIM removes an edge $e = \set{u, v}$ from $X$,
    \EnumIM moves $u \in L_i$ to $L_{i - 1}$ and $v \in L_j$ to $L_{j - 1}$. 
    Since it can be done in constant time,
    the time complexity of \EnumIM is $\order{\size{\sig T}}$. 
    In addition, 
    every iteration $X$ in $\sig T$ has a child at least two.
    Hence, the number of solutions is $\Omega(\size{\sig T})$. 
    Therefore, \EnumIM runs in $\order{\size{\sig T} / \size{\sig T}} = \order{1}$ time per solution. 
\end{proof}

\section{Counting of Induced Matchings for Other Graph Classes}
\label{secmso}

\newtheorem{proposition}[lemma]{\textbf{Proposition}} 
\newcommand{\Incd}{\textit{IND}}


To complement the result of Sec.~\ref{sec:c4free}, in this section, we present some fixed-parameter tractability results on
counting the number of induced matchings
in terms of descriptive complexity theory%
~\cite{flum2006parameterized}.
See Appendix~\ref{app:fo} for omitted definitions and proofs.  
Recently, Frick~\cite{frick:TCSyst:2004generalized} introduced the notion of \name{locally tree-decomposability} by generalizing the tree decomposition.
He showed the fact that the classes of graphs of bounded degree, of bounded tree-width, and planar graphs are locally tree-decomposable~\cite{frick:TCSyst:2004generalized}. 


\begin{proposition}[Frick~\cite{frick:TCSyst:2004generalized}]
  \label{prop:fpt:fo:frick}
  Let $\sig C$ be any class of locally tree-decomposable structures.
  For any structure $\sig{A} \in \sig C$, a counting problem $\Pi$ definable in FO can be solved in linear time in $||\sig{A}||$,
  where $\sig{A}$ is given with its underlying
  nice tree cover
  $\sig T$ associated with $r, \ell, g$. 
\end{proposition}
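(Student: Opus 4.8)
The plan is to follow Frick's route via \name{Gaifman locality} combined with dynamic programming over the tree cover, reducing the global count to bounded-tree-width counting on the individual pieces. First I would invoke Gaifman's locality theorem (in the form stated in~\cite{flum2006parameterized}): the formula $\varphi(\bar x)$ defining the counting problem $\Pi$ is equivalent, over all structures, to a Boolean combination of (i) \name{$r$-local formulas} around the free variables $\bar x$ and (ii) \name{basic local sentences} asserting the existence of $m$ elements that are pairwise at distance $>2r$ and whose $r$-neighborhoods each satisfy a fixed local formula $\psi$, where $r$, $m$, and the list of local formulas depend only on $\varphi$. Counting $\set{\bar a : \sig A \models \varphi(\bar a)}$ then decomposes, by inclusion–exclusion over this Boolean combination, into a constant number of tasks of the shape: count the tuples $\bar a$ whose $r$-neighborhoods have connected union and that satisfy a fixed local formula $\chi(\bar x)$, and multiply by the number of "scattered witness" configurations demanded by the sentence part. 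A connected tuple of radius-$r$ neighborhoods lies inside a single $\order{r}$-ball, which is the key structural fact we exploit next.

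Next I would use the nice tree cover $\sig T$ associated with $(r,\ell,g)$: by definition every $r$-ball of $\sig A$ sits inside some piece $\sig A_t$ indexed by a node $t$ of $\sig T$, each piece has tree-width at most $g$, each element of $\sig A$ belongs to at most $\ell$ pieces, and $\sig T$ together with the tree decompositions of the pieces has size $\order{\|\sig A\|}$. Hence every connected tuple relevant above lives entirely inside one piece, so I can iterate over the pieces $\sig A_t$ and, using the bounded-tree-width model-checking and counting machinery of Arnborg--Lagergren--Seese~\cite{arnborg:seese:JOA:1991easy} (Courcelle-style automata / dynamic programming on the tree decomposition), count in time linear in $\|\sig A_t\|$ the tuples inside $\sig A_t$ satisfying $\chi$ relativized to $\sig A_t$ together with the locality side conditions. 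Since $\sum_t \|\sig A_t\| = \order{\ell\,\|\sig A\|} = \order{\|\sig A\|}$, the aggregate cost of all local counting is linear.

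To avoid counting the same tuple once per piece that contains it, I would fix for each tuple a canonical host — say the piece $\sig A_t$ with $t$ smallest in a fixed DFS order of $\sig T$ still containing the whole tuple — and add to the formula evaluated on $\sig A_t$ the constraint "$\bar a$ is not contained in any of the $\order{\ell}$ earlier pieces that could also host it." Because each element lies in at most $\ell$ pieces, only boundedly many such predecessor pieces are relevant, and the resulting condition is again a first-order (hence MSO) property evaluable within the same bounded-tree-width dynamic program, so deduplication is absorbed into the per-piece automaton rather than requiring any global sorting. Finally, the basic local sentences and the counts of scattered configurations are handled by a single bottom-up pass over $\sig T$ that propagates, node by node, the realizable local $\psi$-types with their multiplicities, using that distance-$>2r$ separation between candidate witnesses is decidable locally within the cover; these per-node tallies are then combined with the tuple counts of the previous step exactly as the Gaifman normal form prescribes.

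The hard part will be the bookkeeping that keeps the whole procedure \name{linear} rather than merely fixed-parameter polynomial: one must argue that the Gaifman radius $r$, the number and size of the local types, the overlap bound $\ell$, and the tree-width bound $g$ are all constants determined by $\varphi$ and the class $\sig C$, so that each Courcelle-style automaton has constant size and every element of $\sig A$ is touched only a constant number of times; and the cross-piece deduplication must be phrased so that it, too, rides inside the bounded-tree-width dynamic program on each piece instead of forcing a separate global pass. Correctness of the reduction rests on the precise statement of Gaifman's normal form and on the defining property of a nice tree cover that every bounded-radius ball is captured by a single piece.
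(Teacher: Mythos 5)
The paper itself gives no proof of this proposition: it is imported verbatim from Frick~\cite{frick:TCSyst:2004generalized} and used as a black box (the paper only draws consequences from it, e.g.\ in the remark after Theorem~\ref{thm:im:fo}), so there is no in-paper argument to compare against. Your outline retraces Frick's own route --- Gaifman normal form, a nice $(r,\ell,g)$-tree cover capturing every bounded-radius ball in a single bounded-tree-width piece, and linear-time counting on each piece via the Arnborg--Lagergren--Seese/Courcelle machinery, with the overlap bound $\ell$ keeping $\sum_t \|\sig A_t\|$ linear --- so as a reconstruction of the cited proof the strategy is the right one.

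Two points, however, are glossed over in a way that leaves a genuine gap if this were to stand as a proof. First, the radius of the cover: a connected union of the $r$-neighborhoods of an $m$-tuple only fits in a ball of radius about $mr$, so the cover must be computed for a radius depending on both the Gaifman radius and the number of free variables of $\phi$, not for $r$ itself; this is harmless (local tree-decomposability provides covers for every radius) but must be said, since the proposition fixes the cover's parameters in advance. Second, and more seriously, the part you dispatch with ``multiply by the number of scattered witness configurations'' and ``a single bottom-up pass propagating realizable local $\psi$-types with multiplicities'' is precisely the technically hard core of Frick's \emph{counting} theorem, as opposed to the decision version: the constraint that the clusters of a tuple (and the witnesses of a basic local sentence) be pairwise at distance $>2r$ is a global condition, not one evaluable piece-by-piece, and counting such scattered configurations without double counting requires Frick's specific inclusion--exclusion over distance/partition types together with a careful argument that the resulting bookkeeping stays linear in $\|\sig A\|$. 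Your cross-piece deduplication via a canonical host piece addresses the easy over-counting (one tuple appearing in several pieces) but not this combination step, so as written the sketch establishes the local counts and asserts, rather than proves, that they can be assembled into the global count in linear time.
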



For any $k\ge 0$, a \name{$k$-induced matching} is an induced matching $M \subseteq E$ with $|M| = k$.
From the above fact 
and Proposition~\ref{prop:fpt:fo:frick}, we show the next theorem. 

\begin{theorem}
  \label{thm:im:fo}
  For any class $\sig G$ of graphs of bounded degree, graphs of bounded tree-width, or planar graphs and any $k\ge 0$, the counting problem of $k$-induced matchings in an input graph $G$ in $\sig G$ can be solved in linear time in $||G||$.
\end{theorem}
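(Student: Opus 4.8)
The plan is to express the property ``$\set{\set{x_1,y_1},\dots,\set{x_k,y_k}}$ is a $k$-induced matching'' by a fixed first-order formula and then invoke Proposition~\ref{prop:fpt:fo:frick}. Since $k$ is a constant, I would take $2k$ free variables $x_1,y_1,\dots,x_k,y_k$ ranging over vertices, intended to name the endpoints of the $k$ edges, and write a formula $\varphi_k(x_1,y_1,\dots,x_k,y_k)$ over the vocabulary of graphs (one binary edge relation $E$) asserting: (i) the $2k$ variables are pairwise distinct; (ii) $E(x_i,y_i)$ for every $i$; and (iii) for all $i\ne j$, none of $E(x_i,x_j)$, $E(x_i,y_j)$, $E(y_i,x_j)$, $E(y_i,y_j)$ holds. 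By the distance characterization of induced matchings in Section~\ref{sec:prelim}, a tuple satisfies $\varphi_k$ exactly when $\set{\set{x_1,y_1},\dots,\set{x_k,y_k}}$ is a $k$-induced matching, presented as an ordered sequence of ordered edges; and $\varphi_k$ has size depending only on $k$, hence is of constant size for fixed $k$.

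Next I would account for the overcounting. Each $k$-induced matching $M$ arises from exactly $2^k\,k!$ satisfying tuples of $\varphi_k$: there are $k!$ orderings of the $k$ edges of $M$ and, independently, $2$ orderings of the two endpoints of each edge. Hence the number of $k$-induced matchings of $G$ equals $N(G)/(2^k\,k!)$, where $N(G)$ is the number of tuples satisfying $\varphi_k$ in $G$, and this division is an $O(1)$ post-processing step.

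Finally I would apply Frick's theorem. As recalled above, each class $\sig G$ among graphs of bounded degree, graphs of bounded tree-width, and planar graphs is locally tree-decomposable, and for each of these classes a nice tree cover of $G$ associated with the relevant parameters can be computed in time linear in $||G||$, so it may be folded into the preprocessing. Proposition~\ref{prop:fpt:fo:frick} then applies to the FO-definable counting problem ``count the tuples satisfying $\varphi_k$'', giving $N(G)$ in linear time in $||G||$; dividing by the constant $2^k\,k!$ yields the theorem.

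The main obstacle is conceptual bookkeeping rather than a genuine mathematical difficulty: one must check that ``counting problem definable in FO'' in Proposition~\ref{prop:fpt:fo:frick} is precisely the problem of counting satisfying tuples of a fixed FO formula, so that the constant-size $\varphi_k$ qualifies, and that the constant hidden in the linear-time bound is permitted to depend on $k$ and on the class parameters, which is intrinsic to the FPT formulation. A secondary point to verify is the linear-time computability of the underlying nice tree cover for each of the three graph classes, since Proposition~\ref{prop:fpt:fo:frick} takes it as given.
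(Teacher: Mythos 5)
Your proof is correct and takes essentially the same route as the paper: express membership of a tuple in a $k$-induced matching by a fixed FO formula, apply Frick's theorem (Proposition~\ref{prop:fpt:fo:frick} together with Lemma~\ref{lem:frick:ldclass}) to count satisfying tuples in linear time, and divide by a constant symmetry factor. The only difference is cosmetic: you use $2k$ vertex variables over the standard binary edge relation and divide by $2^k\,k!$, whereas the paper uses $k$ edge variables over an incidence structure (with the predicates $\textit{DISJ}$ and $\textit{SAFE}$) and divides by $k!$.
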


In the proof, we built a FO-formula $\phi$ describing that an edge subset is a $k$-induced matching. Hence, the counting problem of $k$-induced matchings belongs to FPT when parameterized with some constants determined by $k$, $\phi$, and $\sig G$.

\section{Conclusion}
\label{sec:conc}
In this paper, we presented an efficient algorithm for enumerating
all induced matchings in constant amortized time for $C_4$-free graphs.
Generalization of this result to other graph classes is an interesting
 future work.
Investigating the other class of induced subgraphs such as induced paths
 is also interesting.
We also have interests in the independent set enumeration in the square of line graphs\cite{Cameron:DAM:1989} and also counting problems of these structures.

\subsubsection*{Acknowledgements}
This research was supported by
Grant-in-Aid for Scientific Research(A)
Number 16H01743.

\bibliographystyle{abbrv}
\bibliography{kurita}

\newpage
\appendix
\section{Appendix (Proof of some lemmas)}
\label{app}

In this appendix, we show proofs of 
Lemma~\ref{lem:del2:mono}, 
Lemma~\ref{lem:c4free:dist}, 
Lemma~\ref{lem:G:X.0}, 
Lemma~\ref{lem:G:X.i}, and
Lemma~\ref{lem:Lconst}. 

\begin{lemma}
  \label{app:del2:mono}
  Let $X$ and $Y$ be any iterations.
  If $X \preceq Y$ 
  then $M(X) \subseteq M(Y)$ and $E(G(X)) \supseteq E(G(Y))$ hold. 
  (A proof of Lemma~\ref{lem:del2:mono}. )
\end{lemma}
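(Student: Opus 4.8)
The plan is to induct on the length of a downward path witnessing $X \preceq Y$. By the definition of $\preceq$ there is a downward path $\sig L = (X = X_0, X_1, \dots, X_k = Y)$ in $\sig T$, and since the relations $\subseteq$ and $\supseteq$ compose transitively, it suffices to establish the single-step case $k = 1$: whenever $Y$ is a child of $X$ we have $M(X) \subseteq M(Y)$ and $E(G(X)) \supseteq E(G(Y))$. The general claim then follows by chaining these two inclusions along $\sig L$, one hop at a time.

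For the single-step case I would read the two kinds of children directly off Algorithm~\ref{algo:cat:detail}. If $Y = X.0$ is the type-$0$ child, then \RecEIM is invoked on $(M(X), G(X) \setminus \set{v})$, where $v$ is the pivot on $X$; hence $M(Y) = M(X)$, so $M(X) \subseteq M(Y)$ holds trivially, and $E(G(Y)) = E(G(X)) \setminus \inset{e \in E(G(X))}{v \in e} \subseteq E(G(X))$. If $Y = X.i$ is a type-$1$ child for some $i \ge 1$, then \RecEIM is invoked on $(M(X) \cup \set{e_i}, (G(X) \setminus N[v]) \setminus Sect_{e_i}(2))$, so $M(Y) = M(X) \cup \set{e_i} \supseteq M(X)$ and $E(G(Y)) = E(G(X)) \setminus (D_v(0) \cup D_v(1) \cup Sect_{e_i}(2)) \subseteq E(G(X))$. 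In both cases the recursive call only deletes vertices or edges and never inserts any, which is precisely the monotonicity asserted.

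The one point requiring care — and the only place where I expect the proof to have real content — is the bookkeeping around the step ``Restore edges in $D_v(0) \cup D_v(1)$''. One must observe that this restoration is executed only after every recursive call made from $X$ has already returned, so it never alters the graph that is passed to any descendant of $X$ along $\sig L$. Consequently the deletions are genuinely cumulative down a chain, and the graph handed operationally to $X_j$ coincides with the semantic graph $G(X_j) = G[V \setminus (N(V(M(X_j))) \cup S(X_j))]$. Making this identification rigorous, e.g. by a short side induction showing that the current graph at the moment $X_j$ is called equals $G(X_j)$, is routine; once it is in hand, the two inclusions of the single-step case are immediate and the outer induction closes.
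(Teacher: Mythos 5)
Your proposal is correct and follows essentially the same route as the paper's own (very terse) proof, which simply observes that along the downward path \EnumIM never removes an edge from $M$ and never adds an edge to $E(G)$ --- i.e.\ exactly your single-step monotonicity chained along $\sig L$. Your extra bookkeeping about the ``Restore edges'' step and the identification of the operationally passed graph with $G(X_j)$ is a careful elaboration of what the paper dismisses as obvious, not a different argument.
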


\begin{proof}
  There is a downward path $\sig L$ from $X$ to $Y$ since $X \preceq Y$.  
  It is obvious that
  \EnumIM does not remove
  any edge in $M(X)$ from $M(X')$ in $X' \in \sig L$.
  On the other hands, 
  \EnumIM does not add any edge to $E(G(X))$ to $E(G(X'))$ in $X' \in \sig L$.  
\end{proof}

\begin{lemma}
  \label{app:c4free:dist}
  Let $G$ be an input graph, 
  $X$ be any iteration in the alglrithm \EnumIM in Algorithm~\ref{algo:cat:detail}, and $e$ be any edge in $D_v(0)$. 
  Then, $M(X) \cup \set{e}$ is an induced matching in $G$. 
  (A proof of Lemma~\ref{lem:c4free:dist}. )
\end{lemma}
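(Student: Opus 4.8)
The plan is to argue directly from the characterisation $G(X) = G[V\setminus (N(V(M(X))) \cup S(X))]$ of the graph handled at iteration $X$, carrying along an induction on $\size{M(X)}$ that simultaneously establishes that $M(X)$ itself is an induced matching in $G$. (This joint statement is exactly what is needed, since "$M(X)$ is an induced matching" has not yet been proved independently at this point — it is only later invoked when deriving Corollary~\ref{cor:induced}.) First I would record the trivial structural fact that $D_v(0) = D_v(0,1)$, because no edge has both endpoints at distance $0$ from $v$; hence an edge $e \in D_v(0)$ has the form $e = \set{v,u}$ with $u \in N_{G(X)}(v)$, so both endpoints of $e$ lie in $V(G(X)) = V\setminus (N(V(M(X))) \cup S(X))$.

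For the base case $\size{M(X)} = 0$ the claim is immediate: $\emptyset$ is an induced matching and any singleton $\set{e}$ is one as well. For the inductive step I would let $X'$ be the deepest ancestor of $X$ whose child on the root-to-$X$ path is a type-$1$ child; then $M(X) = M(X') \cup \set{e'}$ for some $e' \in D_{v'}(0)$, where $v'$ is the pivot on $X'$, and $\size{M(X')} < \size{M(X)}$. Applying the induction hypothesis to $X'$ (its extension property, with the edge $e'$) yields that $M(X)$ is an induced matching.

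The core step is then to show, given that $M(X)$ is an induced matching, that $dist_G(e, f) \ge 2$ for every $f \in M(X)$; together with the distance characterisation of induced matchings this proves that $M(X) \cup \set{e}$ is an induced matching in $G$, completing the induction. I would argue by contradiction: if $dist_G(e,f) \le 1$ for some $f = \set{a,b} \in M(X)$, then some endpoint $z$ of $e$ (so $z \in \set{v,u} \subseteq V(G(X))$) satisfies $z \in N_G[a] \cup N_G[b]$. Since $a$ and $b$ are matching partners in $M(X)$, both $N_G[a]$ and $N_G[b]$ are contained in $N(V(M(X)))$ (each of $a,b$ is a neighbour of the other, and every other vertex of these closed neighbourhoods is a neighbour of $a$ or $b$). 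Hence $z \in N(V(M(X)))$, contradicting $z \in V(G(X)) = V \setminus (N(V(M(X))) \cup S(X))$.

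The whole argument only exploits that $G(X)$ has been obtained by deleting precisely the vertices conflicting with $M(X)$ (and the pivots in $S(X)$), so $C_4$-freeness plays no role here. The one place that needs a little care is the bookkeeping of the induction on $\size{M(X)}$, which is what disentangles this lemma from the companion fact that every $M(X)$ is an induced matching; beyond that, each step is a one-line consequence of the definitions.
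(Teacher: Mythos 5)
Your proof is correct, but it takes a different route from the paper's. The paper argues by contradiction through the enumeration tree: assuming a conflicting edge $f \in M(X)$ exists, it uses the monotonicity property (Lemma~\ref{lem:del2:mono}) to locate the ancestor iteration $Y$ at which $f$ was added, and then splits into two cases --- $e$ adjacent to $f$, or $e$ joined to $f$ by some edge $g$ --- showing in each case that $e$ (or its conflict witness) would have been deleted in every descendant of $Y$, so $e$ could not survive in $G(X)$; this is an operational argument about which edges the algorithm removes, and it never needs an induction on $\size{M(X)}$ because it simply treats $M(X)$ as the current induced matching. You instead work declaratively from the closed-form characterisation $G(X) = G[V\setminus (N(V(M(X))) \cup S(X))]$ stated in the paper: both endpoints of $e \in D_v(0)$ lie in $V(G(X))$, while every vertex at distance at most $1$ from an edge of $M(X)$ lies in $N(V(M(X)))$ (using that matched partners are neighbours of each other), so $dist_G(e,f)\ge 2$ for all $f\in M(X)$; and you add an explicit induction on $\size{M(X)}$, via the deepest ancestor with a type-$1$ child on the path, to supply the companion fact that $M(X)$ itself is an induced matching. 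Your version buys a cleaner, case-free argument and makes explicit a piece of bookkeeping the paper leaves implicit (that every $M(X)$ is indeed an induced matching); the trade-off is that it leans on the vertex-set characterisation of $G(X)$, which the paper asserts but whose maintenance by the algorithm is really the content of Lemmas~\ref{lem:G:X.0} and~\ref{lem:G:X.i}, whereas the paper's proof needs only the monotonicity lemma and the local deletion behaviour. Both arguments are valid and, as you note, neither uses $C_4$-freeness.
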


\begin{proof}
  We proof by contradiction. 
  Assume that there is a conflicting edge $f \in M(X)$ with $e$.
  From Lemma~\ref{lem:del2:mono},
  there is an iteration $Y \preceq X$ such that $f$ is added to $M(Y)$.
  Since $f$ conflict with $e$,
  either (1) $e$ is adjacent to $f$ or
  (2) $e$ is not adjacent to $f$ and 
  there is an edge $g$ in $G(Y)$ that is adjacent to both  $e$ and $f$. 
  We consider case (1).
  By the definition of $G(Y)$,
  if \EnumIM adds $f$ to $M(Y)$ in $Y$,
  edges adjacent to $f$ are not in $G(W)$,
  for any descendant iteration $W$ of $Y$. 
  This contradicts the assumption. 
  Next, we consider case (2).
  If $g \in G(Y)$ holds,  then $dist_{G(Y)}(e, g) = 1$.
  This implies that for any descendant iteration $W$ of $Y$,
  $e \notin G(W)$.
  On the other hands, 
  if $g \notin G(Y)$, then $G(Y)$ is not induced subgraph since 
  $e, f \in G(Y)$ and $g \notin G(Y)$.
  This also contradicts the assumption.
  Hence the statement holds. 
\end{proof}

\begin{lemma}
    \label{app:G:X.0}
    Let $X$ be an iteration in the algorithm \EnumIM in Algorithm~\ref{algo:cat:detail}.
    Then, $G(X.0) = G(X) \setminus \set{v}$ holds. 
    (A proof of Lemma~\ref{lem:G:X.0}. )
\end{lemma}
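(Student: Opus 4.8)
The plan is to prove the identity by unwinding the closed-form description $G(X) = G[V\setminus (N(V(M(X))) \cup S(X))]$ on both sides and tracking how the matching $M$ and the pivot set $S$ change across the $0$-branch.

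First I would record what the algorithm does at the $0$-branch: by inspection of Algorithm~\ref{algo:cat:detail}, the iteration $X.0$ is produced by the call \RecEIM$(M, G(X)\setminus\set{v})$, so it adds no edge to the current matching, giving $M(X.0) = M(X)$; and since $v$ is the pivot chosen at $X$, it is appended to the set of pivots selected along the chain, giving $S(X.0) = S(X) \sqcup \set{v}$. This union is disjoint because the pivot is a (maximum-degree) vertex of $G(X)$, hence $v \notin S(X)$ and $v \notin N(V(M(X)))$; in particular $v \in V(G(X))$.

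Next I would substitute into the closed form. Writing $W = V \setminus (N(V(M(X))) \cup S(X))$, so that $G(X) = G[W]$ and $v \in W$, we obtain
\[
G(X.0) = G\bigl[V \setminus (N(V(M(X.0))) \cup S(X.0))\bigr] = G\bigl[V \setminus (N(V(M(X))) \cup S(X) \cup \set{v})\bigr] = G[W \setminus \set{v}].
\]
On the other hand, by the definition of vertex deletion on an induced subgraph, $G(X) \setminus \set{v} = G[W] \setminus \set{v} = G[W \setminus \set{v}]$. The two expressions are the induced subgraph of $G$ on the same vertex set, so they coincide, which is exactly the claimed equality $G(X.0) = G(X) \setminus \set{v}$.

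There is no serious obstacle here: the statement is essentially a bookkeeping exercise, and the only point that deserves care is confirming $v \in V(G(X))$ — guaranteed by the pivot being a maximum-degree vertex of $G(X)$ — so that both the deletion $G(X) \setminus \set{v}$ and the update $S(X.0) = S(X) \sqcup \set{v}$ are consistent with the closed form, and so that the vertex-deletion operation genuinely commutes with taking the induced subgraph.
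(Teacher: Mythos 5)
Your proof is correct and follows essentially the same route as the paper's: the paper's (much terser) argument also rests on the observations that $M(X.0) = M(X)$ and that the pivot $v$ joins the selected set, so the only change to $G(X)$ is the removal of $v$ and its incident edges. Your version simply makes the bookkeeping with the closed form $G(X) = G[V\setminus (N(V(M(X))) \cup S(X))]$ explicit, which is a fine (indeed more careful) rendering of the same idea.
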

\begin{proof}
    Since $M(X) = M(X.0)$, 
    there is no edge in $G(X.0)$ such that one of its endpoint is $v$. 
    Thus, the statement holds. 
\end{proof}

\begin{lemma}
  \label{app:G:X.i}
  Let $X$ be any iteration in the algorithm \EnumIM in Algorithm~\ref{algo:cat:detail} and $i$ be positive integer.
  Then, $G(X.i) = G(X) \setminus (D_v(0) \cup D_v(1) \cup Sect_{e_i}(2))$ holds. 
  (A proof of Lemma~\ref{lem:G:X.i}. )
\end{lemma}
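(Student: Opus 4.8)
The plan is to prove the statement as an equality of edge sets of $G(X)$ and $G(X.i)$ — which is what it asserts inside the algorithm's bookkeeping — by a pure distance computation carried out in $G(X)$; $C_4$-freeness is not needed here. First I would unwind the recursion: the $i$th child $X.i$ receives $M(X.i) = M(X) \cup \set{e_i}$ with $e_i = \set{v, u_i}$, while the pivot set is unchanged, $S(X.i) = S(X)$. By the characterization of $G(\cdot)$ as the graph obtained from $G$ by deleting all conflict edges (and edges incident to $S$), passing from $G(X)$ to $G(X.i)$ deletes exactly the edges of $G(X)$ that become conflict edges once $e_i$ is added to the matching, namely the edges $g \in E(G(X))$ with $dist_{G(X)}(g, e_i) \le 1$ — here distances to $e_i$ may be measured in $G(X)$ rather than $G$, since $G(X)$ is an induced subgraph of $G$ containing both $v$ and $u_i$, so an endpoint of $g$ equal or adjacent to an endpoint of $e_i$ in $G$ is already equal or adjacent to it in $G(X)$. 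Thus the task reduces to showing
\[
  \inset{g \in E(G(X))}{dist_{G(X)}(g, e_i) \le 1} \;=\; D_v(0) \cup D_v(1) \cup Sect_{e_i}(2).
\]
I would split the left-hand side according to which endpoint of $e_i = \set{v, u_i}$ witnesses distance at most $1$: case A, $g$ has an endpoint within distance $1$ of $v$; case B, $g$ has an endpoint within distance $1$ of $u_i$ but none within distance $1$ of $v$.

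For case A: if $g$ has an endpoint within distance $1$ of $v$, then by the triangle inequality its other endpoint is within distance $2$ of $v$. Classifying $g$ by the pair of distances of its two endpoints from $v$, the admissible possibilities are $(0,1)$, $(1,1)$, $(1,2)$ — the pair $(0,0)$ is impossible since $G$ has no loops, and $(0,2)$ is impossible since an endpoint equal to $v$ forces the other to be a neighbour — which are precisely the classes $D_v(0,1)$, $D_v(1,1)$, $D_v(1,2)$. Hence case A yields exactly $D_v(0,1) \sqcup D_v(1,1) \sqcup D_v(1,2) = D_v(0) \cup D_v(1)$.

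For case B: let $g$ have an endpoint $x$ with $dist_{G(X)}(x, u_i) \le 1$ and no endpoint within distance $1$ of $v$. Then both endpoints of $g$ are at distance $\ge 2$ from $v$; since $u_i$ is at distance $1$ from $v$, the triangle inequality gives $dist_{G(X)}(v,x) = 2$ and then forces the other endpoint of $g$ to lie at distance $2$ or $3$, so $g \in D_v(2,2) \sqcup D_v(2,3) = D_v(2)$ with $dist_{G(X)}(v,g) = 2$; as $g$ is incident to neither $v$ nor $u_i$ while $x$ is adjacent to $u_i$, also $dist_{G(X)}(e_i, g) = 1$, i.e.\ $g \in Sect_{e_i}(2)$. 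Conversely, if $g = \set{x,y} \in Sect_{e_i}(2)$ then $dist_{G(X)}(v,g) = 2$ forces both $x$ and $y$ at distance $\ge 2$ from $v$, so the edge realizing $dist_{G(X)}(e_i, g) = 1$ cannot be incident to $v$ and must join $u_i$ to an endpoint of $g$; that endpoint is then within distance $1$ of $u_i$ but, being at distance $2$ from $v$, not within distance $1$ of $v$, so $g$ falls in case B. Hence case B contributes exactly $Sect_{e_i}(2)$; combining the two cases (which are disjoint, as $Sect_{e_i}(2) \subseteq D_v(2)$ is disjoint from $D_v(0) \cup D_v(1)$) gives the displayed identity, and therefore the lemma, once one observes that deleting these edges from $G(X)$ leaves exactly $G(X.i)$ (the vertices of $N_{G(X)}[v]$ and the distance-$2$ neighbours of $u_i$ become isolated, which is immaterial).

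I expect the only genuinely delicate point to be the equivalence in case B: arguing that for $g \in Sect_{e_i}(2)$ the short connection to $e_i$ necessarily runs through $u_i$ and not through $v$ — this is exactly where $dist_{G(X)}(v,g) = 2$ is used — and that the resulting neighbour of $u_i$ is genuinely ``new'', i.e.\ not within distance $1$ of $v$, so that case B neither overlaps case A nor misses any edge. Everything else is unfolding the definitions of $D_v(\cdot,\cdot)$, $D_v(\cdot)$, and $Sect_{e_i}(2)$ together with repeated use of the triangle inequality, and $C_4$-freeness — needed elsewhere in the paper — plays no role in this lemma.
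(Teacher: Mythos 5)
Your proof is correct and follows essentially the same route as the paper's: both reduce the claim to showing that the edges deleted in passing from $G(X)$ to $G(X.i)$ are exactly the conflict edges of $e_i$, and then verify by a case analysis on endpoint distances that these are precisely $D_v(0) \cup D_v(1) \cup Sect_{e_i}(2)$; yours is simply more detailed (the paper's appendix proof is terse and even contains a sign/typo slip in its converse direction). Your side remark that $S(X.i) = S(X)$ is harmless even if one counts the pivot $v$ as selected, since $v$ is already excluded through $N(V(M(X.i)))$, so nothing in the argument changes.
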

\begin{proof}
  For any edge $f$ in $G$, 
  we show that the following cases are equivalent: 
  (1) $f$ is conflicting edge of $e$ and 
  (2) $f$ belongs to $D_v(0) \cup D_v(1) \cup Sect_{e_i}(2)$. 
  Let $e_i = \set{v, v_i}$ and $f = \set{u, w}$. 
  Without loss of generality, 
  we can assume that $dist_G(u, v) \le dist_G(w, v)$. 
  If $f \notin D_v(0) \cup D_v(1) \cup Sect_{e_i}(2)$, 
  then $1 < dist_G(u, v) \le dist_G(w, v)$ and
  $1 < dist_G(u, v_1) \le dist_G(w, v_1)$ hold.
  Hence, $f$ does not conflict with $e_i$. 
  On the other hands,
  if $f \notin D_v(0) \cup D_v(1) \cup Sect_{e_i}(2)$ 
  then $f$ obviously conflicts with $e_i$. 
  Hence, 
  the statement holds. 
\end{proof}

\begin{lemma}
\label{app:Lconst}
  Let $X$ be any iteration in $\sig T$.
  Then, we can find the pivot in constant time by using $\List{G(X)}$. 
  (A proof of Lemma~\ref{lem:Lconst}. )
\end{lemma}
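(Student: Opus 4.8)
The plan is to observe that the pivot on $X$ is, by definition, a vertex of maximum degree in $G(X)$, and that $\List{G(X)}$ already groups the vertices by degree into the buckets $L_0, \dots, L_{\Delta(G(X))}$. Hence, once the index $\Delta(G(X))$ of the highest nonempty bucket is known, the pivot is any vertex of $L_{\Delta(G(X))}$, which is nonempty by the definition of $\Delta(G(X))$ and from which a vertex can be extracted in constant time. The whole task therefore reduces to maintaining, throughout the run of \EnumIM, a pointer $p$ that at the entry of every iteration $X$ equals $\Delta(G(X))$; the initial value $p = \Delta(G)$ is obtained for free during the $\order{\size{V} + \size{E}}$ bucket-sort preprocessing.

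I would maintain $p$ as a side effect of the edge insertions and deletions that \EnumIM already performs when moving between iterations, arguing that each such operation updates $p$ in constant time. Consider first the deletion of a single edge $\set{u, w}$: the degrees of $u$ and $w$ each drop by exactly one, so $u$ moves from its bucket $L_i$ to $L_{i-1}$ and $w$ from $L_j$ to $L_{j-1}$, each move being constant time. The maximum degree can thus fall by at most one, and it falls precisely when $L_p$ becomes empty. The crucial point is that whenever $L_p$ empties, the vertices that just left it were deposited into $L_{p-1}$, so $L_{p-1}$ is guaranteed nonempty; it therefore suffices to test emptiness of $L_p$ after the move and, if empty, decrement $p$ by one, all in constant time. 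Symmetrically, when an edge of $D_v(0) \cup D_v(1)$ is restored at Line~\ref{step:cat:S2}, the two incident degrees each rise by one, so the new maximum is at most $p + 1$; resetting $p$ to the larger of its old value and the two new degrees is again constant time.

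With this bookkeeping the invariant ``$p = \Delta(G(X))$ at the entry of $X$'' is preserved along every branch of $\sig T$: descending to a child only deletes edges, weakly lowering the maximum as tracked by the decrement rule, while backtracking only restores edges, tracked by the increment rule. Consequently, at any iteration $X$ the algorithm reads $p$ and returns an arbitrary vertex of $L_p = L_{\Delta(G(X))}$, both in constant time, and this vertex is the desired pivot.

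I expect the main obstacle to be ruling out the scenario in which relocating the maximum forces a scan across several consecutive empty buckets, which would break the constant-time bound. This is exactly what the one-edge-at-a-time deletion discipline prevents: each deletion lowers a single vertex's degree by precisely one and drops it into the immediately lower bucket, so each pointer adjustment is a single step rather than a downward search. (The cost of carrying out the underlying edge deletions is charged to the amortized $\order{\size{\sig T}}$ edge-removal bound and is not part of the constant-time pivot lookup.)
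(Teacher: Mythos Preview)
Your argument is correct and follows the same idea as the paper's proof: both observe that the pivot is any element of the highest nonempty bucket $L_{i_*}$ in $\List{G(X)}$, from which a vertex is extracted in constant time. Your write-up is in fact more complete than the paper's, which simply asserts that $i_*$ is available and one ``extract[s] a vertex from the tail of $L_{i_*}$'' without saying how $i_*$ is located; your pointer-maintenance argument (decrement on deletion, max on restoration, with the one-step guarantee coming from single-edge updates) supplies exactly the justification the paper omits.
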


\begin{proof}
    It is obvious that  $\bigsqcup_{L_i \in \List{G(X)}} = V(G)$ holds.
    $i_* = \argmax_{i \in \set{0, \dots, \Delta(G(X))}}(L_i \neq \emptyset)$,
    that is, $L_{i_*}$ is the non empty list with the maximum index in $\List{G(X)}$. 
    Now, the pivot on $X$ is in $L_{i_*}$ since $\Delta(G(X)) = i_*$.
    Hence, \EnumIM can obtain $v$ in constant time 
    by extracting a vertex from the tail of $L_{i_*}$. 
\end{proof}


\begin{lemma}
  Let $X$ and $e$ be 
  a pair of an iteration on $\sig T$ and an edge in $D_v^\le(2)$ 
  satisfying condition (3.a), 
  and $Y$ be any iteration on $\sig L$ from $X$ to $C(X, e)$
  such that $Y$ satisfies $C(Y, e) = C(X, e)$.  
  Then, the number of such $Y$  is at most two.
    (A proof of Lemma~\ref{lem:c3a}. )
\end{lemma}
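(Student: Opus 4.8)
The plan is to first determine the exact shape of the downward path $\sig L$, then reduce the count to two easily controlled kinds of iterations, and finally eliminate all but two of them using the $C_4$-freeness of $G$. Throughout, recall that it suffices to treat the case in which every pivot has degree at least three.

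\emph{Shape of $\sig L$.} Since $(X,e)$ satisfies condition~(3.a) we have $M(C(X,e)) = M(X)\cup\{e\}$, so by Lemma~\ref{lem:del2:mono} the current matching is nondecreasing along $\sig L$ and increases by exactly one edge; hence $\sig L = (X = X_0, X_1,\dots,X_k, X_{k+1} = C(X,e))$ where $X_0\to\dots\to X_k$ are $0$-branches and the single $1$-branch $X_k\to X_{k+1}$ adds precisely $e$. In particular the pivot of $X_k$ is an endpoint of $e=\{x,y\}$, with $\{v,x\}\in E$ as in condition~(3.a). Since $y$ will turn out to be a pendant (see below), it cannot be a maximum-degree pivot, so the pivot of $X_k$ equals $x$; and none of $v = v_{X_0}, v_{X_1},\dots,v_{X_{k-1}}$ equals $x$ or $y$, as otherwise $e$ would be deleted before it could be added at $X_k$. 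Next, for \emph{any} $Y = X_i$ on $\sig L$ with $C(Y,e) = C(X,e)$, the matching of $C(Y,e)$ is $M(X)\cup\{e\}$, and since $M(X_i)=M(X)$ for all $i\le k$, an inspection of the five cases defining $C(\cdot,\cdot)$ shows that only cases~(1) and~(3.a) can occur: in cases~(2), (3.b) and~(4) the matching of $C(Y,e)$ either contains two edges outside $M(X)$ or contains a single new edge that is incident to $v_{X_i}$ and distinct from $e$, and such an edge cannot belong to $M(X_i)$ because every vertex covered by the current matching is absent from the current graph whereas $v_{X_i}$ is present. Thus it suffices to bound the number of case-(1) and case-(3.a) iterations on $\sig L$.

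\emph{$C_4$-free structure.} I would use the structural fact behind Lemma~\ref{lem:c4} and the proof of Lemma~\ref{lem:cat}: if $e=\{x,y\}$ is a $1$-$2$ edge of a vertex $w$ with near endpoint $x$ and $Sect_{\{w,x\}}(2)=\emptyset$ in the current graph, then $y$ is a pendant, and more generally every neighbour of $x$ at distance $2$ from $w$ is a pendant --- otherwise that neighbour together with $x$ and $w$ closes a $4$-cycle. Applied at $X_0$ this forces $y$ to be a pendant in $G(X_0)$, hence, its unique neighbour $x$ never being removed along the $0$-branches, in every $G(X_i)$ with $i\le k$. I also record that in a $C_4$-free graph any two vertices have at most one common neighbour.

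\emph{The count.} A case-(1) iteration on $\sig L$ has its pivot at an endpoint of $e$, hence at $x$ (not $y$, a pendant), hence it is $X_k$: exactly one. The iteration $X_0$ is case-(3.a) by hypothesis. The remaining, and main, task is to show that no $X_i$ with $0<i<k$ is case-(3.a). If it were, then $y$ being a pendant forces $v_{X_i}\sim x$ (the only way $e$ can be a $1$-$2$ edge of $v_{X_i}$), and since a neighbour of $x$ at distance $2$ from $v$ would be a pendant while the pivot $v_{X_i}$ has degree at least three, we get $v_{X_i}\sim v$; thus $v_{X_i}$ is a common neighbour of $v$ and $x$. By the uniqueness of common neighbours there is then at most one such $X_i$, and to exclude even one I would push the same constraints further: $v_{X_i}$ is deleted before $X_k$ whereas $x$ retains degree at least three in $G(X_k)$, so the neighbours of $x$ surviving into $G(X_k)$ lie among the at-most-one common neighbour of $x$ and $v_{X_i}$ together with the pendant neighbours of $x$; tracing these back to $G(X_0)$ --- where, again, a non-pendant neighbour of $x$ other than its unique common neighbour with $v$ would sit at distance $2$ from $v$ and hence be a pendant --- should contradict the degree bound forced on $x$. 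Once no intermediate case-(3.a) iteration survives, the qualifying iterations are exactly $X_0$ and $X_k$, and the bound is $2$. The delicate point, and where I expect the real work to lie, is precisely this exclusion of an intermediate case-(3.a) iteration: the emptiness of $Sect$ is cheap to satisfy once $y$ is a pendant, so the contradiction has to be squeezed out of the interplay of $C_4$-freeness, the maximum-degree choice of pivots, and the degree-$\ge 3$ hypothesis.
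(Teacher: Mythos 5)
Your setup is sound and, up to a point, it actually reproduces everything the paper needs: the shape of the path (all $0$-branches followed by a single $1$-branch) is the paper's own first step; the reduction of the count to case-(1) and case-(3.a) iterations is correct (cases (2), (3.b), (4) produce a different matching); the observation that (3.a) plus $C_4$-freeness forces $y$, and every neighbour of $x$ at distance $2$ from $v$, to be pendant is valid; and from it you correctly conclude that the pivot of any later (3.a)-iteration on the path must be a common neighbour of $v$ and $x$, of which a $C_4$-free graph has at most one. Together with $X$ itself this bounds the number of iterations \emph{satisfying condition (3.a)} by two, which is precisely what the paper's proof establishes (it argues instead via Lemma~\ref{lem:one} and the maximum-degree pivot rule, but the content is the same) and precisely what is consumed in Lemma~\ref{lem:corresponded_rec}.

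The genuine gap is the last step, where you try to exclude even the single intermediate (3.a)-iteration so that the two counted iterations become $X_0$ and the case-(1) iteration $X_k$. First, the premise you lean on --- that $x$ retains degree at least three in $G(X_k)$ --- is unjustified: the ``pivot degree $\ge 3$'' convention is a blanket assumption about the iterations being analysed, not a fact about $X_k$, and nothing prevents $x$ from ending up with small degree there. Worse, the configuration you are trying to rule out genuinely occurs, so no amount of pushing will close this step. Take $V=\{v,w,x,y,y_2,y_3,a_1,\dots,a_6,c_1,\dots,c_4\}$ with edges $\{v,x\}$, $\{v,w\}$, $\{v,a_i\}$ for all $i$, $\{w,x\}$, $\{w,c_j\}$ for all $j$, and $\{x,y\},\{x,y_2\},\{x,y_3\}$. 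Any two vertices have at most one common neighbour, so the graph is $C_4$-free. With $e=\{x,y\}$: the root has pivot $v$ (degree $8$) and $(X_0,e)$ satisfies (3.a); its $0$-child has pivot $w$ (degree $5$) and again satisfies (3.a) with the same corresponding iteration; the next $0$-child has pivot $x$ (degree $3$) and is a case-(1) iteration, again with the same corresponding iteration. So under your literal reading (counting every $Y$ with $C(Y,e)=C(X,e)$, including the case-(1) one) the count is three, and the statement in that reading is simply not provable. The lemma has to be read as the paper's own proof reads it --- a bound of two on the iterations satisfying condition (3.a) --- and for that your common-neighbour argument already suffices; the attempted strengthening should be dropped.
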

\begin{proof}
  We first show that
  $\sig L$ consists of $0$-branches except the end of $\sig L$. 
  By the definition of $C(X, e)$,
  the end of $\sig L$ is a $1$-branch since $C(X, e)$ is the top of a chain. 
  Next, $\sig L$ includes exactly one $1$-branch 
  since $M(C(X, e)) = M(X) \cup \set{e}$. 
  Hence,  $\sig L$ consists of only $0$-branch other than the end of $\sig L$. 
  
  By using the above observation, 
  we proof by contradiction. 
  Suppose that there exists three distinct iterations $Y_1$, $Y_2$, and $Y_3$
  such that they satisfy condition (3.a) and $C(X, e) = C(Y_1, e) = C(Y_2, e) = C(Y_3, e)$.
  Thus, 
  for any $i = 1, 2, 3$,
  $e$ is a $1$-$2$ edge of $v_i$ that is the pivot of $Y_i$.
  Let $e = \set{x, y}$ and $f_i$ be an edge such that $f_i$ shares the end point
  with $e$ and $f_i$ is adjacent to $v_i$.
  Without loss of generality, 
  we can assume $f_1 = \set{x, v_1}$ and $Y_1 \preceq Y_2 $, $Y_1 \preceq Y_3$. 
  For $j = 2, 3$, 
  if $G(Y_j)$ has the edge $f_j = \set{y, v_j}$, 
  then this contradict with $Sect_{f_i}(2) = \emptyset$.
  Hence, $v_j$ is a neighbor of $x$.
  In $Y_1$, the number of $1$-$1$ edges adjacent to $f_1$ is
  at most one from Lemma~\ref{lem:one}. 
  Hence, at least one of $f_2$ and $f_3$ is a $1$-$2$ edge.
  Without loss of generality, 
  we can assume that $f_2$ is a $1$-$2$ edge. 
  Since $Sect_{f_1}(2) = \emptyset$,
  $D_{G(Y_1)}(v_3) = 1$.
  In addition, 
  in $Y_3$, $d_{G(Y_3)}(x) \ge 2$ 
  since $x$ is adjacent to $y$ and $v_3$. 
  However, this contradicts with $d_{G(Y_3)}(v_3) \ge d_{G(Y_3)}(x)$. 
  Hence, the statement holds. 
\end{proof}

\begin{lemma}
  Let $X$ be any iteration in $\sig T$.
  Then, the number of pairs an iteration $Y$ and an edge $e$ satisfying 
  $C(Y, e) = X$ is at most six.
      (A proof of Lemma~\ref{lem:corresponded_rec}. )
\end{lemma}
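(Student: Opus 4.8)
The plan is to fix the target iteration $X$ and bound, clause by clause of the definition of the corresponding iteration, the number of pairs $(Y,e)$ with $C(Y,e)=X$, aiming at the split $1+1+2+1+1=6$. Throughout I would use the following facts. Since $X=C(Y,e)$ is a chain top, $X$ is a type-$1$ child of some parent $P$ (a root iteration can be the target of no pair), so there is an edge $e_X$, incident to the pivot of $P$ and lying in $E(G(P))$, added on the branch $P\to X$. On the downward path $\sig L=(Y=Z_0,\dots,Z_k=X)$ each $1$-branch adds exactly one edge to the current matching and each $0$-branch from $Z_j$ deletes the pivot of $Z_j$, so that vertex is absent from $G(Z_{j+1}),\dots,G(X)$; and, because $X$ is a chain top, the last branch $Z_{k-1}\to Z_k$ is a $1$-branch, whence the edge it adds is $e_X$. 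Finally the five clauses are mutually exclusive, because the type of $e$ relative to the pivot $v$ on $Y$ (a $0$-$1$, $1$-$1$, $1$-$2$, or $2$-$*$ edge of $v$) is determined by the pair.

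First I would treat the three ``one new edge'' clauses (1), (2), (3.a), where $\size{M(X)\setminus M(Y)}=1$; then by the facts above $\sig L$ consists of $0$-branches followed by the single terminal $1$-branch, whose label is $e_X$. In clause (1) the new edge is $e$ itself, so $e=e_X$; since $e$ is a $0$-$1$ edge of $v$ it is incident to $v$, and since any earlier $0$-branch would have deleted $v$ while $e_X\in E(G(P))$, we must have $Y=P$, giving one pair. In clause (2) the new edge is $f=\set{v,x}$ with $e=\set{x,y}$ a $1$-$1$ edge of $v$; again $f=e_X$ forces $Y=P$ and fixes $x$, and then $e$ is a $1$-$1$ edge adjacent to the $0$-$1$ edge $f$, of which Lemma~\ref{lem:one} allows at most one, giving at most one pair. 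In clause (3.a) the argument in the proof of Lemma~\ref{lem:c3a} shows that $\sig L$ is again $0$-branches plus a terminal $1$-branch adding $e=e_X$, so every (3.a)-source of $X$ uses this same edge, and Lemma~\ref{lem:c3a} bounds the number of such sources by two.

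Next I would treat the two ``two new edges'' clauses (3.b), (4), where $\size{M(X)\setminus M(Y)}=2$, so $\sig L$ carries exactly two $1$-branches: the terminal one, adding $e_X$, and one earlier one. In both clauses one of the two new edges is incident to the pivot $v$ on $Y$ (namely $f'$ in (3.b), $h$ in (4)) while the other lies at distance $2$ from $v$; since a $0$-branch out of $Y$ would delete $v$, the $v$-incident edge can only be added on the \emph{first} branch out of $Y$, which is therefore the earlier $1$-branch. This pins $Y$ down (ascend from $X$ through its terminal $1$-branch, then through the maximal run of $0$-branches, to the first $1$-child reached, whose parent is $Y$) and identifies the $v$-incident new edge as the label of that first branch, the other new edge being $e_X$. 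In clause (4) the remaining datum $e$ is the $2$-$*$ edge, forced to equal $e_X$, giving at most one pair. In clause (3.b) the datum $e=\set{x,y}$ satisfies $f=\set{v,x}\in D_v(0)$ with the fixed edge $e_X$ lying in $Sect_f(2)$, so $x$ must be a common neighbour in $G(Y)$ of $v$ and an endpoint of $e_X$; $C_4$-freeness (the same pigeonhole as in Lemma~\ref{lem:c4}, since a second such $x$ would yield a $4$-cycle through $v$) together with $f'\neq f$ leaves at most one admissible $x$, hence at most one pair. Summing the five bounds gives $1+1+2+1+1=6$.

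The step I expect to be the main obstacle is the two-new-edge analysis of clauses (3.b) and (4): there the matching grows by two edges, the two $1$-branches on $\sig L$ may be separated by $0$-branches, and one must track carefully which vertices the intervening $0$-branches have removed in order to conclude that the $v$-incident new edge can appear only on the very first branch out of $Y$, and then that $C_4$-freeness pins down the residual choice (the edge $e$, and the vertex $x$ in clause (3.b)). It is precisely the absence of $4$-cycles, via Lemma~\ref{lem:c4} and Lemma~\ref{lem:one}, that keeps each of these clauses down to a single pair and makes the total bound $6$ tight.
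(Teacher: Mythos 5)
Your proposal follows essentially the same route as the paper's proof: you localize the possible sources $Y$ to the parent $X'$ of $X$ (clauses (1) and (2)), to the iteration $X''$ sitting above the chain containing $X'$ (clauses (3.b) and (4)), and to the chain interior covered by Lemma~\ref{lem:c3a} (clause (3.a)), which is exactly the paper's decomposition, only with your clause-by-clause split $1+1+2+1+1$ refining the paper's positional split $2+2+2$. Your write-up is in places more detailed than the paper's (the use of Lemma~\ref{lem:one} in clause (2), the argument that the $v$-incident new edge must be added on the very first branch out of $Y$), and the one loose step you have in clause (3.b) --- the distance-$2$ endpoint $y$ of $e$ is not pinned down by the target, so ``at most one admissible $x$'' does not by itself yield ``at most one pair'' --- is equally unaddressed in the paper's own one-line assertion for $Y=X''$ and does not affect the constant bound.
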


\begin{proof}
  Let $\sig L$ be the path from the root iteration $I$ on $\sig T$ to $X$ and
  $X'$ be the parent iteration of $X$. 
  Without loss of generality, 
  we can assume that $X$ is the top of a chain  
  by the definition of $C(Y, e)$. 
  Let $X''$ be the parent of the top of a chain including $X'$ and
  $\sig L'$ be the path from $X''$ to $X'$. 
  By the definition of $C(Y, e)$, 
  $Y$ satisfying $C(Y, e) = X$ exists only on $\sig L'$. 
  If $Y$ is $X'$, then the number of edges $e'$ satisfying 
  $C(Y, e') = X$ is at most two by conditions (1) and (2). 
  If $Y$ is $X''$, then the number of edges $e'$ satisfying 
  $C(Y, e') = X$ is at most two by conditions(3.b) and (4). 
  If $Y$ is not $X'$ and $X''$, 
  then the number of $Y$ satisfying $C(Y, e) = X$ is at most two from Lemma~\ref{lem:c3a}. 
  Hence, the statement holds. 
\end{proof}

\section{Appendix (The descriptive parameterized complexity of counting problems)}
\label{app:fo}


\newtheorem{fact}{Fact}

In this section, we will give the proof of Theorem~\ref{thm:im:fo} in Sec.~\ref{secmso}. See literatures~\cite{flum2006parameterized} 
There are some generic fixed-parameter tractability results for counting problems formulated in terms of descriptive complexity theory%
~\cite{flum2006parameterized}.
In 1991, Arnborg and Seese 
showed that the counting problems definable in monadic second-order logic (MSO) can be solved in linear time when the input structures have bounded tree-width~\cite{flum2006parameterized}. By extending these results, Frick~\cite{frick:TCSyst:2004generalized} showed that the counting problems definable in first-order logic (FO) can be solved in linear time on locally tree-decomposable classes of structures. 

Formally, their result is summarized as follows.
A \name{parameterized counting problem} is formulated as a problem of, given an instance $I$ and a parameterization $k\ge 0$, a parameterization of $I$, returning a number $F(I) \ge 0$. 
We say that a \name{counting problem} parameterized with $k$ belongs to \name{FPT} if there exist some computable function $f$ and constant $c > 0$ such that the problem can be solved in $O(f(k)\,n^c)$ time, where $n$ is the input size~\cite{flum2006parameterized}.

We introduce the language of first-order logic (FO) as follows~\cite{flum2006parameterized}. 
Let $\tau$ be the vocabulary of graphs.
For a graph $G = (V, E)$, we assume that the corresponding FO-structure $\sig A = (A, \dots)$ includes the domain $A := V\cup E$ for vertices and edges, and the incident relation $\Incd \subseteq E\times V\times V$ defined later.%
\footnote{
We can easily see that the structure $\sig A = (V\cup E, \Incd)$ is polynomially related to the standard graph structure $\sig A' = (V, E)$, and can be computed from $\sig A'$ in linear time. }
We also assume that the vocabulary $\tau$ for graphs includes the ternary relation $\Incd$ such that $\Incd(e, x, y)$ $\iff$ an edge $e$ has end points $x$ and~$y$. 
The set $FO[\tau]$ of \name{first-order formulas} is built up from
countably many sorted first-order variables $x, y, x_1, \dots$ for vertices and $e, f, e_1, \dots$ for edges, 
the predicate symbols $\Incd$ for incidencet relation, $=$ for equality, and $P_1, P_2, \dots \in \tau$, 
connectives $\lor, \land, \neg, \to$, and
quantifiers $\exists x, \forall x$ ranging over indivisuals.
We assume the standard semantics of FO-formulas~\cite{frick:TCSyst:2004generalized}. 

A \name{decision problem} $\Pi$ is \name{definable in FO} on vocabulary $\tau$%
for a class $\sig C$ of structures
\footnote{
This definition is equivalent to that a $\Pi$ is definable in FO on $\tau$ if there exists a FO-formula $\phi \in FO[\tau]$ such that for all structures  $\sig A$, the answer of $\Pi$ is yes on $G$ if and only if $\sig A \models \phi$.
}
if and only if there exists a FO-formula $\phi(\bar x) \in FO[\tau]$ with some $m$-vector of free variables $\bar x = (x_1,\dots,x_m)$ such that for all structure $\sig A \in \sig C$, the answer of $\Pi$ is yes on $\sig A$ $\iff$ there exists some $m$-tuple $\bar a = (a_1,\dots,a_m) \in A^m$ such that $\sig A \models \phi(\bar a)$ holds. Similarly, we introduce the FO-definability of counting problems as follows. 

\begin{definition}[\cite{frick:TCSyst:2004generalized,flum2006parameterized}]
A \name{counting problem} $\Pi$ is \name{definable in FO} if
there exists a FO-formula $\phi(\bar x) \in FO[\tau]$ with some $\bar x = (x_1,\dots,x_m)$ such that the answer (the number of solutions in $\Pi$) is given by the number of $\bar a \in A^m$ such that $\sig A \models \phi(\bar a)$.
\end{definition}




Extending the notion of tree-decomposition~\cite{flum2006parameterized}, Frick~\cite{frick:TCSyst:2004generalized} gave a sufficient condition, called locally tree-decomposability, so that a counting problems definable in the first-order logic (FO) belongs to FPT.
Let $\tau$ be a vocabulary and $\sig A$ a $\tau$-structure with domain $A$.
The \name{Gaifman graph} of $\sig A$ is the graph $\sig G(\sig A) = (A, F)$, where
the vertex set is $A$, and for every $a, b \in A$, $(a, b) \in F$ $\iff$ $a$ and $b$ appear in the same tuple of some relation in $\sig A$.
We denote by $d^{\sig A}(a, b)$ the distance between $a$ and $b$ in $\sig G(\sig A)$.
For any $r \ge 1$ and element $a \in A$, we define the \name{$r$-neighborhood} of $a$ by the set $N^{\sig A}_r(a) := \inset{ b \in A }{ d^{\sig A}(a, b) \le r }$.
For any set $X \subseteq A$, we define $N^{\sig A}_r(X) := \bigcup_{a \in X} N^{\sig A}_r(a)$.
  Let $\sig A$ be a structure with domain $A$. 
  Let $r, \ell \ge 1$ be any integers and $g : N \to N$ be any function.
  A \name{nice $(r, \ell, g)$-tree cover of a structure $\sig A$} is a class $\sig T$ of subsets of $A$ such that:
  \begin{enumerate}
  \item For any $a \in A$, there exists some $U \in \sig T$ such that
    $N^{\sig A}_r(a) \subseteq U$. 
    
  \item For any $U \in \sig T$, there are less than $\ell$ sets $V \in \sig T$ such that $U\cap V \not= \emptyset$. 
    
  \item For all $U_1, \dots, U_q \in \sig T$ and $q\ge 1$, 
    the tree-width of the induced structure $\langle U_1\cup\dots\cup U_q\rangle^{\sig A}$ is $g(q)$ or less. 
  \end{enumerate}

\begin{definition}[Frick~\cite{frick:TCSyst:2004generalized}]
  A class $\sig C$ of structures is \name{locally tree-decomposable} if there is a linear time algorithm that, given structure $\sig A \in \sig C$ and $r \ge 1$, computes a nice $(r, \ell, g)$-cover of $\sig A$
  for suitable $\ell = \ell(r)$ and $g = g_r$ depending only on $r$.
\end{definition}

In the above definition, the parameter $r$ corresponds to the radius of predicates in Gaifman's Theorem on the locality of FO-formulas~\cite{flum2006parameterized}. 

\begin{lemma}[Frick~\cite{frick:TCSyst:2004generalized}]
  \label{lem:frick:ldclass}
The classes of graphs of bounded degree, of bounded tree-width, and planar graphs are locally tree-decomposable. 
\end{lemma}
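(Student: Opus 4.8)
The plan is to verify the definition of local tree-decomposability separately for each of the three classes, in each case exhibiting a linear-time construction of a nice $(r,\ell,g)$-tree cover whose parameters $\ell = \ell(r)$ and $g = g_r$ depend only on $r$. Throughout I work with the graph $G$ itself: the Gaifman graph of the incidence structure $\sig A$ is a subdivision of $G$, and subdivision preserves bounded degree, tree-width, and planarity up to constant factors, so a suitable cover of $G$ lifts to one of $\sig A$. The two easy classes are almost immediate. For graphs of maximum degree $d$ I would take $\sig T = \inset{N^{\sig A}_r(a)}{a \in A}$, the family of all $r$-balls. Condition~(1) holds by construction; since $|N^{\sig A}_r(a)| \le 1 + d + \dots + d^r =: c(r,d)$ is a constant, any union of $q$ balls has at most $q\,c(r,d)$ elements and hence tree-width below $g(q) := q\,c(r,d)$, giving~(3); and $N^{\sig A}_r(a)$ meets $N^{\sig A}_r(b)$ only if $d^{\sig A}(a,b) \le 2r$, bounding the number of meeting sets by $\ell := c(2r,d)$, giving~(2). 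All balls are computed by bounded-depth breadth-first search in linear time. For graphs of tree-width at most $w$ the trivial cover $\sig T = \set{A}$ suffices: (1) and (2) hold at once (with $\ell = 2$), and (3) holds with $g(q) := w$ because every union equals $A$.

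The substantive case is that of planar graphs, where neither the degree nor the global tree-width is bounded, so I would build the cover from breadth-first layers. Fix a root, let $L_0, L_1, \dots$ be the distance layers, and recall the structural fact that in a planar graph any block of $k$ consecutive layers induces a subgraph of tree-width $O(k)$: contracting $L_0\cup\dots\cup L_{i-1}$ to a single vertex preserves planarity and leaves a spanning breadth-first tree of depth $O(k)$, and a planar graph with a spanning tree of depth $h$ has tree-width $O(h)$. I would then let the cover consist of slabs $U_j$, each the union of a window of roughly $3r$ consecutive layers, with consecutive windows shifted by $r$ so they overlap. Since the ball $N^{\sig A}_r(a)$ around a vertex in layer $i$ is contained in layers $[i-r,\,i+r]$ --- a span of only $2r+1$ layers --- every ball sits inside some slab, establishing~(1); and each slab overlaps only a constant number of its neighbors, establishing~(2) with $\ell = O(1)$.

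The delicate point, and the main obstacle, is condition~(3) for the planar cover: a union of $q$ arbitrary slabs need not form one block of consecutive layers, and a priori its induced tree-width could scale with the whole graph. Here I would exploit the breadth-first property that no edge joins two non-consecutive layers, so deleting an entire layer separates the vertices below it from those above. Writing the layers covered by $U_{j_1}\cup\dots\cup U_{j_q}$ as maximal runs of overlapping slabs separated by genuine gaps, the induced subgraph decomposes as a disjoint union over these runs; since the tree-width of a disjoint union is the maximum over its parts and each run of at most $q$ slabs spans $O(qr)$ consecutive layers, the structural fact caps the tree-width at $g(q) := O(qr)$, a function of $q$ alone once $r$ is fixed. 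Finally, the breadth-first search, the layer partition, and the slab assembly are all linear-time, so the cover is computed within the required bound and the three classes are locally tree-decomposable. The crux of the whole argument is thus turning the planar layering theorem, which controls \emph{consecutive} blocks, into a bound on \emph{scattered} unions of slabs, which the layer-separation observation delivers.
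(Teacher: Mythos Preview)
The paper does not actually prove this lemma: it is stated with attribution to Frick and invoked as a black box, so there is no in-paper argument to compare your attempt against. Your sketch is correct and is essentially the standard argument from Frick's original paper (which in turn rests on Eppstein's diameter--tree-width property for planar and minor-closed classes together with Baker's BFS layering). The bounded-degree and bounded-tree-width cases are handled exactly as one would expect, and your treatment of the planar case---slabs of $O(r)$ consecutive BFS layers, the $O(k)$ tree-width bound for $k$ consecutive layers via contraction of the inner layers, and the layer-gap separation to control unions of scattered slabs---is the right mechanism.

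Two small points you could tighten, neither of which threatens the conclusion. First, for disconnected inputs you should run BFS from one root per component (or add an auxiliary universal root), since ``fix a root'' presumes connectedness. Second, your phrase ``maximal runs of overlapping slabs separated by genuine gaps'' should really be ``maximal runs with no missing layer between consecutive slabs'': two slabs that abut without overlapping can still be joined by edges between their boundary layers, so they belong to the same run. Either way each run spans $O(qr)$ layers and the bound $g(q) = O(qr)$ goes through unchanged.
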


In the followings, we denote by $||G|| = m + n$ the size of an input graph $G$.


\begin{proposition}[Frick~\cite{frick:TCSyst:2004generalized}]
  \label{prop:fpt:fo:frick}
  Let $\sig C$ be any class of locally tree-decomposable structures.
  For any structure $\sig{A} \in \sig C$, a counting problem $\Pi$ definable in FO can be solved in linear time in $||\sig{A}||$,
  where $\sig{A}$ is given with its underlying nice $(r, \ell, g)$-tree cover $\sig T$ associated with $r, \ell, g$. 
\end{proposition}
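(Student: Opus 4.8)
\medskip
\noindent\textbf{Proof proposal.}
This is Frick's theorem~\cite{frick:TCSyst:2004generalized}, used here as a black box; here is the line of argument one would follow for a self-contained proof. The aim is to reduce the counting of tuples $\bar a$ with $\sig A \models \phi(\bar a)$ to a constant number of \emph{local} subtasks, each solvable in linear time on a piece of bounded tree width, and then to glue the partial answers along the given cover. First I would put $\phi$ into Gaifman normal form: every FO-formula $\phi(\bar x)$ is equivalent to a Boolean combination of $t$-local formulas --- whose quantifiers only reach the $t$-neighbourhood of a single designated element --- and basic local sentences, each asserting the existence of a fixed number of pairwise far-apart elements of prescribed local types, where $t$ and the arities depend only on $\phi$. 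Taking $r$ to be the largest radius that occurs, the hypothesis (or, for a locally tree-decomposable class, a linear-time preprocessing step) supplies a nice $(r,\ell,g)$-tree cover $\sig T$, so that $N^{\sig A}_r(a)$ lies inside some cover set for every element $a$.

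Next I would carry out the local evaluation. For each $U \in \sig T$ the induced substructure $\langle U\rangle^{\sig A}$ has tree width at most the constant $g(1)$, so by the Arnborg--Seese / Courcelle machinery one can, in time linear in $\|U\|$, decide every $r$-local formula at every element of $U$ and even count its satisfying tuples. Since the cover has overlap less than $\ell$, each element lies in $O(1)$ cover sets and $\sum_{U\in\sig T}\|U\| = O(\|\sig A\|)$; hence the local type (the list of $r$-local formulas it satisfies) of every element is computed correctly --- the relevant quantifiers never leave $N^{\sig A}_r(a)\subseteq U$ --- in total linear time.

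The remaining step is the assembly. Once every element carries its local type, the value of each Gaifman-normal-form disjunct is a Boolean combination of the truth values of its basic local sentences (each equivalent to: the number of suitably scattered tuples realising a given type pattern is nonzero) together with conditions on the local types and mutual distances within the free tuple $\bar a$; the desired count then follows by inclusion--exclusion over the finitely many disjuncts, each summand being a fixed-arity counting quantity expressible in MSO. To evaluate these in linear time I would build, out of the nice tree cover, one tree decomposition of the whole structure of constant width --- here property~(3) of a nice tree cover is used, since only a constant number of cover pieces ever interact and the tree width of such a union is bounded by $g$ of that constant --- and run a Courcelle / Arnborg--Seese-style counting dynamic program over it, with the precomputed local types attached to the bags.

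The step I expect to be the main obstacle is this last assembly: one must verify that threading the ``scattered existence'' quantifiers of the Gaifman normal form through a single bounded-width tree decomposition keeps both the width and the dynamic-programming state bounded, and that the program can simultaneously carry the local-type information and perform counting rather than mere decision. The first two steps rest on the classical Gaifman locality theorem and the linear-time (counting) evaluation of MSO on bounded tree-width structures, which I would invoke without proof.
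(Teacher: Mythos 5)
The paper offers no proof of this proposition at all: it is imported verbatim from Frick~\cite{frick:TCSyst:2004generalized} (both in Sec.~\ref{secmso} and in Appendix~\ref{app:fo} it is only restated and cited, with the running-time remark referring to ``the proof of Proposition~\ref{prop:fpt:fo:frick} in~\cite{frick:TCSyst:2004generalized}''). So treating it as a black box, as you do in your first sentence, is exactly what the authors do, and to that extent your proposal is consistent with the paper.

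Your optional self-contained sketch, however, contains a concrete flaw in the assembly step. You propose to ``build, out of the nice tree cover, one tree decomposition of the whole structure of constant width,'' justifying this by property~(3) of a nice $(r,\ell,g)$-tree cover. Property~(3) only bounds the tree-width of the union of a \emph{fixed} number $q$ of cover sets by $g(q)$; the whole structure is the union of unboundedly many cover sets, and locally tree-decomposable classes such as planar graphs (e.g.\ large grids) or bounded-degree graphs have unbounded tree-width, so no global bounded-width decomposition exists and the final Courcelle/Arnborg--Seese dynamic program over it cannot be run. The overlap bound $\ell$ controls how many cover sets meet a given one, not how many interact globally. Frick's actual argument avoids any global decomposition: after Gaifman normal form and the linear-time local computations (which you describe correctly), the counting of tuples is combined purely combinatorially --- the free tuple is split according to which components are close to each other, each ``close'' group lies inside an $r$-neighbourhood and hence inside a single cover set of bounded tree-width where it is counted by the Arnborg--Seese machinery, and the groups and the scatteredness/basic-local-sentence conditions are reassembled by inclusion--exclusion over these distance patterns, with the tree cover used only to localize each group. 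The step you yourself flag as the main obstacle is thus not merely delicate but, as formulated, the wrong mechanism; the rest of your outline (Gaifman locality, per-piece evaluation in total linear time using the overlap bound) matches the standard proof.
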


From the proof of Proposition~\ref{prop:fpt:fo:frick} in~\cite{frick:TCSyst:2004generalized}, we can see that the running time of the algorithm to solve the the counting problem is $O(f(k) n^c)$ for some function $f$ and constant $c$, where the parameter $k$ is determined by on $r, \ell, g$, the nice $(r, \ell, g)$-tree cover $\sig T$, and the size $||\phi||$ of the formula $\phi$.
Note that $f$ is even not elementary in general~\cite{frick:TCSyst:2004generalized}.

For any $k\ge 0$, a \name{$k$-induced matching} in a graph $G = (V, E)$ is an induced matching $M \subseteq E$ with $|M| = k$.
Now, we show the main results of this section. 

\begin{theorem}
  For any class $\sig G$ of graphs of bounded degree, graphs of bounded tree-width, or planar graphs and any $k\ge 0$, the counting problem of $k$-induced matchings in an input graph $G$ in $\sig G$ can be solved in linear time in $||G||$.
    (A proof of Theorem~\ref{thm:im:fo}. )
\end{theorem}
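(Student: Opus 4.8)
The plan is to derive Theorem~\ref{thm:im:fo} from Frick's theorem (Proposition~\ref{prop:fpt:fo:frick}) by describing $k$-induced matchings with a first-order formula. Concretely, for each fixed $k \ge 0$ I would construct a formula $\phi_k(e_1,\dots,e_k) \in FO[\tau]$ with $k$ free edge variables such that, writing $\sig A$ for the incidence structure associated with $G$, a tuple $(e_1,\dots,e_k) \in A^k$ satisfies $\sig A \models \phi_k(e_1,\dots,e_k)$ if and only if $\set{e_1,\dots,e_k}$ is a $k$-induced matching of $G$. Since by Lemma~\ref{lem:frick:ldclass} every class $\sig G$ in the statement is locally tree-decomposable, Proposition~\ref{prop:fpt:fo:frick} then counts the tuples satisfying $\phi_k$ in time linear in $||\sig A|| = \Theta(||G||)$, and a single division by $k!$ finishes the proof.

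To build $\phi_k$ I would first introduce the subformula $\mathit{edge}(e) := \exists x\,\exists y\,\Incd(e,x,y)$, which forces $e$ to range over edges of $\sig A$, and a subformula $\mathit{near}(e,e')$ expressing that $dist_G(e,e') \le 1$, namely that $e$ and $e'$ share an endpoint or are joined by some edge. The latter is first-order expressible by quantifying over the endpoints of $e$ and $e'$ and over a possible connecting edge $f$ together with its two endpoints, and testing the incidences with $\Incd$ (using that $\Incd$ is symmetric in its last two arguments). I would then set
\[
  \phi_k(e_1,\dots,e_k) \;:=\; \bigwedge_{i=1}^{k}\mathit{edge}(e_i)\;\wedge\;\bigwedge_{1\le i<j\le k}\bigl(e_i \neq e_j \;\wedge\; \neg\,\mathit{near}(e_i,e_j)\bigr).
\]
By the distance characterization of induced matchings recalled in Section~\ref{sec:prelim}, $\phi_k$ holds exactly when its arguments are pairwise distinct edges with pairwise distance at least two, i.e.\ exactly when they form a $k$-induced matching; and because $k$ is a constant, $\phi_k$ has constant length and is a bona fide $FO[\tau]$-formula.

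It remains to reconcile counting with FO-definability and to read off the running time. The descriptive framework returns the number of \emph{ordered} tuples $\bar a$ with $\sig A \models \phi_k(\bar a)$; since $\phi_k$ is never satisfied by a tuple with a repeated coordinate and each $k$-edge induced matching yields exactly $k!$ ordered tuples, the number of $k$-induced matchings equals this count divided by $k!$. Hence the counting problem of $k$-induced matchings is definable in FO in the sense of the preceding definition, up to the constant factor $1/k!$. Applying Proposition~\ref{prop:fpt:fo:frick}: local tree-decomposability gives a nice tree cover of $\sig A$ in linear time, $\sig A$ is computable from $G$ in linear time and has size $\Theta(||G||)$, and the satisfying tuples of the fixed formula $\phi_k$ are counted in time linear in $||\sig A||$; dividing by $k!$ at the end costs $O(1)$.

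The main obstacle is not algorithmic — that part is entirely inherited from Frick's theorem — but lies in the FO-encoding: one must faithfully capture the \emph{induced} condition (absence of an edge connecting the two matched edges, not merely absence of a shared endpoint) within a single first-order formula over the incidence vocabulary, and one must correctly account for the gap between the ordered-tuple semantics of FO-definable counting and the intended count of unordered edge sets via the $1/k!$ correction.
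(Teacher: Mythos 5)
Your proposal is correct and follows essentially the same route as the paper: you FO-define $k$-induced matchings over the incidence structure via pairwise distinctness plus negation of an ``adjacent or joined by an edge'' predicate (the paper's \textit{DISJ}, \textit{CONN}, \textit{SAFE}), invoke Frick's result on locally tree-decomposable classes, and correct the ordered-tuple count by dividing by $k!$. No substantive differences to report.
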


\begin{proof}
  To prove the theorem, it is sufficient to give a FO-formula $\phi(e_1,\dots,e_k)$ such that
  $G \models \phi(e_1,\dots,e_k)$ $\iff$ the set $M = \set{e_1,\dots,e_k}$ is an induced matching in $G$.
  We give auxiliary predicates as follows.
First, the predicate 
\begin{eqnarray*}
  \textit{DISJ}(e_1,\dots,e_k)
  &\equiv&
  \bigwedge_{1\le i < j \le k}
    \neg (e_i = e_j)
\end{eqnarray*}
states that edges $e_1,\dots,e_k$ are mutually distinct.
Secondly, the predicate 
\begin{eqnarray*}
  \textit{CONN}(e, f, g)
  &\equiv&
  \exists x_1, \exists x_2, \exists y_1, \exists y_2, \exists z_1, \exists z_2\; \\
  &&\left\{\begin{array}{l}
  \Incd(e, x_1, x_2) \land \Incd(f, y_1, y_2) \land \\
  \left[\begin{array}{l}
    (\Incd(g, z_1, z_2)  
    \land\; \bigvee_{i, j \in\set{1,2}} (x_i = z_1) \land (y_j = z_2) \; )
    \\
    \lor (x_1 = y_1)
    \lor (x_2 = y_2)
  \end{array}\right]
  \end{array}\right\}.
\end{eqnarray*}
states that given edges $e$ and $f$ are connected by another given edge $g$.
Using $\textit{CONN}$, we then define the predicate 
\begin{eqnarray*}
  \textit{SAFE}(e_1,\dots,e_k)
  &\equiv&
  \bigwedge_{1\le i < j \le k}
  \neg 
  \left\{\begin{array}{c}
  \exists g\, \textit{CONN}(e_i, e_j, g)
  \end{array}\right\}
\end{eqnarray*}
which states that there is no edge connecting any pair of distinct edges in $\set{e_1,\dots,e_k}$. Combining these predicates, we build the desired predicate
\begin{eqnarray*}
  \phi(e_1,\dots,e_k)
  &\equiv&
  \textit{DISJ}(e_1,\dots,e_k)
  \land
  \textit{SAFE}(e_1,\dots,e_k)
\end{eqnarray*}
Hence, it follows from~Lemma~\ref{lem:frick:ldclass} and Proposition~\ref{prop:fpt:fo:frick} that for any $G = (V, E)$, we can compute the number $N_k$ of $k$-tuples $\bar e = (e_1,\dots,e_k) \in E^k$ with $G \models \phi(\bar e)$ in linear time in $||G|| = m + n$. 
Since there are exactly $k!$ permutations of $\set{e_1,\dots,e_k}$, we can obtain the number of $k$-induced matchings in $G$ as $N_k$ devided by $k!$. 
\end{proof}

From the proof of the above theorem, we see that the counting problem of $k$-induced matchings belongs to FPT when parameterized with $k$ and some constants determined by the nice local tree decomposition for an input class $\sig G$.
It is not hard to see that the formula $\phi$ in the proof can be written in the form $\phi(\bar e) \equiv \forall \bar x\,\psi(\bar e, \bar x)$ for a quantifier-free FO-formula $\psi$. Thus, it is in FO-$\Pi_1$.

\end{document}